\newtheorem{theorem}{{\bf Theorem}}
\newtheorem{lemma}{{\bf Lemma}}
\begin{document}

\renewcommand{\baselinestretch}{1.2}
\markboth{\hfill{\footnotesize\rm LI WANG}\hfill} {\hfill
{\footnotesize\rm Semiparametric Survey Sampling} \hfill}
\renewcommand{\thefootnote}{}
$\ $\par \fontsize{10.95}{14pt plus.8pt minus .6pt}\selectfont
\vspace{0.8pc} \centerline{\Large\bf Single-Index Model-Assisted
Estimation In Survey Sampling}
\vspace{.4cm} \centerline{Li Wang \footnote{\emph{Address for
correspondence}: Li Wang, Department of Statistics, University of
Georgia, Athens, Georgia, 30602, USA. E-mail: lilywang@uga.edu }}
\vspace{.4cm} \centerline{\it University of Georgia}
\vspace{.55cm} \fontsize{9}{11.5pt plus.8pt minus .6pt}\selectfont

\begin{quotation}
\noindent \textit{Abstract:} A model-assisted semiparametric
method of estimating finite population totals is investigated to
improve the precision of survey estimators by incorporating
multivariate auxiliary information. The proposed superpopulation
model is a single-index model which has proven to be a simple and
efficient semiparametric tool in multivariate regression. A class
of estimators based on polynomial spline regression is proposed.
These estimators are robust against deviation from single-index
models. Under standard design conditions, the proposed estimators
are asymptotically design-unbiased, consistent and asymptotically
normal. An iterative optimization routine is provided that is
sufficiently fast for users to analyze large and complex survey
data within seconds. The proposed method has been applied to
simulated datasets and MU281 dataset, which have provided strong
evidence that corroborates with the asymptotic theory.

\vspace{9pt} \noindent \textit{Key words and phrases:}
Horvitz-Thompson estimator; model-assisted estimation;
semiparametric; spline smoothing; superpopulation.
\end{quotation}

\fontsize{10.95}{14pt plus.8pt minus .6pt}\selectfont

\thispagestyle{empty}

\section{Introduction}
\renewcommand{\thetheorem}{{\sc \arabic{theorem}}} \renewcommand{%
\theproposition}{{\sc \thesection.\arabic{proposition}}} \renewcommand{%
\thelemma}{{\sc \thesection.\arabic{lemma}}} \renewcommand{\thecorollary}{{%
\sc \thesection.\arabic{corollary}}} \renewcommand{\theequation}{%
\thesection.\arabic{equation}} \renewcommand{\thesubsection}{\thesection.%
\arabic{subsection}} \setcounter{equation}{0} \setcounter{lemma}{0} %
\setcounter{proposition}{0} \setcounter{theorem}{0} %
\setcounter{subsection}{0}\setcounter{corollary}{0}

\label{SEC:introduction}

In this article, the classic finite-population estimation problem
is investigated. In what follows, let $U_{N}=\left\{
1,...,i,...,N\right\} $ denote the $N$ units of finite population.
For each $i\in U_{N}$, let $y_{i}$
be a generic characteristic and the objective is to estimate $%
t_{y}=\sum_{i\in U_{N}}y_{i}$. A probability sample $s$ is drawn
from $U_{N}$
according to a fixed sampling design $p_{N}\left( \cdot \right) $, where $%
p_{N}\left( s\right) $ is the probability of drawing the sample $s$. Let $%
\pi _{iN}\equiv \pi _{i}=\Pr \left\{ i\in s\right\} =\sum_{s\ni
i}p_{N}\left( s\right) $ denote the inclusion probability for
element $i\in U_{N}$ and $\pi _{ijN}\equiv \pi _{ij}=\Pr \left\{
i,j\in s\right\} =\sum_{s \ni i,j}p_{N}\left( s\right) $ denote
the inclusion probability for element $i,j\in U_{N}$.

If no information other than the inclusion probabilities is used
to estimate $t_{y}$, a well-known design unbiased estimator is the
Horvitz-Thompson estimator
\begin{equation}
\hat{t}_{y\pi }\equiv \hat{t}_{y}=\sum_{i\in s}\frac{y_{i}}{\pi
_{i}}. \label{DEF:HT}
\end{equation}
The variance of the Horvitz-Thompson estimator under the sampling
design is
\[
\text{Var}_{p}\left( \hat{t}_{y}\right) =\sum_{i,j\in U_{N}}\left(
\pi _{ij}-\pi _{i}\pi _{j}\right) \frac{y_{i}}{\pi
_{i}}\frac{y_{j}}{\pi _{j}}.
\]

The efficiency of the Horvitz-Thompson estimator can be
significantly improved by incorporating some ``cheap'' auxiliary
information at the population level in addition to sample data.
Such auxiliary information is often available for all elements of
the population of interest in many surveys. For instance, in many
countries, administrative registers provide extensive sources of
auxiliary information. Complete registers can give access to
variables such as sex, age, income and country of birth. Studies
of labor force characteristics or household expenditure patterns,
for example, might benefit from these auxiliary data. Another
example is the satellite images or GPS data used in spatial
sampling. These data are often collected at the population level,
which are often available at little or no extra cost, especially
compared to the cost of collecting the survey data. For more
examples of auxiliary information, see
\citep{REF:C,REF:CDW,REF:D,REF:DH}. Use of auxiliary information
to improve the accuracy of survey estimators actually dates back
to post-stratification, calibration, ratio and regression
estimation; see \citep{REF:CDH,REF:CS,REF:SSW,REF:T} for a general
review of these methods. Auxiliary information can also be used to
increase the accuracy of the finite population distribution
function, for example, \citep{REF:WD}.

In this article, let $\mathbf{x}_{i}=\left\{
x_{i1},...,x_{id}\right\} $ be a $d$-dimensional auxiliary
variable vector, $i\in U_{N}$, and assume that $\left\{ \left( \mathbf{x}%
_{i},y_{i}\right) \right\} _{i\in U_{N}}$ is a realization of
$\left( \mathbf{X},Y\right) $ from an infinite superpopulation,
$\xi $, satisfying
\begin{equation}
Y=m\left( \mathbf{X}\right) +\sigma \left( \mathbf{X}\right)
\varepsilon , \label{model}
\end{equation}
in which the $d$-variate function $m$ is the unknown mean function
of $Y$ conditional on the auxiliary information vector $\mathbf{X}
$, often is assumed to be smooth; $\sigma$ is the unknown standard
deviation function. The standard error satisfies that $E_{\xi
}\left( \varepsilon \left\vert \mathbf{X}\right. \right) =0$ and
$E_{\xi }\left( \varepsilon ^{2}\left\vert \mathbf{X}\right.
\right) =1$, where $E_{\xi }$ is the expectation with respect to
the population $\xi $. The interesting problem is how to take
advantage of the regression relationship (\ref{model}) to better estimate $%
t_{y}$.

The traditional parametric approach to analyze a regression
relationship assumes that the superpopulation model is fully
described by a finite set of parameters, for example, the linear
regression estimator discussed in \citep{REF:SSW}. However, it
sometimes requires prohibitively complex models with a very large
number of parameters to address various hypotheses. It is very
difficult to obtain any prior model information about the
regression function $m$ in (\ref{model}), and substantial
estimation bias can result if a preselected parametric model is
too restricted to fit unexpected features. As an alternative one
can try to estimate the unknown regression relationships
nonparametrically without reference to a specific form. The
flexibility of nonparametric smoothing/regression is extremely
helpful in exploratory data analysis as well as in obtaining
robust predictions, see \citep{REF:FG,REF:HW} for details.

Nonparametric methods for survey data are rather sparse and have
begun to emerge as important and practical tools, see
\citep{REF:BO,REF:BCO,REF:MR,REF:OBMK,REF:ZL03,REF:ZL05}.
Reference \citep{REF:BO} first proposed a nonparametric
model-assisted estimator based on local polynomial regression,
which generalized the parametric framework in survey sampling and
improved the precision of the survey estimators. Their
investigation is restricted to the scalar case, i.e., $d=1$.
Nowadays most surveys involve more than one auxiliary variable
(reference \citep{REF:SL}). For example, the auxiliary information
obtained from remote sensing data, satellite images and GPS data
provide a wide and growing range of variables to be employed in
spatial sampling. Northeastern lakes survey discussed in
\citep{REF:BOJR,REF:ES} is a good example of this. In that study,
a lot of information, such as longitude, latitude, and elevation,
of every lake in the population is known for the Environmental
Monitoring and Assessment Program (EMAP) of the U.S. Environmental
Protection Agency (EPA). In addition, the growing possibilities of
information and communication technology have made it possible to
develop very large and complex surveys. In this article, a
$d$-dimensional auxiliary vector is considered to improve the
efficiency of estimating $t_{y}$ for both small and large surveys.

Research in nonparametric survey theory and methodology when the
dimension of the auxiliary information vector is high, however, is
quite challenging. A key difficulty is due to the issue of ``curse
of dimensionality'': the optimal rate of convergence decreases
with dimensionality (\citep{REF:Stone}). One solution is
regression in the form of additive model popularized by
\citep{REF:HT}; see \citep{REF:BCO,REF:BOJR,REF:OBMK} for possible
application of additive model to survey sampling. A weakness of
the purely additive model is that interactions between the
explanatory variables are completely ignored (\citep{REF:STY}). An
attractive alternative to additive model is the single-index model
given in (\ref{model:SIM}). Similar to the first step of
projection pursuit regression, single-index model reduces
dimensionality but does not incorporate interactions; see
\citep{REF:CFGW,REF:HP,REF:HHI,REF:HH,REF:XTLZ} for instance. The
basic appeal of single-index model is that it is in nature a
hybrid method of parametric and nonparametric regression. It
preserves the simplicity of parametric regression where simplicity
is sufficient: the $d$-variate function $m(\mathbf{x})=m\left(
x_{1},...,x_{d}\right) $ is expressed as a
univariate function of $\mathbf{x}^{T}\mathbf{\theta }_{0}=%
\sum_{q=1}^{d}x_{q}\theta _{0,q}$; it also employs the flexibility
of nonparametric regression where flexibility is necessary.

In this article, I investigate the single-index model-assisted
estimator for the finite population total, that is, the
superpopulation model in (\ref{model}) is assumed to be a SIM.
Under standard design conditions, a design-consistent estimator of
$\mathbf{\theta }_{0}$ has been obtained using polynomial splines,
and the proposed estimator of $t_{y}$ is asymptotically
design-unbiased, consistent and asymptotically normal. By taking
advantage of the spline smoothing and iterative optimization
routines, the proposed method is particularly computationally
efficient comparing to the kernel additive model approaches in the
literature of nonparametric survey estimation, in which iterative
approaches such as a backfitting algorithm
(\citep{REF:BOJR,REF:HT}) or marginal integration (\citep{REF:LN})
are necessary. The rest of the article is organized as follows.
Section \ref {SEC:estimator} gives details of the model
specification and the proposed method of estimation. Section
\ref{SEC:results} describes some nice properties of the estimator.
Section \ref{SEC:algorithm} provides the actual procedure to
implement the method. Section \ref{SEC:empirical results} reports
the empirical results. All technical proofs are contained in
Appendices A and B.

\section{Superpopulation Model and Proposed Estimator}

\label{SEC:estimator}

\renewcommand{\thetheorem}{{\sc \arabic{theorem}}} \renewcommand{%
\theproposition}{{\sc \thesection.\arabic{proposition}}} \renewcommand{%
\thelemma}{{\sc \thesection.\arabic{lemma}}} \renewcommand{\thecorollary}{{%
\sc \thesection.\arabic{corollary}}} \renewcommand{\theequation}{%
\thesection.\arabic{equation}} \renewcommand{\thesubsection}{\thesection.%
\arabic{subsection}} \setcounter{equation}{0} \setcounter{lemma}{0} %
\setcounter{proposition}{0} \setcounter{theorem}{0} %
\setcounter{subsection}{0}\setcounter{corollary}{0}

\subsection{Single-index superpopulation model} \label{SUBSEC:superpopulation}

In this article, the proposed superpopulation model $\xi$ in
(\ref{model}) is a single-index model (SIM), where
\begin{equation}
Y=m\left(\mathbf{X}^{T}\mathbf{\theta }_{0}\right)+\sigma \left(
\mathbf{X}\right)\varepsilon, \label{model:SIM}
\end{equation}
where the unknown parameter $\mathbf{\theta }_{0}$ is called the
single-index coefficient, used for simple interpretation once
estimated; function $m$ is an unknown smooth function used for
further data summary.

If the SIM is misspecified, however, a goodness-of-fit test is
necessary and the estimation of $\mathbf{\theta }_{0}$ must be
rethought; see \citep{REF:XLTZ}. So in this article, instead of
presuming that the underlying true function $m$ is a single-index
function like the one defined in (\ref{model:SIM}), the
single-index is identified by the best approximation to the
multivariate function $m$. Specifically, a univariate function $g$
is estimated that optimally approximates the multivariate function
$m$ in the sense of
\begin{equation}
g\left( \nu \right) =E_{\xi }\left[ \left. m\left(
\mathbf{X}\right) \right| \mathbf{X}^{T}\mathbf{\theta }_{0}=\nu
\right]. \label{DEF:g}
\end{equation}
The superiority of this method is that it works very well even
under model misspecification so that it is much more useful in
applications than the traditional SIMs given in (\ref{model:SIM}).

For the superpopulation model defined by (\ref{model}) and
(\ref{DEF:g}), let
\[
m_{\mathbf{\theta }} \left( \mathbf{X}^{T}\mathbf{\theta }\right)
=E_{\xi }\left[ Y|\mathbf{X}^{T}\mathbf{\theta }\right] =E_{\xi
}\left[ \left. m\left( \mathbf{X} \right) \right\vert
\mathbf{X}^{T}\mathbf{\theta }\right]
\]
for any fixed $\mathbf{\theta }$, where as noted in the introduction, $%
E_{\xi }$ denotes the expected value with respect to the
population $\xi $ in (\ref{model}) and (\ref{DEF:g}). Define the
risk function of $\mathbf{\theta }$ as
\begin{equation}
R\left( \mathbf{\theta }\right) =E_{\xi }\left[ \left\{ Y-m_{\mathbf{\theta }%
}\left( \mathbf{X}^{T}\mathbf{\theta }\right) \right\} ^{2}\right]
=E_{\xi }\left\{
m\left( \mathbf{X}\right) -m_{\mathbf{\theta }}\left( \mathbf{X}^{T}\mathbf{%
\theta }\right) \right\} ^{2}+E_{\xi }\sigma ^{2}\left(
\mathbf{X}\right) , \label{DEF:Rtheta}
\end{equation}
which is uniquely minimized at $\mathbf{\theta }_{0}\in
S_{+}^{d-1}=\left\{ \left( \theta _{1},...,\theta _{d}\right)
|\sum_{q=1}^{d}\theta _{q}^{2}=1,\theta _{d}>0\right\} $.

\noindent \textbf{Remark 2.1:} It is obvious that without
constraints, the coefficient vector $\mathbf{\theta }_{0}$ is
identified only up to a constant factor. Typically, one requires
that $\left\Vert \mathbf{\theta }_{0}\right\Vert =1$ which entails
that at least one of the coordinates $\theta _{0,1},...,\theta
_{0,d}$ is nonzero. One could assume without loss of generality
that $\theta _{0,d}>0$, and the candidate $\mathbf{\theta }_{0}$
would then belong to the upper unit hemisphere $S_{+}^{d-1}$.

\subsection{Spline smoothing}

Estimation of both $\mathbf{\theta }_{0}$ and $g(\cdot)$ in model
(\ref{DEF:g}) requires a degree of statistical smoothing. In this
article, all estimation is carried out via polynomial splines. The
use of polynomial spline smoothing in the generalized
nonparametric models can be back to \citep{REF:Stone}. As pointed
out in \citep{REF:BCO,REF:WY}, one of the important advantages of
spline smoothing is the relative ease with which spline estimators
can be simply computed, even for large datasets or datasets with
regions of sparse data. In addition, spline smoothing is a global
smoothing method. After the spline basis is chosen, the
coefficients can be estimated by an efficient optimization
procedure. In contrast, kernel based methods such as the kernel
based backfitting (\citep{REF:BOJR,REF:HT}) and marginal
integration approaches (\citep{REF:LN}), in which the maximizing
has to be conducted repeatedly at every local data points, are
very time-consuming.

To introduce the function space of splines of order $p$, one
pre-selects an integer $N^{1/6}\ll J=J_{N}\ll N^{1/5}\left( \log
N\right)
^{-2/5}$, see Assumption (A4) below, and divides $\left[ 0,1\right] $ into $%
\left( J+1\right) $ subintervals, $\left[ k_{j},k_{j+1}\right) $, $%
j=0,...,J-1 $, $\left[ k_{J},1\right] $, where $\left\{
k_{j}\right\} _{j=1}^{J}$ is a sequence of equally-spaced points,
called interior knots, given as
\[
k_{1-p}=...=k_{-1}=k_{0}=0<k_{1}<...<k_{J}<1=k_{J+1}=...=k_{J+p},
\]
in which $k_{j}=j/(J+1)$, $j=0,1,...,J+1$. The $j$-th B-spline of
order $p$ denoted by $B_{j,p}$ is recursively defined by
\citep{REF:de}. In the following, let $\Phi ^{\left( 2\right)
}=\Phi ^{\left( 2\right) }\left[ 0,1\right] $ be the space of all
the second order smoothness functions that are polynomials of
degree $3$ on each subinterval.

Direct calculation shows that under Assumption (A1) in Section
\ref{SUBSEC:assumptions}, for any $\mathbf{\theta }\in
S_{+}^{d-1}$, the variable $\mathbf{X}^{T}\mathbf{\theta }$ has a
Lebesgue probability density function (pdf) that is uniformly
bounded below and above by the pdf of a rescaled centered
$\text{Beta}\left\{ \left( d+1\right) /2,\left( d+1\right)
/2\right\} $,
\[
f_{d}\left( \nu \right) =\frac{\Gamma \left( d+1\right) }{\Gamma
\left\{ \left( d+1\right) /2\right\} ^{2}2^{d}a}\left(
1-v^{2}/a^{2}\right) ^{\left( d-1\right) /2}I_{\left[ -a,a\right]
}\left( v\right),
\]
which vanishes at boundary points $-a$ and $a$. This makes
nonparametric smoothing of $Y$ on $\mathbf{X}^{T}\mathbf{\theta }$
difficult. I therefore first transform the variable
$\mathbf{X}^{T}\mathbf{\theta }$ by using the
cumulative distribution function $F_{d}$ of $f_{d}$%
\begin{equation}
F_{d}\left( \nu \right) =\int_{-1}^{\nu /a}\frac{\Gamma \left( d+1\right) }{%
\Gamma \left\{ \left( d+1\right) /2\right\} ^{2}2^{d}}\left(
1-t^{2}\right) ^{\left( d-1\right) /2}dt,\nu \in \left[
-a,a\right] .  \label{DEF:Fd}
\end{equation}

For the rest of the article, denote the transformed variable of
the single-index variable $X^{T}{\mathbf{\theta }}$ by
$Z_{\mathbf{\theta }}$ and let $\varphi _{\mathbf{\theta }}$ be
the conditional expectation of $m$ given the transformed variable
$Z_{\mathbf{\theta }}$, i.e.
\begin{equation}
\varphi _{\mathbf{\theta }}\left( Z_{\mathbf{\theta }}\right)
=E_{\xi}\left\{ m\left( \mathbf{X}\right) |Z_{\mathbf{\theta
}}\right\} =E_{\xi}\left\{ m\left( \mathbf{X}\right)
|X^{T}{\mathbf{\theta }}\right\} =m_{\mathbf{\theta }}\left(
X^{T}{\mathbf{\theta }}\right) , \label{EQ:phitheta-mtheta}
\end{equation}

\noindent \textbf{Remark 2.2:} The transformed variable,
$Z_{\mathbf{\theta }}$, has a quasi-uniform $[0,1]$ distribution,
i.e., the pdf of the transformed variable is supported on $\left[
0,1\right] $ with positive lower bound. In practice, the radius
$a$ can take the value of the $100(1-\alpha)$ percentile of
$\left\{ \Vert \mathbf{x}_{i}\Vert \right\} _{i\in U_{N}}$, for
example, $\alpha=0.05$.

\subsection{``Oracle'' population-based estimator} \label{SUBSEC:population-based}

If the entire realization were known
by ``oracle'', one can create an ``oracle'' estimator to estimate $\mathbf{%
\theta }_{0}$ and $g$ in (\ref{DEF:g}) through a profile
least-squares method. One first estimates the single-index
coefficient $\mathbf{\theta }_{0}$ by a consistent estimator
$\tilde{\mathbf{\theta }}$ via minimizing the empirical version of
the risk function $R\left( \mathbf{\theta }\right)$ defined in
(\ref{DEF:Rtheta}), i.e.
\begin{equation}
\tilde{\mathbf{\theta }}=\arg \min_{\mathbf{\theta }\in S_{+}^{d-1}}\tilde{R}%
\left( \mathbf{\theta }\right) .  \label{DEF:thetatilde}
\end{equation}
where
\begin{equation}
\tilde{R}\left( \mathbf{\theta }\right) =N^{-1}\sum_{i\in
U_{N}}\left\{ y_{i}-\tilde{\varphi}_{\mathbf{\theta }}\left(
z_{\mathbf{\theta }i}\right) \right\} ^{2},  \label{DEF:Rtilde}
\end{equation}
and
\begin{equation}
\tilde{\varphi}_{\mathbf{\theta }}\left( \cdot \right) =\arg
\min_{\varphi \left( \cdot \right) \in \Phi ^{\left( 2\right)
}\left[ 0,1\right] }\sum_{i\in U_{N}}\left\{ y_{i}-\varphi \left(
z_{\mathbf{\theta }i}\right) \right\} ^{2}.
\label{DEF:mthetatilde}
\end{equation}
Then the link function $g$ can be estimated by $\tilde{g}$, a
cubic spline
smoother of $\left\{y_{i}\right\}$ on $\left\{z_{\tilde{\mathbf{\theta }}%
i}\right\}$, i.e., $\tilde{g}\left( \nu \right) =\tilde{\varphi}_{\mathbf{%
\tilde{\theta}}}\left( F_{d}\left( \nu \right) \right) $, where
$F_{d}\left( \cdot \right) $ is defined in (\ref{DEF:Fd}). Thus
the best single-index approximation to $m(\mathbf{x})$ is
$\tilde{m}(\mathbf{x})=\tilde{g}\left(
\mathbf{x}^{T}\mathbf{\tilde{\theta}}\right)=\tilde{\varphi}_{\mathbf{\tilde{%
\theta}}}\left( z_{\mathbf{\tilde{\theta}}}\right)$.

Let $\mathbf{y}=\left( y_{1},y_{2},...,y_{N}\right) ^{T}$, $\mathbf{B}_{%
\mathbf{\theta }}=\left\{ B_{j,4}\left( z_{\mathbf{\theta
}i}\right)
\right\} _{i\in U_{N},j=-3,...,J}$ be the B-spline matrix for any fixed $%
\mathbf{\theta }$ and $\mathbf{e}_{i}$ be a $N$-vector with a $1$ in the $i$%
th position and $0$ elsewhere. Write
\begin{equation}
\tilde{m}_{i}=\tilde{g}\left( \mathbf{x}_{i}^{T}\tilde{\mathbf{\theta}}%
\right) =\tilde{\varphi}_{\tilde{\mathbf{\theta}}}\left( z_{\tilde{\mathbf{%
\theta}}i}\right) =\mathbf{e}_{i}^{T}\mathbf{B}_{\tilde{\mathbf{\theta }}%
}\left( \mathbf{B}_{\tilde{\mathbf{\theta }}}^{T}\mathbf{B}_{\tilde{\mathbf{%
\theta }}}\right) ^{-1}\mathbf{B}_{\tilde{\mathbf{\theta
}}}^{T}\mathbf{y.} \label{DEF:mitilde}
\end{equation}
Clearly, $\tilde{m}_{i}$ is the spline single-index prediction at $\mathbf{x}%
_{i}$ based on the entire finite population. If these pseudo predictions $%
\tilde{m}_{i}$ were known, then a design-unbiased estimator of
$t_{y}$ would be the generalized difference estimator
\begin{equation}
\tilde{t}_{y,\text{diff}}=\sum_{i\in s}\frac{y_{i}-\tilde{m}_{i}}{\pi _{i}}%
+\sum_{i\in U_{N}}\tilde{m}_{i},  \label{DEF:tydifftilde}
\end{equation}
as given on page 221 of reference \citep{REF:SSW}. The design
variance of $\tilde{t}_{y,\text{diff}}$ in (\ref{DEF:tydifftilde})
is
\[
\text{Var}_{p}\left( \tilde{t}_{y,\text{diff}}\right)
=\sum_{i,j\in
U_{N}}\left( \pi _{ij}-\pi _{i}\pi _{j}\right) \frac{y_{i}-\tilde{m}_{i}}{%
\pi _{i}}\frac{y_{j}-\tilde{m}_{j}}{\pi _{j}}.
\]

\subsection{Sample-based estimator}\label{SUBSEC:sample-based}

However, the predictions $\tilde{m}_{i}$ for $m\left(
\mathbf{x}_{i}\right) $ can not be computed directly from data,
because the only $y_{i}$'s observed are those with $i\in s$.
Therefore, each $\tilde{m}_{i}$ needs to be replaced by a
sample-based consistent estimator. For any fixed $\mathbf{\theta
}$, the
sample based cubic spline estimator $\hat{\varphi}_{\mathbf{\theta }}$ of $%
\varphi _{\mathbf{\theta }}$ in (\ref{EQ:phitheta-mtheta}) is
defined as
\begin{equation}
\hat{\varphi}_{\mathbf{\theta }}\left( \cdot \right) =\arg
\min_{\varphi \left( \cdot \right) \in \Phi ^{\left( 2\right)
}\left[ 0,1\right]
}\sum_{i\in s}\pi _{i}^{-1}\left\{ y_{i}-\varphi \left( z_{\mathbf{\theta }%
i}\right) \right\} ^{2}.  \label{DEF:mthetahat}
\end{equation}
Define the sample-based empirical risk function of $\mathbf{\theta
}$
\begin{equation}
\hat{R}\left( \mathbf{\theta }\right) =N^{-1}\sum_{i\in s}\pi
_{i}^{-1}\left\{ y_{i}-\hat{\varphi}_{\mathbf{\theta }}\left( z_{\mathbf{%
\theta }i}\right) \right\} ^{2},  \label{DEF:Rhat}
\end{equation}
then the sample design-based spline estimator of $\mathbf{\theta
}_{0}$ is defined as
\begin{equation}
\hat{\mathbf{\theta }}=\arg \min_{\mathbf{\theta \in }S_{+}^{d-1}}\hat{R}%
\left( \mathbf{\theta }\right) ,  \label{DEF:thetahat}
\end{equation}
and the spline estimator of $g$ is $\hat{g}$, i.e., $\hat{g}\left(
v\right) =\hat{\varphi}_{\mathbf{\hat{\theta}}}\left( F_{d}\left(
v\right) \right) $. For any $i\in s$, let
\begin{equation}
\hat{m}_{i}=\hat{g}\left( \mathbf{x}_{i}^{T}\hat{\mathbf{\theta }}\right) =%
\hat{\varphi}_{\mathbf{\hat{\theta}}}\left( z_{\mathbf{\hat{\theta}}%
i}\right) =\mathbf{e}_{i}^{T}\mathbf{B}_{\hat{\mathbf{\theta
}},s}\left(
\mathbf{B}_{\hat{\mathbf{\theta }},s}^{T}\mathbf{W}_{s}\mathbf{B}_{\hat{%
\mathbf{\theta }},s}\right) ^{-1}\mathbf{B}_{\hat{\mathbf{\theta }},s}^{T}%
\mathbf{W}_{s}\mathbf{y}_{s},  \label{DEF:mihat}
\end{equation}
where $\mathbf{y}_{s}=\left\{ y_{i}\right\} _{i\in s}$ is the
$n_{N}$-vector of $y_{i}$ obtained in the sample and
\[
\mathbf{B}_{\hat{\mathbf{\theta }},s}=\left\{ B_{j,4}\left( z_{\mathbf{\hat{%
\theta}}i}\right) \right\} _{i\in s,j=-3,...,J},\mathbf{W}_{s}=\text{diag}%
\left\{ \frac{1}{\pi _{i}}\right\} _{i\in s}.
\]
Then the sample design-based B-spline estimator of $t_{y}$ is
\begin{equation}
\hat{t}_{y,\text{diff}}=\sum_{i\in s}\frac{y_{i}-\hat{m}_{i}}{\pi _{i}}%
+\sum_{i\in U_{N}}\hat{m}_{i}. \label{DEF:tydiffhat}
\end{equation}

\section{Properties of the estimator}

\label{SEC:results}

\renewcommand{\thetheorem}{{\sc \arabic{theorem}}} \renewcommand{%
\theproposition}{{\sc \thesection.\arabic{proposition}}} \renewcommand{%
\thelemma}{{\sc \thesection.\arabic{lemma}}} \renewcommand{\thecorollary}{{%
\sc \thesection.\arabic{corollary}}} \renewcommand{\theequation}{%
\thesection.\arabic{equation}} \renewcommand{\thesubsection}{\thesection.%
\arabic{subsection}} \setcounter{equation}{0} \setcounter{lemma}{0} %
\setcounter{proposition}{0} \setcounter{theorem}{0} %
\setcounter{subsection}{0}\setcounter{corollary}{0}

\subsection{A simple alternative expression for the estimator}
\label{SUBSEC:expression}

Like the ratio and linear regression estimators (\citep{REF:SSW})
and the penalized spline estimators (\citep{REF:BCO}), the
B-spline estimator defined in (\ref{DEF:tydiffhat}) can also be
represented in a simple form. Let $\mathbf{\hat{t}}_{z}$ and
$\mathbf{\hat{t}}_{z\pi }$ be two vectors:
\[
\mathbf{\hat{t}}_{z}=\left\{ \sum_{i\in U_{N}}B_{j,4}\left( z_{%
\mathbf{\hat{\theta}}i}\right) \right\} _{j=-3,...,J}^{T}, \text{
} \mathbf{\hat{t}}_{z\pi }=\left\{ \sum_{i\in s}\pi_{i}^{-1}
B_{j,4}\left( z_{\mathbf{\hat{\theta}}i}\right) \right\}
_{j=-3,...,J}^{T}.
\]
Then the estimator in (\ref{DEF:tydiffhat}) can be written as $\hat{t}_{y,%
\text{diff}}=\hat{t}_{y}+\left( \mathbf{\hat{t}}_{z}-\mathbf{\hat{t}}%
_{z\pi }\right) \hat{\gamma}$, where
\[
\hat{\gamma}=\left( \mathbf{B}_{\mathbf{\hat{\theta}},s}^{T}\mathbf{W}_{s}%
\mathbf{B}_{\mathbf{\hat{\theta}},s}\right) ^{-1}\mathbf{B}_{\mathbf{\hat{%
\theta}},s}^{T}\mathbf{W}_{s}\mathbf{y}_{s}.
\]
Noting that $\left( 1,...,1\right) _{J+4}\mathbf{B}_{\mathbf{\hat{\theta}}%
,s}^{T}=\left( 1,...,1\right) _{n_{N}}$ and
\[
\left\{ \sum_{i\in s}\pi_{i}^{-1}B_{j,4}\left( z_{\mathbf{\hat{%
\theta}}i}\right) \right\} _{j=-3,...,J}^{T}=\left( 1,...,1\right)
_{J+4}\mathbf{B}_{\mathbf{\hat{\theta}},s}^{T}\mathbf{W}_{s}\mathbf{B}_{%
\mathbf{\hat{\theta}},s},
\]
one has
\begin{eqnarray*}
\mathbf{\hat{t}}_{z\pi }\hat{\gamma} &=&\left\{ \sum_{i\in
s}\pi_{i}^{-1}B_{j,4}\left( z_{\mathbf{\hat{\theta}}i}\right)
\right\}
_{j=-3,...,J}^{T}\left( \mathbf{B}_{\mathbf{\hat{\theta}},s}^{T}%
\mathbf{W}_{s}\mathbf{B}_{\mathbf{\hat{\theta}},s}\right) ^{-1}\mathbf{B}_{%
\mathbf{\hat{\theta}},s}^{T}\mathbf{W}_{s}\mathbf{y}_{s} \\
&=&\left( 1,...,1\right) _{J+4}\mathbf{B}_{\mathbf{\hat{\theta}},s}^{T}%
\mathbf{W}_{s}\mathbf{y}_{s}=\left( 1,...,1\right) _{n_{N}}\mathbf{W}_{s}%
\mathbf{y}_{s}=\hat{t}_{y}.
\end{eqnarray*}
So the proposed estimator takes the simple and attractive form:
$\hat{t}_{y,\text{diff}}=\mathbf{\hat{t}}_{z}\hat{\gamma}=\sum_{i\in
U_{N}}\hat{m}_{i} $.

\subsection{Assumptions}
\label{SUBSEC:assumptions}

I will use the traditional asymptotic framework given in reference
\citep{REF:BO,REF:IF}, in which both the population and sample
sizes increase as $N\rightarrow \infty$. There are two sources of
``variation'' to be considered here. The first is introduced by
the random sample design and the corresponding measure is denoted
by $p$. The ``with $p$-probability 1'', ``$O_{p}$'', ``$o_{p}$''
and ``$E_{p}(\cdot)$'' notation below is with respect to this
measure. The second is associated with the superpopulation from
which the finite population is viewed as a sample. The
corresponding measure and notation are ``$\xi$'', ``with
$\xi$-probability 1'', ``$O_{\xi}$'', ``$o_{\xi}$'' and
``$E_{\xi}(\cdot)$''.

Before stating the asymptotic properties of the estimators, we
need some assumptions. Let $B_{a}^{d}=\left\{ \mathbf{x}\in R^{d}\left\vert \left\Vert \mathbf{x}%
\right\Vert \leq a\right. \right\} $ be the $d$-dimensional ball
with radius $a$, center $\mathbf{0}$ and volume
$\text{Vol}_{d}\left( B_{a}^{d}\right)$. Let
\[
C^{(k)}\left( B_{a}^{d}\right) =\left\{ m\left\vert \text{the\
}k\text{-th order partial derivatives of }m\text{ are continuous
on }B_{a}^{d}\right. \right\}
\]
be the space of $k$-th order smooth functions. Before stating the
asymptotic results, I formulate some assumptions:
\begin{enumerate}
\item[(A1)]  \textit{The density function of }$\mathbf{X}$,\textit{\ }$%
f\left( \mathbf{x}\right) \in C^{(4)}\left( B_{a}^{d}\right)
$\textit{\ for
some }$a>0$\textit{, and there are positive constants }$c_{f}\leq C_{f}$%
\textit{\ such that }$c_{f}/\text{Vol}_{d}\left( B_{a}^{d}\right)
\leq
f\left(\mathbf{x}\right) \leq C_{f}/\text{Vol}_{d}\left( B_{a}^{d}\right) $%
\textit{, if }$\mathbf{x}\in B_{a}^{d}$\textit{\ and }$f\left( \mathbf{x}%
\right) =0,\mathbf{x}\notin B_{a}^{d}$.

\item[(A2)]  \textit{The regression function in} (\ref{model})
$m\in C^{(4)}\left( B_{a}^{d}\right) $.

\item[(A3)]  \textit{The error} $\varepsilon $ \textit{in
(\ref{model}) satisfies} $E_{\xi }\left( \varepsilon \left|
\mathbf{X}\right. \right) =0$,
$E_{\xi }\left( \varepsilon ^{2}\left| \mathbf{X}\right. \right) =1$ \textit{%
and there exists a positive constant} $M$ \textit{such that} $\sup_{\mathbf{x%
}\in B_{a}^{d}}E_{\xi }\left( \left| \varepsilon \right| ^{3}\left| \mathbf{X%
}=\mathbf{x}\right. \right) <M$. \textit{The standard deviation function }$%
\sigma \left( \mathbf{x}\right) $ \textit{is continuous on} $B_{a}^{d}$, $%
0<c_{\sigma }\leq \inf_{\mathbf{x}\in B_{a}^{d}}\sigma \left( \mathbf{x}%
\right) \leq \sup_{\mathbf{x}\in B_{a}^{d}}\sigma \left(
\mathbf{x}\right) \leq C_{\sigma }<\infty $.

\item[(A4)]  \textit{As }$N\rightarrow \infty $,
$n_{N}N^{-1}\rightarrow \pi \in \left( 0,1\right) $ \textit{and
the number of interior knots} $J_{N}$ \textit{satisfies}:
$n_{N}^{1/6}\ll J_{N}\ll n_{N}^{1/5}\left\{ \log \left(
n_{N}\right) \right\} ^{-2/5}$.

\item[(A5)]  \textit{For all} $N$, $\min_{i\in U_{N}}\pi _{i}\geq
\lambda
>0, $ $\min_{i,j\in U_{N}}\pi _{ij}\geq \lambda ^{*}>0$ \textit{and}
\[
\lim \sup_{N\rightarrow \infty }n_{N}\max_{i,j\in U_{N},i\neq
j}\left| \pi _{ij}-\pi _{i}\pi _{j}\right| <\infty .
\]

\item[(A6)]  \textit{Let} $D_{k,N}$ \textit{be the set of all distinct} $k$%
\textit{-tuples} $\left( i_{1},i_{2},...,i_{k}\right) $
\textit{from} $U_{N}$,
\[
\lim \sup_{N\rightarrow \infty }n_{N}^{2}\max_{\left(
i_{1},i_{2},i_{3},i_{4}\right) \in D_{4,N}}\left| E_{p}\left[
\left( I_{i_{1}}-\pi _{i_{1}}\right) \left( I_{i_{2}}-\pi
_{i_{2}}\right) \left( I_{i_{3}}-\pi _{i_{3}}\right) \left(
I_{i_{4}}-\pi _{i_{4}}\right) \right] \right| <\infty ,
\]
\[
\lim \sup_{N\rightarrow \infty }n_{N}^{2}\max_{\left(
i_{1},i_{2},i_{3},i_{4}\right) \in D_{4,N}}\left| E_{p}\left[
\left( I_{i_{1}}I_{i_{2}}-\pi _{i_{1}i_{2}}\right) \left(
I_{i_{3}}I_{i_{4}}-\pi _{i_{3}i_{4}}\right) \right] \right|
<\infty ,
\]
\textit{and}
\[
\lim \sup_{N\rightarrow \infty }n_{N}^{2}\max_{\left(
i_{1},i_{2},i_{3}\right) \in D_{3,N}}\left| E_{p}\left[ \left(
I_{i_{1}}-\pi _{i_{1}}\right) ^{2}\left( I_{i_{2}}-\pi
_{i_{2}}\right) \left( I_{i_{3}}-\pi _{i_{3}}\right) \right]
\right| <\infty,
\]
\textit{where $I_{i}=1$ if $i\in s$ and $I_{i}=0$ otherwise}.

\item[(A7)] \textit{The risk function }$\tilde{R}$ \textit{in
(\ref
{DEF:Rtilde}) is locally convex at} $\tilde{\mathbf{\theta }}$\textit{: }$%
\forall \varepsilon >0,\exists \delta >0$ \textit{such that
}$\left\|
\mathbf{\theta }-\tilde{\mathbf{\theta }}\right\| _{2}<\varepsilon $ \textit{%
if }$\tilde{R}\left( \mathbf{\theta }\right) -\tilde{R}\left( \tilde{\mathbf{%
\theta }}\right) <\delta $.

\item[(A8)] \textit{The second order partial derivative of the
risk function, $\tilde{R}\left( \mathbf{\theta }\right)$, is
bounded at $\mathbf{\theta }=\tilde{\mathbf{\theta }}$.}
\end{enumerate}

\noindent \textbf{Remark 3.1:} Assumptions (A1)-(A3) are typical
in the nonparametric smoothing literature, see for instance,
\citep{REF:FG,REF:HW,REF:XTLZ}. Assumptions (A4) is about how to
choose the number of knots in order to achieve the optimal
nonparametric rate of convergence. In practice the number of
interior knots $J_{N}$ is chosen according to
(\ref{EQ:numberofknots}). Assumptions (A5) and (A6) involve the
inclusion probabilities of the design, which are also assumed in
reference \citep{REF:BO}. Assumption (A7) is used to derive the
design consistency of $\hat{\mathbf{\theta}}$ to
$\tilde{\mathbf{\theta}}$ and Assumption (A8) is used to obtain
the rate of the consistency.

\subsection{Asymptotic properties of the estimator} \label{SUBSEC:properties}

The estimator $\hat{\mathbf{\theta}}$ in (\ref{DEF:thetahat}) of
the single-index coefficient $\mathbf{\theta}_{0}$ is
asymptotically design consistent as the following theorem
demonstrates.

\begin{theorem}
\label{THM:theta consistent} Under Assumptions (A1)-(A5) and (A7), $%
\hat{\mathbf{\theta }}$ is asymptotically design consistent in the
sense that with $p$-probability 1
\[
\lim_{N\rightarrow \infty }\hat{\mathbf{\theta }}-\tilde{\mathbf{\theta }}%
=0,
\]
and further if (A8) holds, then
\[
\left\|\hat{\mathbf{\theta }}-\tilde{\mathbf{\theta
}}\right\|_{\infty}=O_p\left(J_{N}/n_N^{1/2}\right),
\]
where $\tilde{\mathbf{\theta }}$ and $\hat{\mathbf{\theta }}$ are
the population and sample based estimators of $\mathbf{\theta
}_{0}$ in (\ref {DEF:thetatilde}) and (\ref{DEF:thetahat}).
\end{theorem}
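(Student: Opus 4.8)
The plan is to view the sample risk $\hat{R}$ in (\ref{DEF:Rhat}) as a Horvitz--Thompson estimator of the population risk $\tilde{R}$ in (\ref{DEF:Rtilde}) and to transfer convergence of these objective functions to convergence of their minimizers. The first step is to reduce everything to the behaviour of the weighted B-spline Gram matrix. Writing $\mathbf{V}_{\mathbf{\theta}}=N^{-1}\mathbf{B}_{\mathbf{\theta},s}^{T}\mathbf{W}_{s}\mathbf{B}_{\mathbf{\theta},s}$ with population counterpart $\tilde{\mathbf{V}}_{\mathbf{\theta}}=N^{-1}\mathbf{B}_{\mathbf{\theta}}^{T}\mathbf{B}_{\mathbf{\theta}}$, and the analogous cross-product vectors $N^{-1}\mathbf{B}_{\mathbf{\theta},s}^{T}\mathbf{W}_{s}\mathbf{y}_{s}$ and $N^{-1}\mathbf{B}_{\mathbf{\theta}}^{T}\mathbf{y}$, both $\hat{\varphi}_{\mathbf{\theta}}$ in (\ref{DEF:mthetahat}) and $\tilde{\varphi}_{\mathbf{\theta}}$ in (\ref{DEF:mthetatilde}) are explicit functions of these objects. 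Since each $B_{j,4}$ is bounded with support on $O(1)$ knot intervals, and since Remark 2.2 together with Assumption (A1) forces $Z_{\mathbf{\theta}}=F_{d}(\mathbf{X}^{T}\mathbf{\theta})$ to have a density bounded away from $0$ and $\infty$ uniformly over $\mathbf{\theta}\in S_{+}^{d-1}$, the rescaled matrices $J_{N}\tilde{\mathbf{V}}_{\mathbf{\theta}}$ have eigenvalues bounded away from $0$ and $\infty$ uniformly in $\mathbf{\theta}$. I would then use Assumption (A5) and a law-of-large-numbers argument for Horvitz--Thompson sums to obtain $\sup_{\mathbf{\theta}}\|\mathbf{V}_{\mathbf{\theta}}-\tilde{\mathbf{V}}_{\mathbf{\theta}}\|\to 0$ and its analogue for the cross products, with $p$-probability $1$, the uniformity in $\mathbf{\theta}$ coming from a covering argument exploiting that $\mathbf{\theta}\mapsto z_{\mathbf{\theta}i}$ is Lipschitz on the compact set $S_{+}^{d-1}$. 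This yields $\sup_{\mathbf{\theta}}|\hat{\varphi}_{\mathbf{\theta}}-\tilde{\varphi}_{\mathbf{\theta}}|\to 0$ and hence $\sup_{\mathbf{\theta}\in S_{+}^{d-1}}|\hat{R}(\mathbf{\theta})-\tilde{R}(\mathbf{\theta})|\to 0$ with $p$-probability $1$.

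Consistency then follows from the standard argmin argument. Since $\hat{\mathbf{\theta}}$ minimizes $\hat{R}$ and $\tilde{\mathbf{\theta}}$ minimizes $\tilde{R}$, the bound $\hat{R}(\hat{\mathbf{\theta}})\le\hat{R}(\tilde{\mathbf{\theta}})$ gives
\[
\tilde{R}(\hat{\mathbf{\theta}})-\tilde{R}(\tilde{\mathbf{\theta}})\le 2\sup_{\mathbf{\theta}\in S_{+}^{d-1}}\left|\hat{R}(\mathbf{\theta})-\tilde{R}(\mathbf{\theta})\right|\longrightarrow 0 ,
\]
and the well-separation condition (A7) converts this into $\hat{\mathbf{\theta}}-\tilde{\mathbf{\theta}}\to 0$ with $p$-probability $1$.

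For the rate, the crude inequality above is too lossy, because the $O_{p}(n_{N}^{-1/2})$ sampling fluctuation in the level of $\hat{R}$ does not perturb the minimizer; I would therefore pass to a first-order expansion of the stationarity conditions after reparametrizing $S_{+}^{d-1}$ (eliminating $\theta_{d}=(1-\sum_{q<d}\theta_{q}^{2})^{1/2}$ to work on an open subset of $R^{d-1}$). Using $\nabla\tilde{R}(\tilde{\mathbf{\theta}})=0$ and $\nabla\hat{R}(\hat{\mathbf{\theta}})=0$, a Taylor expansion about $\tilde{\mathbf{\theta}}$ gives
\[
\hat{\mathbf{\theta}}-\tilde{\mathbf{\theta}}=-\left\{\nabla^{2}\hat{R}(\mathbf{\theta}^{*})\right\}^{-1}\left\{\nabla\hat{R}(\tilde{\mathbf{\theta}})-\nabla\tilde{R}(\tilde{\mathbf{\theta}})\right\}
\]
for an intermediate $\mathbf{\theta}^{*}$. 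Assumption (A8), with its Horvitz--Thompson analogue, bounds $\nabla^{2}\hat{R}$ from above, while (A7) keeps it away from singularity near $\tilde{\mathbf{\theta}}$, so the rate is governed by the gradient deviation $\|\nabla\hat{R}(\tilde{\mathbf{\theta}})-\nabla\tilde{R}(\tilde{\mathbf{\theta}})\|$. Applying the envelope relation for the inner (profiled) spline fit, this gradient is a Horvitz--Thompson average of terms involving $\hat{\varphi}'_{\tilde{\mathbf{\theta}}}$ and $\partial_{\mathbf{\theta}}z_{\mathbf{\theta}i}$; differentiating the B-spline basis in $\mathbf{\theta}$ contributes a single factor of $J_{N}$, while the centred Horvitz--Thompson fluctuation contributes $n_{N}^{-1/2}$ under (A5), so that $\|\nabla\hat{R}(\tilde{\mathbf{\theta}})-\nabla\tilde{R}(\tilde{\mathbf{\theta}})\|=O_{p}(J_{N}/n_{N}^{1/2})$ and the claimed bound follows.

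The hard part will be the uniformity over $\mathbf{\theta}\in S_{+}^{d-1}$ combined with the bookkeeping of the $J_{N}$ powers. The spline space evaluation points $z_{\mathbf{\theta}i}$, hence the Gram matrix, its inverse (whose entries scale like $J_{N}$ because $\tilde{\mathbf{V}}_{\mathbf{\theta}}$ has entries of order $J_{N}^{-1}$), and the basis derivatives all move with $\mathbf{\theta}$, so every matrix and gradient estimate must be made uniform; I expect to control this through Lipschitz continuity of $\mathbf{\theta}\mapsto z_{\mathbf{\theta}i}$ together with a chaining argument, and to track carefully that it is exactly one differentiation of the basis in $\mathbf{\theta}$ that inflates the pure $n_{N}^{-1/2}$ design fluctuation by the single factor $J_{N}$ producing the stated $O_{p}(J_{N}/n_{N}^{1/2})$ rate.
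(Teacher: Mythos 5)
Your proposal is correct and follows essentially the same route as the paper's own proof: uniform convergence of $\hat{R}$ to $\tilde{R}$ obtained through the Horvitz--Thompson estimated spline Gram matrices and cross-products (with $\|\mathbf{V}_{\mathbf{\theta}}^{-1}\|$ of order $J_N$, the paper's Lemmas A.1--A.3), the argmin plus well-separation argument under (A7) for strong design consistency, and then a Taylor expansion of the first-order conditions in which the gradient deviation $O_p\left(J_N/n_N^{1/2}\right)$ (one factor $J_N$ from differentiating the basis in $\mathbf{\theta}$) combined with Hessian control via (A7)--(A8) yields the rate, exactly as in Appendix B.1. Your refinements --- explicit reparametrization to eliminate the unit-sphere constraint and a chaining argument for uniformity in $\mathbf{\theta}$ --- are bookkeeping details the paper handles instead by Taylor linearization of the $\pi$-estimators and by working on the cap $S_c^{d-1}$, not a different method.
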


Like the local polynomial estimators in \citep{REF:BO}, the
following theorem shows that the estimator
$\hat{t}_{y,\text{diff}}$ in (\ref{DEF:tydiffhat}) is
asymptotically design unbiased and design consistent.

\begin{theorem}
\label{THM:ADU} Under Assumptions (A1)-(A5) and (A7)-(A8), the
model assisted spline estimator $\hat{t}_{y,\text{diff}}$ in
(\ref{DEF:tydiffhat}) is asymptotically design unbiased (ADU) in
the sense that
\[
\lim_{N\rightarrow \infty }E_{p}\left[ \frac{\hat{t}_{y,\text{diff}}-t_{y}%
}{N}\right] =0\text{ with }\xi \text{-probability }1,
\]
and is design consistent in the sense that for all $\eta >0$
\[
\lim_{N\rightarrow \infty }E_{p}\left[ I_{\left\{ \left| \hat{t}_{y,\text{%
diff}}-t_{y}\right| >N\eta \right\} }\right] =0\text{ with }\xi \text{%
-probability }1.
\]
\end{theorem}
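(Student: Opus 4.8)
The plan is to compare $\hat{t}_{y,\text{diff}}$ with its population-based ``oracle'' analogue $\tilde{t}_{y,\text{diff}}$ in (\ref{DEF:tydifftilde}) and to exploit the fact that the latter is a generalized difference estimator built from the fixed population predictions $\tilde{m}_i$. First I would write
\[
\frac{\hat{t}_{y,\text{diff}}-t_y}{N}=\frac{\hat{t}_{y,\text{diff}}-\tilde{t}_{y,\text{diff}}}{N}+\frac{\tilde{t}_{y,\text{diff}}-t_y}{N}
\]
and treat the two terms separately. Because each $\tilde{m}_i$ depends only on the full population, it is a fixed constant under the design measure $p$; Horvitz--Thompson unbiasedness then gives $E_p\bigl[\sum_{i\in s}\pi_i^{-1}(y_i-\tilde{m}_i)\bigr]=\sum_{i\in U_N}(y_i-\tilde{m}_i)$, so $E_p[\tilde{t}_{y,\text{diff}}]=t_y$ exactly. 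This reduces the ADU claim to showing $E_p[(\hat{t}_{y,\text{diff}}-\tilde{t}_{y,\text{diff}})/N]\to 0$, and the consistency claim to showing that both terms vanish in $p$-probability after division by $N$.

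For the oracle term I would use the design-variance formula displayed after (\ref{DEF:tydifftilde}) together with Assumption (A5). Splitting the double sum into diagonal and off-diagonal parts, the diagonal contributes $\sum_{i\in U_N}\pi_i^{-1}(1-\pi_i)(y_i-\tilde{m}_i)^2=O(N)$, since $\pi_i\ge\lambda$ and $N^{-1}\sum_{i\in U_N}(y_i-\tilde{m}_i)^2$ is bounded with $\xi$-probability $1$ (it converges to the population risk $R(\mathbf{\theta}_0)<\infty$ in (\ref{DEF:Rtheta})); the off-diagonal part is bounded by $Cn_N^{-1}\bigl(\sum_{i\in U_N}|y_i-\tilde{m}_i|\bigr)^2=O(N^2/n_N)$ using $\limsup n_N\max_{i\ne j}|\pi_{ij}-\pi_i\pi_j|<\infty$ and Cauchy--Schwarz. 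Dividing by $N^2$ and invoking $n_N\to\infty$ gives $\mathrm{Var}_p(\tilde{t}_{y,\text{diff}})/N^2\to 0$, and Chebyshev's inequality then yields $(\tilde{t}_{y,\text{diff}}-t_y)/N\to 0$ in $p$-probability, which is the first half of both conclusions.

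For the approximation term, since the fitted spline $\hat{\varphi}_{\hat{\mathbf{\theta}}}$ can be evaluated at every $\mathbf{x}_i$, the prediction $\hat{m}_i$ is defined for all $i\in U_N$, and a short rearrangement gives
\[
\hat{t}_{y,\text{diff}}-\tilde{t}_{y,\text{diff}}=\sum_{i\in U_N}(\hat{m}_i-\tilde{m}_i)\bigl(1-\pi_i^{-1}I_i\bigr).
\]
Since $|1-\pi_i^{-1}I_i|\le 1+\lambda^{-1}$ by (A5), the right-hand side is bounded in absolute value by $(1+\lambda^{-1})\,N\sup_{i\in U_N}|\hat{m}_i-\tilde{m}_i|$, so everything reduces to the uniform bound $\sup_{i\in U_N}|\hat{m}_i-\tilde{m}_i|=o_p(1)$. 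To establish it I would split $\hat{m}_i-\tilde{m}_i=\{\hat{\varphi}_{\hat{\mathbf{\theta}}}(z_{\hat{\mathbf{\theta}}i})-\tilde{\varphi}_{\hat{\mathbf{\theta}}}(z_{\hat{\mathbf{\theta}}i})\}+\{\tilde{\varphi}_{\hat{\mathbf{\theta}}}(z_{\hat{\mathbf{\theta}}i})-\tilde{\varphi}_{\tilde{\mathbf{\theta}}}(z_{\tilde{\mathbf{\theta}}i})\}$: the first bracket is the gap between the sample-weighted and population spline fits at the fixed direction $\hat{\mathbf{\theta}}$, controlled uniformly by the design-consistency of the B-spline smoother and the higher-order moment conditions (A6); the second is driven by $\|\hat{\mathbf{\theta}}-\tilde{\mathbf{\theta}}\|_\infty=O_p(J_N/n_N^{1/2})$ from Theorem \ref{THM:theta consistent} together with the Lipschitz dependence of the spline prediction on $\mathbf{\theta}$ (using (A8)).

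The main obstacle is precisely this uniform control of $\sup_{i\in U_N}|\hat{m}_i-\tilde{m}_i|$: the prediction depends on the sample both through the random direction $\hat{\mathbf{\theta}}$ and through the sample-weighted least-squares fit, so these two sources of randomness must be disentangled and shown to be uniformly small over the whole population rather than merely over the sampled units. Once $\sup_{i\in U_N}|\hat{m}_i-\tilde{m}_i|=o_p(1)$ is in hand, the consistency statement follows immediately from the two bounds above, and the ADU statement follows by noting $|E_p[(\hat{t}_{y,\text{diff}}-\tilde{t}_{y,\text{diff}})/N]|\le(1+\lambda^{-1})\,E_p\bigl[\sup_{i\in U_N}|\hat{m}_i-\tilde{m}_i|\bigr]$ and verifying that this expectation tends to zero, where the uniform boundedness of the spline predictions for a typical population realization lets dominated convergence upgrade the in-probability statement to the expectation.
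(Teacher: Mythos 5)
Your overall skeleton coincides with the paper's proof: the paper also writes $\hat{t}_{y,\text{diff}}-t_{y}=\sum_{i\in U_{N}}(y_{i}-\tilde{m}_{i})(I_{i}/\pi_{i}-1)+\sum_{i\in U_{N}}(\hat{m}_{i}-\tilde{m}_{i})(1-I_{i}/\pi_{i})$, and it disposes of the oracle term exactly as you do, via Horvitz--Thompson unbiasedness, $\limsup_{N}N^{-1}\sum_{i\in U_{N}}(y_{i}-\tilde{m}_{i})^{2}<\infty$, Assumption (A5) and the argument of Theorem 1 of \citep{REF:BO}, with Markov's inequality giving the consistency statement. The genuine divergence is in the second term. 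The paper applies Cauchy--Schwarz \emph{inside the expectation}, bounding $E_{p}|N^{-1}\sum_{i\in U_{N}}(\hat{m}_{i}-\tilde{m}_{i})(1-I_{i}/\pi_{i})|$ by $\{E_{p}[N^{-1}\sum_{i\in U_{N}}(\hat{m}_{i}-\tilde{m}_{i})^{2}]\,E_{p}[N^{-1}\sum_{i\in U_{N}}(1-I_{i}/\pi_{i})^{2}]\}^{1/2}$; the second factor is at most $1/\lambda$ by (A5), so everything reduces to the \emph{averaged} mean-square statement $N^{-1}E_{p}[\sum_{i\in U_{N}}(\hat{m}_{i}-\tilde{m}_{i})^{2}]\rightarrow 0$ (Lemma \ref{LEM:Epmtilde-hat}), which is proved by a mean-value expansion in $\mathbf{\theta}$ combining Lemma \ref{LEM:phihat-phitilde}, the rate $\|\hat{\mathbf{\theta}}-\tilde{\mathbf{\theta}}\|_{\infty}=O_{p}(J_{N}/n_{N}^{1/2})$ from Theorem \ref{THM:theta consistent}, and the projection-derivative bound $\|\partial\mathbf{P}_{\mathbf{\theta}}/\partial\theta_{q}\|_{\infty}\leq CJ_{N}$ of Lemma \ref{LEM:PthetadotPpnorm}. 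You instead reduce to the stronger sup-norm statement $\sup_{i\in U_{N}}|\hat{m}_{i}-\tilde{m}_{i}|=o_{p}(1)$; that bound itself is attainable from the same ingredients (and, a small point, Lemma \ref{LEM:phihat-phitilde} needs only (A1)--(A5), not (A6), which enters only in the variance-estimation result of Theorem \ref{THM:normality}).

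The genuine gap is your final step, upgrading $\sup_{i\in U_{N}}|\hat{m}_{i}-\tilde{m}_{i}|=o_{p}(1)$ to $E_{p}[\sup_{i\in U_{N}}|\hat{m}_{i}-\tilde{m}_{i}|]\rightarrow 0$ by ``dominated convergence'' with a ``uniform boundedness of the spline predictions.'' No such $N$-uniform dominating constant exists under the stated assumptions: (A3) imposes only a third-moment bound on $\varepsilon$, so with $\xi$-probability $1$ one has $\max_{i\in U_{N}}|y_{i}|\rightarrow\infty$ as $N\rightarrow\infty$, and since both $\hat{m}_{i}$ and $\tilde{m}_{i}$ are linear functionals of $\mathbf{y}$ (see (\ref{DEF:mitilde}) and (\ref{DEF:mihat})), worst-case samples can make $\sup_{i}|\hat{m}_{i}-\tilde{m}_{i}|$ of the order of $\max_{i}|y_{i}|$. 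Convergence in $p$-probability of a sequence that is not uniformly bounded does not imply convergence of its $E_{p}$-expectations; what is needed is uniform integrability, i.e., second-moment control of the discrepancy --- which is precisely what the paper's Cauchy--Schwarz/$L^{2}$ route supplies while only requiring the population-averaged quantity $N^{-1}\sum_{i\in U_{N}}(\hat{m}_{i}-\tilde{m}_{i})^{2}$ rather than a supremum. The clean repair of your argument is therefore to abandon the sup bound and dominated convergence, apply Cauchy--Schwarz as the paper does, and prove the mean-square lemma directly; with that substitution the rest of your proposal goes through.
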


Like the local polynomial estimators in \citep{REF:BO}, the
following theorem shows that the estimator in
(\ref{DEF:tydiffhat}) also inherits the limiting distribution of
the generalized difference estimator.

\begin{theorem}
\label{THM:normality}Under Assumptions (A1)-(A8), for $\tilde{t}_{y,\text{%
diff}}$ and $\hat{t}_{y,\text{diff}}$ in (\ref{DEF:tydifftilde})
and (\ref {DEF:tydiffhat}),
\[
\frac{N^{-1}\left( \tilde{t}_{y,\text{diff}}-t_{y}\right) }{\textrm{Var}%
_{p}^{1/2}\left( N^{-1}\tilde{t}_{y,\text{diff}}\right) }\stackrel{d}{%
\longrightarrow }N\left( 0,1\right)
\]
as $N\rightarrow \infty $ implies
\[
\frac{N^{-1}\left( \hat{t}_{y,\text{diff}}-t_{y}\right) }{\hat{V}%
^{1/2}\left( N^{-1}\hat{t}_{y,\text{diff}}\right) }\stackrel{d}{%
\longrightarrow }N\left( 0,1\right) ,
\]
where
\begin{equation}
\hat{V}\left( N^{-1}\hat{t}_{y,\text{diff}}\right) =\frac{1}{N^{2}}%
\sum_{i,j\in U_{N}}\left( y_{i}-\hat{m}_{i}\right) \left( y_{j}-\hat{m}%
_{j}\right) \left( \frac{\pi _{ij}-\pi _{i}\pi _{j}}{\pi _{i}\pi _{j}}%
\right) \frac{I_{i}I_{j}}{\pi _{ij}}.  \label{DEF:Vhat}
\end{equation}
\end{theorem}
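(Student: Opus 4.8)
The plan is to inherit the limiting $N(0,1)$ law of the oracle difference estimator $\tilde t_{y,\mathrm{diff}}$ by a Slutsky argument, after showing that $\hat t_{y,\mathrm{diff}}$ differs from $\tilde t_{y,\mathrm{diff}}$ by a negligible amount and that $\hat V$ is ratio-consistent for the true design variance. I would start from the identity
\[
\frac{N^{-1}(\hat t_{y,\mathrm{diff}}-t_y)}{\hat V^{1/2}}=\frac{N^{-1}(\tilde t_{y,\mathrm{diff}}-t_y)}{\mathrm{Var}_p^{1/2}(N^{-1}\tilde t_{y,\mathrm{diff}})}\cdot\frac{\mathrm{Var}_p^{1/2}(N^{-1}\tilde t_{y,\mathrm{diff}})}{\hat V^{1/2}}+\frac{N^{-1}(\hat t_{y,\mathrm{diff}}-\tilde t_{y,\mathrm{diff}})}{\hat V^{1/2}},
\]
which reduces the theorem to establishing (i) $\hat t_{y,\mathrm{diff}}-\tilde t_{y,\mathrm{diff}}=o_p(N^{1/2})$ and (ii) $\hat V(N^{-1}\hat t_{y,\mathrm{diff}})/\mathrm{Var}_p(N^{-1}\tilde t_{y,\mathrm{diff}})\stackrel{p}{\longrightarrow}1$. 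A preliminary computation using (A5) shows that $\mathrm{Var}_p(\tilde t_{y,\mathrm{diff}})$ is of exact order $N$ (the diagonal of the quadratic form contributes $O(N)$ and the off-diagonal part contributes $N^2\cdot O(n_N^{-1})=O(N)$ since $n_N\asymp N$), so $\mathrm{Var}_p^{1/2}(N^{-1}\tilde t_{y,\mathrm{diff}})\asymp N^{-1/2}$ and (i) is precisely the statement that the numerator is $o_p$ of this standard deviation.

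For step (i) I would subtract (\ref{DEF:tydifftilde}) from (\ref{DEF:tydiffhat}) to obtain the representation
\[
\hat t_{y,\mathrm{diff}}-\tilde t_{y,\mathrm{diff}}=\sum_{i\in U_N}\Bigl(1-\frac{I_i}{\pi_i}\Bigr)(\hat m_i-\tilde m_i),
\]
and then exploit the finite-dimensional B-spline representation $\hat m_i=\mathbf{B}_{\hat{\mathbf{\theta}}}(i)\hat\gamma$ and $\tilde m_i=\mathbf{B}_{\tilde{\mathbf{\theta}}}(i)\tilde\gamma$, where $\mathbf{B}_{\mathbf{\theta}}(i)$ denotes the $i$th row of the population B-spline matrix, to split the summand into a coefficient-discrepancy part $\mathbf{B}_{\hat{\mathbf{\theta}}}(i)(\hat\gamma-\tilde\gamma)$ and an index-discrepancy part $\{\mathbf{B}_{\hat{\mathbf{\theta}}}(i)-\mathbf{B}_{\tilde{\mathbf{\theta}}}(i)\}\tilde\gamma$. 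The first part factors as $\{\sum_{i\in U_N}(1-I_i/\pi_i)\mathbf{B}_{\hat{\mathbf{\theta}}}(i)\}(\hat\gamma-\tilde\gamma)$, in which the $(J_N+4)$-vector in braces is design-centered with design second moment $O(N)$ by (A5), while $\|\hat\gamma-\tilde\gamma\|_2$ is governed by the design-consistency rate of the weighted spline fit; the product is of order $o_p(N^{1/2})$ under (A4). For the index-discrepancy part I would use that a cubic spline on a mesh of width $1/(J_N+1)$ has derivative $O(J_N)$ and that $|z_{\hat{\mathbf{\theta}}i}-z_{\tilde{\mathbf{\theta}}i}|\le\|\mathbf{x}_i\|\,\|\hat{\mathbf{\theta}}-\tilde{\mathbf{\theta}}\|=O_p(J_N/n_N^{1/2})$ by Theorem~\ref{THM:theta consistent}, and then apply the same design-variance bound, obtaining a term of order $O_p(J_N^2)=o_p(N^{1/2})$ because $J_N\ll n_N^{1/5}$ gives $J_N^2\ll n_N^{2/5}=o(n_N^{1/2})$.

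For step (ii) I would compare $\hat V$ in (\ref{DEF:Vhat}) with $\mathrm{Var}_p(N^{-1}\tilde t_{y,\mathrm{diff}})$ in three moves: first replace $\hat m_i$ by $\tilde m_i$ in the residuals of $\hat V$, the error being governed by the same uniform prediction rate from step (i); second, note that the resulting statistic, which carries the factor $I_iI_j/\pi_{ij}$, is design-unbiased for $\mathrm{Var}_p(\tilde t_{y,\mathrm{diff}})/N^2$; and third, show its design variance is negligible relative to its square using the fourth-order and product design moments assumed in (A6). Together these give $\hat V/\mathrm{Var}_p(N^{-1}\tilde t_{y,\mathrm{diff}})\stackrel{p}{\longrightarrow}1$.

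Assembling the pieces, the first term on the right of the display converges to $N(0,1)$ by the hypothesized normality of $\tilde t_{y,\mathrm{diff}}$ multiplied by a factor tending to $1$ from (ii), while the second term is $o_p(1)$ since $N^{-1}(\hat t_{y,\mathrm{diff}}-\tilde t_{y,\mathrm{diff}})=o_p(N^{-1/2})$ by (i) and $\hat V^{1/2}\asymp N^{-1/2}$ by (ii); Slutsky's theorem then yields the claim. I expect the main obstacle to be step (i): because $\hat m_i$ depends on the realized sample through both $\hat{\mathbf{\theta}}$ and the weighted coefficient vector $\hat\gamma$, the coefficients $\hat m_i-\tilde m_i$ in the design sum are random and correlated with the inclusion indicators $I_i$, so the mean-zero cancellation cannot be invoked term by term. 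The finite-dimensional spline structure is what rescues the argument, since it lets me pull $\hat\gamma-\tilde\gamma$ outside the sum and reduce the randomness to a $(J_N+4)$-vector; but making this rigorous requires uniform-in-$\mathbf{\theta}$ control of $\sum_{i\in U_N}(1-I_i/\pi_i)\mathbf{B}_{\mathbf{\theta}}(i)$ over a shrinking neighborhood of $\tilde{\mathbf{\theta}}$, and keeping the knot-dependent spline derivative bounds compatible with the knot range in (A4) is the delicate quantitative step.
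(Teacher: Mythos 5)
Your proposal is correct in outline and follows the same overall architecture as the paper's Appendix B proof: the paper likewise reduces the theorem to (i) $N^{-1}(\hat{t}_{y,\text{diff}}-\tilde{t}_{y,\text{diff}})=o_p(n_N^{-1/2})$ (established inside the proof of Lemma \ref{THM:variance}) and (ii) consistency of $\hat{V}$ for the oracle design variance at rate $o(n_N^{-1})$ (Lemma \ref{THM:Vhat-AMSE}), and then concludes by Slutsky. Your step (ii) is essentially identical to Lemma \ref{THM:Vhat-AMSE}: decompose $\hat{V}-\text{AMSE}$ into a term with oracle residuals plus cross and quadratic terms, use design-unbiasedness of the oracle term together with the fourth-order design moments in (A6) for its variance, and use the prediction-error rate for the remaining terms. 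Where you genuinely diverge is step (i). The paper never factors out the spline coefficient vector; instead it proves the scalar $L^{2}$ bound $N^{-1}E_p\bigl[\sum_{i\in U_N}(\tilde{m}_i-\hat{m}_i)^2\bigr]\rightarrow 0$ (Lemma \ref{LEM:Epmtilde-hat}) by a mean-value expansion of the population projection matrix $\mathbf{P}_{\mathbf{\theta}}$ along the segment joining $\tilde{\mathbf{\theta}}$ and $\hat{\mathbf{\theta}}$, combining the deterministic, uniform-in-$\mathbf{\theta}$ operator bound $\|\partial\mathbf{P}_{\mathbf{\theta}}/\partial\theta_q\|_\infty\leq CJ_N$ (Lemma \ref{LEM:PthetadotPpnorm}), the sample-versus-population fit bound (Lemma \ref{LEM:phihat-phitilde}), and the rate $J_N/n_N^{1/2}$ of Theorem \ref{THM:theta consistent}; this scalar bound is then fed into a design quadratic-form inequality under (A5). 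That route buys exactly what you identify as the delicate point: since the operator bound holds uniformly over $S_c^{d-1}$ with $\xi$-probability one, the sample dependence of $\hat{\mathbf{\theta}}$ never needs to be disentangled from the inclusion indicators, so no design-centering of a $\hat{\mathbf{\theta}}$-dependent vector is required. Your centering claim for $\sum_{i\in U_N}(1-I_i/\pi_i)\mathbf{B}_{\hat{\mathbf{\theta}}}(i)$ is, as you concede, false as stated, and repairing it by uniform control over a shrinking neighborhood is feasible but heavier than the paper's $L^{2}$/mean-value device. One further caveat, shared with the paper: your assertion that $\text{Var}_p(\tilde{t}_{y,\text{diff}})$ is of \emph{exact} order $N$ is only an upper bound under (A5); the matching lower bound (non-degeneracy), which both you and the paper need in order to divide by $\hat{V}^{1/2}$, does not follow from the stated assumptions and is implicitly taken for granted.
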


Details of the proofs of Theorems \ref{THM:theta
consistent}-\ref{THM:normality} are given in Appendix B.

\vskip 0.1in \noindent \textbf{Remark 3.2:} In reference
\citep{REF:BCO}, the number of knots is fixed, thus the bias
caused by spline approximation in developing the asymptotic theory
is ignored. It has been shown in many contexts of function
estimation that, by letting the number of knots increase with the
sample size at an appropriate rate, spline estimate of an unknown
function can achieve the optimal nonparametric rate of
convergence; see \citep{REF:H,REF:WY}. For this purpose, in this
article, $n_{N}^{1/6}\ll J_{N}\ll n_{N}^{1/5}\left\{ \log \left(
n_{N}\right) \right\} ^{-2/5}$, as shown in Assumption (A4).

\vskip 0.1in \noindent \textbf{Remark 3.3:} As one one referee
pointed out, the asymptotics with the number of knots allowed to
grow is much more challenging, and only very recent work tackles
this problem, e.g. \citep{REF:CKO,REF:LR}. However, the results
obtained in this article are not directly comparable to those
obtained in \citep{REF:CKO,REF:LR} due to different settings of
the model. The problem in \citep{REF:CKO,REF:LR} is a purely
nonparametric curve estimation problem and the objective is to
study the asymptotics of the curve estimators fitted with
penalized splines. While the problem here is a semi-parametric one
and the main interest is in estimating the parametric component
$\mathbf{\theta}$. At the population level, it has been shown that
$\mathbf{\theta}_{0}$ should be estimable at the usual root-$n$
rate of convergence using similar techniques as deriving the
asymptotics of maximum likelihood estimators. In this article
examination of the approximation results of the derivatives (up to
the $2$nd order) of the risk function in (\ref{DEF:Rtheta}) by
their empirical versions implies that a range of smoothing
parameter is allowed for the desired asymptotics; see Appendix A.
This differs from nonparametric curve estimation in
\citep{REF:CKO,REF:LR} in which the optimal choice of the
smoothing parameter is required to achieve the optimal rate of
convergence.

\section{Algorithm}\label{SEC:algorithm}
\renewcommand{\thelemma}{{\sc \thesection.\arabic{lemma}}}
\renewcommand{\theequation}{\thesection.\arabic{equation}}
\setcounter{equation}{0} \setcounter{lemma}{0}

In this section, the actual procedure is described to implement
the estimation of $\mathbf{\theta }_{0}$ and $t_{y}$. I first
introduce some new notation. For any fixed $\mathbf{\theta }$, write $\mathbf{P}_{\mathbf{%
\theta },s}=\mathbf{B}_{\mathbf{\theta },s}\left( \mathbf{B}_{\mathbf{\theta }%
,s}^{T}\mathbf{W}_{s}\mathbf{B}_{\mathbf{\theta},s}\right)^{-1}\mathbf{B}_{\mathbf{\theta}%
,s}^{T}\mathbf{W}_{s}$ as the sample projection matrix onto the
cubic spline space.
For any $q=1,...,d$, write $\mathbf{%
\dot{B}}_{q}=\frac{\partial }{\partial \theta _{q}}\mathbf{B}_{\mathbf{%
\theta }}$, $\mathbf{\dot{P}}_{q}=\frac{\partial }{\partial \theta _{q}}%
\mathbf{P}_{\mathbf{\theta }}$ as the first order derivatives of $%
\mathbf{B}_{\mathbf{\theta }}$ and
$\mathbf{P}_{\mathbf{\theta},s}$ with respect to $\mathbf{\theta
}$. Write $\mathbf{\theta }_{-d}=\left( \theta _{1},...,\theta
_{d-1}\right) ^{T}$. Let $\hat{S}^{*}(\mathbf{\theta }_{-d})$ be the score vector of the risk function $\hat{R}%
^{*}\left( \mathbf{\theta }_{-d}\right)=\hat{R}\left( \theta
_{1},\theta _{2},...,\theta _{d-1},\sqrt{1-\left\| \mathbf{\theta
}_{-d}\right\| _{2}^{2}}\right)$,
that is, $\hat{S}^{*}(\mathbf{%
\theta }_{-d})=\frac{\partial }{\partial \mathbf{\theta }_{-d}}\hat{R}%
^{*}\left( \mathbf{\theta}_{-d}\right)$. The next lemma provides
the exact form of $\hat{S}^{*}(\mathbf{\theta }_{-d})$.

\begin{lemma}
\label{LEM:Shatstarmatrices} For $\hat{S}^{*}(\mathbf{\theta
}_{-d})$, the score vector of $\hat{R}^{*}\left( \mathbf{\theta
}_{-d}\right) $, one has
\begin{equation}
\hat{S}^{*}\left( \mathbf{\theta}_{-d}\right) =-n^{-1}\left\{\mathbf{y}_{s}^{T}%
\mathbf{\dot{P}}_{q}\mathbf{y}_{s}-\theta _{q}\mathbf{\theta }_{d}^{-1}\mathbf{y}_{s}%
^{T}\mathbf{\dot{P}}_{d}\mathbf{y}_{s}\right\} _{q=1}^{d-1},
\label{EQ:Shatstarmatrix}
\end{equation}
where for any $q=1,...,d$, $\mathbf{y}_{s}^{T}\mathbf{\dot{P}}_{q}%
\mathbf{y}_{s}=2\mathbf{y}_{s}^{T}\left( \mathbf{I}-\mathbf{P}_{\mathbf{\theta }%
,s}\right) \mathbf{\dot{B}}_{q}\left(\mathbf{B}_{\mathbf{\theta },s}^{T}\mathbf{W}_{s}\mathbf{%
B}_{\mathbf{\theta},s}\right) ^{-1}\mathbf{B}_{\mathbf{\theta
},s}^{T}\mathbf{W}_{s}\mathbf{y}_{s} $, and
$\mathbf{\dot{B}}_{q}=\left\{ J_{N}\left\{
B_{j,3}\left(z_{\mathbf{\theta
},i}\right)-B_{j+1,3}\left(z_{\mathbf{\theta },i}\right) \right\} \dot{F}%
_{d}\left( \mathbf{x}_{\mathbf{\theta },i}\right) x_{i,q}\right\}
_{i\in s,j=-3,...,J}$, with
\[
\dot{F}_{d}\left( x\right) =\frac{d}{dx}F_{d}=\frac{\Gamma \left(
d+1\right)
}{a\Gamma \left\{ \left( d+1\right) /2\right\} ^{2}2^{d}}\left( 1-\frac{x^{2}%
}{a^{2}}\right) ^{\frac{d-1}{2}}I\left( \left| x\right| \leq
a\right) .
\]
\end{lemma}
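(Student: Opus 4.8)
The plan is to reduce the constrained minimization over $S_{+}^{d-1}$ to an unconstrained problem in the free coordinates $\mathbf{\theta}_{-d}=(\theta_1,\dots,\theta_{d-1})^T$, differentiate the resulting objective, and then assemble the score by collecting the partial derivatives. On $S_{+}^{d-1}$ one has $\theta_d=(1-\|\mathbf{\theta}_{-d}\|_2^2)^{1/2}>0$, so the reduced risk is $\hat{R}^{*}(\mathbf{\theta}_{-d})=\hat{R}(\theta_1,\dots,\theta_{d-1},(1-\|\mathbf{\theta}_{-d}\|_2^2)^{1/2})$ and the chain rule gives, for $q=1,\dots,d-1$,
\[
\frac{\partial\hat{R}^{*}}{\partial\theta_q}=\frac{\partial\hat{R}}{\partial\theta_q}+\frac{\partial\hat{R}}{\partial\theta_d}\frac{\partial\theta_d}{\partial\theta_q},\qquad\frac{\partial\theta_d}{\partial\theta_q}=-\frac{\theta_q}{\theta_d}.
\]
This immediately produces the combination $\{(\cdots)-\theta_q\theta_d^{-1}(\cdots)\}_{q=1}^{d-1}$ displayed in \eqref{EQ:Shatstarmatrix}, so the whole lemma follows once each unconstrained partial derivative $\partial\hat{R}/\partial\theta_q$ is put in closed form.

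For that closed form I would express the weighted residual sum of squares in \eqref{DEF:Rhat} through the sample projection $\mathbf{P}_{\mathbf{\theta},s}=\mathbf{B}_{\mathbf{\theta},s}(\mathbf{B}_{\mathbf{\theta},s}^T\mathbf{W}_s\mathbf{B}_{\mathbf{\theta},s})^{-1}\mathbf{B}_{\mathbf{\theta},s}^T\mathbf{W}_s$ onto the cubic-spline space: the fitted vector is $\mathbf{P}_{\mathbf{\theta},s}\mathbf{y}_s$, so that, up to the fixed normalizing constant, $\hat{R}(\mathbf{\theta})$ is the $\mathbf{W}_s$-weighted squared norm of the residual $(\mathbf{I}-\mathbf{P}_{\mathbf{\theta},s})\mathbf{y}_s$. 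Differentiating this quadratic form in $\theta_q$ produces a factor $2$ together with $\dot{\mathbf{P}}_q=\partial\mathbf{P}_{\mathbf{\theta},s}/\partial\theta_q$, which I expand using $\dot{\mathbf{G}}=-\mathbf{G}(\dot{\mathbf{B}}_q^T\mathbf{W}_s\mathbf{B}_{\mathbf{\theta},s}+\mathbf{B}_{\mathbf{\theta},s}^T\mathbf{W}_s\dot{\mathbf{B}}_q)\mathbf{G}$ with $\mathbf{G}=(\mathbf{B}_{\mathbf{\theta},s}^T\mathbf{W}_s\mathbf{B}_{\mathbf{\theta},s})^{-1}$. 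The reduction to the single asymmetric term then rests on three projection identities: idempotency $\mathbf{P}_{\mathbf{\theta},s}^2=\mathbf{P}_{\mathbf{\theta},s}$; self-adjointness for the $\mathbf{W}_s$ inner product $\mathbf{P}_{\mathbf{\theta},s}^T\mathbf{W}_s=\mathbf{W}_s\mathbf{P}_{\mathbf{\theta},s}$; and $(\mathbf{I}-\mathbf{P}_{\mathbf{\theta},s})\mathbf{B}_{\mathbf{\theta},s}=\mathbf{0}$, so that every summand of $\dot{\mathbf{P}}_q$ that is headed by $\mathbf{B}_{\mathbf{\theta},s}$ is annihilated once the residual projector stands to its left. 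What survives is exactly $\mathbf{y}_s^T\dot{\mathbf{P}}_q\mathbf{y}_s=2\mathbf{y}_s^T(\mathbf{I}-\mathbf{P}_{\mathbf{\theta},s})\dot{\mathbf{B}}_q(\mathbf{B}_{\mathbf{\theta},s}^T\mathbf{W}_s\mathbf{B}_{\mathbf{\theta},s})^{-1}\mathbf{B}_{\mathbf{\theta},s}^T\mathbf{W}_s\mathbf{y}_s$ as stated.

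The remaining ingredient is the entrywise derivative $\dot{\mathbf{B}}_q=\partial\mathbf{B}_{\mathbf{\theta},s}/\partial\theta_q$. Here I would differentiate $B_{j,4}(z_{\mathbf{\theta},i})$ through the composition $z_{\mathbf{\theta},i}=F_d(\mathbf{x}_i^T\mathbf{\theta})$, giving $\partial B_{j,4}(z_{\mathbf{\theta},i})/\partial\theta_q=B_{j,4}'(z_{\mathbf{\theta},i})\,\dot{F}_d(\mathbf{x}_i^T\mathbf{\theta})\,x_{i,q}$, where $\dot{F}_d$ is read off from the explicit form of $F_d$ in \eqref{DEF:Fd} and $\mathbf{x}_i^T\mathbf{\theta}$ is the single-index value. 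The cubic B-spline derivative is supplied by the de Boor recursion (\citep{REF:de}), $B_{j,4}'=J_N\{B_{j,3}-B_{j+1,3}\}$, the factor $J_N$ being the reciprocal of the equal knot spacing $1/(J+1)$; substituting yields precisely the stated expression for $\dot{\mathbf{B}}_q$. I expect the main obstacle to be the projection-derivative simplification of the previous paragraph: one must track $\dot{\mathbf{G}}$ faithfully and verify that, after the residual projector is moved to the left, only the single $(\mathbf{I}-\mathbf{P}_{\mathbf{\theta},s})\dot{\mathbf{B}}_q$ term survives and that the factor $2$ emerges cleanly, whereas the reparametrization and the B-spline differentiation are then routine.
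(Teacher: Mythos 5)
Your proposal is correct and takes essentially the same route the paper relies on: the paper states Lemma \ref{LEM:Shatstarmatrices} without a displayed proof, and the computation it presupposes is exactly your three steps---the chain rule $\partial\theta_{d}/\partial\theta_{q}=-\theta_{q}/\theta_{d}$ for the reduced parametrization, the projector-derivative expansion (the $\mathbf{W}_{s}$-weighted analogue of (\ref{EQ:dotPp})), and the B-spline derivative formula (\ref{EQ:dotBp}). One caveat, which is a looseness in the paper's statement rather than a flaw in your argument: the factor-$2$ collapse of the two cross terms uses the self-adjointness $\mathbf{P}_{\mathbf{\theta},s}^{T}\mathbf{W}_{s}=\mathbf{W}_{s}\mathbf{P}_{\mathbf{\theta},s}$, so what differentiating the weighted residual sum of squares actually yields is $\mathbf{y}_{s}^{T}\mathbf{W}_{s}\mathbf{\dot{P}}_{q}\mathbf{y}_{s}=2\mathbf{y}_{s}^{T}\mathbf{W}_{s}\left(\mathbf{I}-\mathbf{P}_{\mathbf{\theta},s}\right)\mathbf{\dot{B}}_{q}\left(\mathbf{B}_{\mathbf{\theta},s}^{T}\mathbf{W}_{s}\mathbf{B}_{\mathbf{\theta},s}\right)^{-1}\mathbf{B}_{\mathbf{\theta},s}^{T}\mathbf{W}_{s}\mathbf{y}_{s}$ (note the leading $\mathbf{W}_{s}$ on the left factor), whereas the literal unweighted quadratic form $\mathbf{y}_{s}^{T}\mathbf{\dot{P}}_{q}\mathbf{y}_{s}$ does not reduce to a single term, since without $\mathbf{W}_{s}$ its two summands are not transposes of one another.
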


In practice, the estimation is implemented via the following
procedure.

Step 1. \textit{Standardize the auxiliary variables $\left\{ \mathbf{x}%
_{i}\right\} _{i\in U_{N}}$ and find the radius $a$ used in the
CDF transformation (\ref{DEF:Fd}) by calculating the
$100(1-\alpha)$ percentile of $\left\{ \Vert \mathbf{x}_{i}\Vert
\right\} _{i\in U_{N}}$ ($\alpha=0.01,0.05$ for example)}.

Step 2. \textit{Find the estimator $\hat{\mathbf{\theta }}$ of $\mathbf{%
\theta }_{0}$ by minimizing $\hat{R}$ in (\ref{DEF:Rhat}) through
the port optimization routine in the technical report of
\citep{REF:Gay}, with $\left( 0,0,...,1\right) ^{T}$ as the
initial value and the gradient vector $\hat{S}^{\ast }$ in
equation (\ref{EQ:Shatstarmatrix}). If $d<n$, one can take the
simple OLS estimator (after standardization) for
$\left\{y_{i},\mathbf{x}_{i}\right\}_{i\in s}$ with its last
coordinate positive.}

Step 3. \textit{Obtain the estimator $\hat{m}_{i}$ of
$m\left(\mathbf{x}_i\right)$, $i\in U_{N}$, by applying formula
(\ref{DEF:mihat}).}

Step 4. \textit{Calculate the sample design-based spline estimator
of $t_{y}$ in (\ref{DEF:tydiffhat})}.

\vskip 0.1in \noindent \textbf{Remark 4.1.} In Step 2, the number
of interior knots is
\begin{equation}
J=\min\left\{c_{1}\left[n^{1/5.5}\right], c_{2}\right\},
\label{EQ:numberofknots}
\end{equation}
where $c_{1}$ and $c_{2}$ are positive integers and $[\nu]$
denotes the integer part of $\nu$. The choice of the tuning
parameter $c_1$ makes little difference for a large sample, and
according to our asymptotic theory, there is no optimal way to set
these $c_{1}$ and $c_{2}$. I recommend using $c_{1}=1$ to save
computing for massive data sets and $c_{2}=5,...,10$ for smooth
monotonic or smooth unimodal regression as suggested by
\citep{REF:YR}.

\section{Empirical Results} \label{SEC:empirical results}
\renewcommand{\theequation}{\thesection.\arabic{equation}} %
\renewcommand{\thesubsection}{\thesection.\arabic{subsection}}
\setcounter{subsection}{0}\setcounter{corollary}{0}\setcounter{equation}{0}

In this section, empirical results are provided to demonstrate the
applicability of the proposed methodology. A computing package in
R can be downloaded from the following website:
http://lilywang.myweb.uga.edu/research.htm. Besides the spline
single-index (SIM) estimators proposed in the article, I have
obtained for comparison the performance of three other estimators:
Horvitz-Thompson estimator (HT) in equation (\ref{DEF:HT}), linear
regression estimator (LREG) without interaction terms in Chapter 6
of \citep{REF:SSW}, and spline additive estimator (AM) in
\citep{REF:BCO} with degrees $1,2$ and $3$ and adaptive knots. The
number of knots $J_{N}$ for the spline SIM estimator is selected
according to (\ref {EQ:numberofknots}).

\subsection{Simulated Population}

To illustrate the finite-sample behavior of the estimator $\hat{t}_{y,%
\text{diff}}$, some simulation results are presented. For the
superpopulation model (\ref{model}), the following six mean
functions are considered:
\[
\begin{array}{ll}
2\text{-dimension (Linear): } & m_{1}\left( \mathbf{x}\right) =x_{1}+x_{2} \\
2\text{-dimension (Quadratic): } & m_{2}\left( \mathbf{x}\right) =1+\left( x_{1}+x_{2}\right) ^{2} \\
2\text{-dimension (Bump 1): } & m_{3}\left( \mathbf{x}\right)
=x_{1}+x_{2}+4\exp \left\{ -\left(
x_{1}+x_{2}\right) ^{2}\right\}  \\
2\text{-dimension (Bump 2): } & m_{4}\left( \mathbf{x}\right)
=x_{1}+x_{2}+4\exp \left\{ -\left(
x_{1}+x_{2}\right) ^{2}\right\} +\sqrt{x_{1}^{2}+x_{2}^{2}} \\
4\text{-dimension (Sinusoid): } & m_{5}\left( \mathbf{x}\right) =\sin (\pi \mathbf{x}^{T}\mathbf{\theta }%
_{0}),\mathbf{\theta }_{0}=\left. \left( 1,1,0,1\right)
^{T}\right/ \sqrt{3}
\\
10\text{-dimension (Sinusoid): } & m_{6}\left( \mathbf{x}\right) =\sin (\pi \mathbf{x}^{T}\mathbf{\theta }%
_{0}),\mathbf{\theta }_{0}=\left. \left( 1,1,0,...,0,1\right)
^{T}\right/ \sqrt{3}
\end{array}
\]
These represent various correct and incorrect single-index model
specifications. Function $m_{1}$ is a simple linear additive
function with two auxiliary variables, and it is also a linear
single-index function; Functions $m_{2}$, $m_{3}$, $m_{5}$ and
$m_{6}$ are some very common single-index models, but unlike
$m_{1}$, they are not additive so that the purely linear or
additive model would be misspecified. Function $m_{4}$ is neither
a genuine single-index nor a genuine additive function so that any
of the above models would be misspecified. However, because the
single-index model in this article is identified by the best
approximation (see equation (\ref{model}))
to the multivariate mean function, the estimator $\hat{t}_{y,%
\text{diff}}$ is expected to be robust in this case.

The auxiliary vector $\left\{\mathbf{x}_{i}\right\}_{i\in U_{N}}$
is generated from i.i.d uniform $(0,1)$ random vectors. The
population values $y_{i}$'s are generated from the mean functions
by adding i.i.d $N\left( 0,\sigma ^{2}\right) $ errors with
$\sigma =0.1$ and $0.4$. The population is of size $N=1000$.
Samples are generated by simple random sampling using sample size $%
n_{N}=50, 100$ and $200$. For each combination of mean function, standard deviation and sample size, $%
1000$ replicates are selected from the same population, the
estimators are calculated, and the design bias, design variance
and the design mean squared errors are estimated.

Table \ref{TAB:AMSE-theta} lists the average mean squared errors
(AMSE) of the spline estimators $\hat{\mathbf{\theta}}$ in
(\ref{DEF:thetahat}) based on $d$ dimensions
\begin{equation}
\text{AMSE}\left(\hat{\mathbf{\theta}}\right)= \frac{1}{d}\sum_{q=1}^{d}%
\text{MSE}\left(\hat{\mathbf{\theta}}_{q}\right),
\label{DEF:AMSE-theta}
\end{equation}
from which one sees that, even for small sample size, the
estimators $\hat{\mathbf{\theta}}$ are very accurate for all the
population models, and the precision is improved when sample size
$n_{N}$ increases.

In terms of the design biases, the percent relative design biases
\[
\left\{E_{p}[\hat{t}_{y,diff}]-t_{y}\right\}/{t_{y}}\times 100\%
\]
defined in \citep{REF:BO} have been measured for all the above
models. It is found that the relative design biases of the SIM
estimators are quite small (less than one percent for all cases in
the simulation) even for sample size $n_{N}=50$.

\setlength{\unitlength}{1cm}
\begin{table}
\small \caption{AMSE of the spline estimators
$\hat{\mathbf{\theta}}$ defined in (\ref{DEF:AMSE-theta}).$^{\rm
a}$}
{\begin{tabular*}{1.0\textwidth}{@{\extracolsep{\fill}}cccccccc}
\hline \hline%
$\sigma$ & $n_{N}$ & 1 & 2 & 3 & 4 & 5 & 6 \\ \hline
\multirow{3}{*}{$0.1$} & $50$ & $0.00076$ & $0.0005$ & $0.00027$ & $0.00157$ & $0.00291$ & $0.00482$ \\
& $100$ & $0.0004$ & $0.00027$ & $0.00014$ & $0.00071$ & $0.0013$ & $0.00226$ \\
& $200$ & $0.00019$ & $0.00016$ & $0.00008$ & $0.00038$ & $0.0007$ & $0.00124$ \\
\hline%
\multirow{3}{*}{$0.4$} & $50$ & $0.01732$ & $0.00326$ & $0.00504$ & $0.01981$ & $0.00958$ & $0.01427$ \\
& $100$ & $0.00819$ & $0.00177$ & $0.00257$ & $0.00879$ & $0.00453$ & $0.00696$ \\
& $200$ & $0.00423$ & $0.00089$ & $0.00129$ & $0.00398$ & $0.00233$ & $0.00372$ \\
\hline%
\end{tabular*}}
\footnotesize{$^{\rm a}$Based on $1000$ replications of $n_{N}$
simple random samples from population of size
$N=1000$.}\label{TAB:AMSE-theta}
\end{table}

Table \ref{TAB:ratio} shows the ratios of design mean squared
errors (MSE) for HT, LREG and AM estimators to the MSE for the
proposed spline SIM estimator. From this table, one sees that the
model-assisted estimators, LREG, AM and SIM estimators, perform
much better than the simple HT estimators regardless the type of
mean function and standard error. For $m_{1}$, LREG is expected to
be the preferred estimator, since the assumed model is correctly
specified. The AM and SIM estimators have similar behavior in this
case, and the MSE ratios of AM to SIM are close to $1$. However,
not much efficiency lost by using SIM and AM instead of LREG. The
MSE ratios of LREG to SIM are at least 0.78 for all cases. For the
rest of the population, the SIM estimators perform consistently
better than LREG and AM estimators because the interactions
between the auxiliary variables have been completely ignored for
LREG and AM estimators. For $m_{4}$, it is not a genuine
single-index function, but SIM estimators are still much more
accurate than HT, LREG and AM estimators, confirmative to the
theory that the proposed estimators are robust against the
deviation from single-index model.

To see how fast the computation is, Table \ref{TAB:ratio} provides
the average time (based on $1000$ replications) of obtaining the
SIM estimators on an ordinary PC with Intel Pentium IV 1.86 GHz
processor and 1.0 GB RAM. It shows that the proposed SIM
estimation is extremely fast. For instance, for Model $6$, the SIM
estimation of a $10$-dimensional sample of size $200$ takes on
average $0.23$ second. I have also carried out the simulation with
sample size $n_{N}=5000$ generated from the population of size
$50000$. Remarkably, it takes on average less than $8$ seconds to
get the SIM estimators for all the above models.

\setlength{\unitlength}{1cm}
\begin{table}
\small \caption{Ratio of MSE of the HT, LREG and additive
model-assisted estimators (AM) to the single-index model-assisted
estimators (SIM) and the average computing time of the SIM. $^{\rm
a}$}
{\begin{tabular*}{1.0\textwidth}{@{\extracolsep{\fill}}ccc|rrrrr|c}
\hline \hline \multirow{3}{*}{\textbf{Model}} &
\multirow{3}{*}{$\sigma$} &
\multirow{3}{*}{$n_{N}$} & \multicolumn{5}{c|}{\textbf{MSE Ratio}} & {\textbf{Time of SIM}} \\
\cline{4-8}
& & & \multirow{2}{*}{\textbf{HT}} & \multirow{2}{*}{\textbf{LREG}} & \multicolumn{3}{c|}{\textbf{AM}} & {\textrm{%
(seconds)}} \\ 
& & & & & degree$=1$ & degree$=2$ & degree$=3$ & \\%
\hline%
\multirow{6}{*}{$1$} &
\multirow{3}{*}{$0.1$} & $50$  & $12.40$ & $0.78$ & $0.97$ & $1.13$ & $3.05$ & $0.11$\\
&  & $100$ & $14.10$ & $0.88$ & $0.97$ & $0.96$ & $1.02$ & $0.12$\\
&  & $200$ & $14.29$ & $0.90$ & $0.92$ & $0.92$ & $0.95$ & $0.17$\\
& \multirow{3}{*}{$0.4$} & $50$ & $1.15$ & $0.82$ & $1.02$ & $1.19$ & $3.93$ & $0.14$\\
&  & $100$ & $1.74$ & $0.92$ & $1.01$ & $1.01$ & $1.07$ & $0.13$\\
&  & $200$ & $1.72$ & $0.95$ & $0.98$ & $0.97$ & $1.01$ & $0.18$\\
\hline%
\multirow{6}{*}{$2$} & \multirow{3}{*}{$0.1$} & $50$ &
$35.32$ & $2.75$ & $2.63$ & $3.15$ & $7.32$ & $0.11$\\
&  & $100$ & $41.46$ & $3.48$ & $2.67$ & $2.84$ & $3.19$ & $0.12$\\
&  & $200$ & $43.78$ & $4.44$ & $2.75$ & $2.75$ & $2.82$ & $0.18$\\
& \multirow{3}{*}{$0.4$} & $50$  & $4.03$ & $1.03$ & $1.17$ & $1.36$ & $4.11$ & $0.11$\\
&  & $100$ & $4.62$ & $1.20$ & $1.18$ & $1.20$ & $1.31$ & $0.12$\\
&  & $200$ & $4.65$ & $1.35$ & $1.17$ & $1.16$ & $1.20$ & $0.18$\\
\hline%
\multirow{6}{*}{$3$} &
\multirow{3}{*}{$0.1$} & $50$ & $34.47$ & $4.24$ & $4.72$ & $5.75$ & $15.53$ & $0.16$\\
&  & $100$ & $39.23$ & $5.38$ & $4.87$ & $5.14$ & $5.56$ & $0.19$\\
&  & $200$ & $42.20$ & $6.82$ & $5.22$ & $5.30$ & $5.36$ & $0.32$\\
& \multirow{3}{*}{$0.4$} & $50$ & $2.98$ & $1.09$ & $1.30$ & $1.53$ & $5.28$ & $0.16$\\
&  & $100$ & $3.36$ & $1.23$ & $1.26$ & $1.28$ & $1.38$ & $0.19$\\
&  & $200$ & $3.80$ & $1.42$ & $1.30$ & $1.30$ & $1.33$ & $0.30$\\
\hline%
\multirow{6}{*}{$4$} &
\multirow{3}{*}{$0.1$} & $50$ & $10.13$ & $2.88$ & $2.68$ & $3.24$ & $8.17$ & $0.16$\\
&  & $100$ & $11.01$ & $3.50$ & $2.57$ & $2.68$ & $2.86$ & $0.17$\\
&  & $200$ & $12.67$ & $4.63$ & $2.83$ & $2.86$ & $2.88$ & $0.27$\\
& \multirow{3}{*}{$0.4$} & $50$ & $1.59$ & $1.03$ & $1.19$ & $1.40$ & $4.72$ & $0.17$\\
&  & $100$ & $1.80$ & $1.16$ & $1.14$ & $1.15$ & $1.22$ & $0.19$\\
&  & $200$ & $2.10$ & $1.34$ & $1.16$ & $1.16$ & $1.19$ & $0.30$\\
\hline%
\multirow{6}{*}{$5$} &
\multirow{3}{*}{$0.1$} & $50$ & $18.73$ & $3.51$ & $5.56$ & $8.83$ & $10.79$ & $0.15$\\
&  & $100$ & $25.16$ & $4.42$ & $4.82$ & $5.43$ & $6.67$ & $0.14$\\
&  & $200$ & $29.41$ & $4.97$ & $4.64$ & $4.82$ & $5.02$ & $0.21$\\
& \multirow{3}{*}{$0.4$} & $50$ & $2.54$ & $1.11$ & $1.97$ & $3.43$ & $17.4$ & $0.12$\\
&  & $100$ & $3.08$ & $1.28$ & $1.53$ & $1.63$ & $1.95$ & $0.14$\\
&  & $200$ & $3.43$ & $1.30$ & $1.39$ & $1.41$ & $1.47$ & $0.21$\\
\hline%
\multirow{6}{*}{$6$} &
\multirow{3}{*}{$0.1$} & $50$ & $8.33$ & $1.63$ & $9.86$ & $7.20$ & $5.26$ & $0.15$\\
&  & $100$ & $13.39$ & $2.21$ & $4.41$ & $5.73$ & $10.53$ & $0.15$\\
&  & $200$ & $18.55$ & $2.90$ & $3.22$ & $3.56$ & $4.09$ & $0.23$\\
& \multirow{3}{*}{$0.4$} & $50$ & $2.08$ & $0.99$ & $6.16$ & $4.59$ & $3.11$ & $0.17$\\
&  & $100$ & $2.63$ & $1.09$ & $2.15$ & $3.01$ & $5.06$ & $0.16$\\
&  & $200$ & $3.20$ & $1.21$ & $1.44$ & $1.59$ & $1.82$ & $0.23$\\
\hline%
\end{tabular*}}
\footnotesize{$^{\rm a}$ Based on $1000$ replications of simple
random sampling from population of size
$N=1000$.}\label{TAB:ratio}
\end{table}

\subsection{MU281 data}

The MU284 data set from Appendix B of \citep{REF:SSW} contains
data about Swedish municipalities. The study variable $y$ is
RMT$85\times 10^{-3}$, where RMT85 is municipal tax receipts in
1985. Two auxiliary variables $x_{1}$ (CS82) and $x_{2}$ (SS82)
are used, where $x_{1}$ is the number of Conservative Party seats
in the municipal council, and $x_{2}$ is the number of Social
Democrat Party seats. The largest three cities according to the
variable population in 1975 (pop75) are discarded because they are
huge outliers and would be treated separately in practice. The
population total of $N=281$ Swedish Municipalities, $t_{y}$, is
found to be $53.1510$. The ``oracle'' estimator
$\tilde{\mathbf{\theta}}$ (see (\ref{DEF:thetatilde})) at the
population level is found to be $\left(0.8412,0.5406\right)^{T}$.

A Monte Carlo simulation is carried out in which $1000$ repeated
SRS samples (each with $n=50$ and $100$) are drawn from the MU281
population of Swedish municipalities. To demonstrate the closeness
of the spline estimator $\hat{\mathbf{\theta }}$ to the ``oracle''
index parameter $\tilde{\mathbf{\theta }}$, Table
\ref{TAB:thetahat(MU281)} lists the sample mean (MEAN), design
bias (BIAS), design standard deviation (SD), the design mean
squared error (MSE) and the AMSE in (\ref{DEF:AMSE-theta}) of
$\hat{\mathbf{\theta }}$. From this table, one sees that the
sample-based estimators $\hat{\mathbf{\theta}}$ are very accurate
even for sample size $50$. As what is expected, when the sample
size increases, the coefficient is more accurately estimated.
Table \ref{TAB:tyhat(MU281)} shows the performance of the HT,
LREG, AM and SIM estimators of $t_{y}$. One sees from this table
that the model-assisted estimators are much more accurate than the
simple HT estimators. Among all the model-assisted estimators, the
spline SIM estimators are better than other estimators in terms of
the MSE.

\setlength{\unitlength}{1cm}
\begin{table}
\small
\caption{Spline estimators $\hat{\mathbf{\theta}}$ on MU281 Data
$^{\rm a}$}
{\begin{tabular*}{1.0\textwidth}{@{\extracolsep{\fill}}cc|rrrrr}
\hline\hline%
$n_{N}$ & $\mathbf{\theta}$ & MEAN & BIAS & SD & MSE & AMSE \\
\hline%
\multirow{2}{*}{$50$} & $\mathbf{\theta}_{1}$ & $0.8343$ & $-0.0069$ & $0.0643$ & $0.0468$ & \multirow{2}{*}{0.0507}  \\
& $\mathbf{\theta}_{2}$ & $0.5395$ & $-0.0013$ & $0.0940$ &
$0.0546$ \\ \hline%
\multirow{2}{*}{$100$} & $\mathbf{\theta}_{1}$ & $0.8412$ & $0.0001$ & $0.0359$ & $0.0465$ & \multirow{2}{*}{0.0480} \\
& $\mathbf{\theta}_{2}$ & $0.5365$ & $-0.0042$ & $0.0563$ & $0.0495$ & \\
\hline%
\end{tabular*}}
\footnotesize{$^{\rm a}$ Based on $1000$ replications of simple
random sampling from population of $N=281$ Swedish
Municipalities.}\label{TAB:thetahat(MU281)}
\end{table}

Table \ref{TAB:tyhat(MU281)} shows the performance of the HT,
LREG, AM and SIM estimators of $t_{y}$. One sees from this table
that the model-assisted estimators are much more accurate than the
simple HT estimators. Among all the model-assisted estimators, the
spline SIM estimators are better than other estimators in terms of
the MSE.

\setlength{\unitlength}{1cm}
\begin{table}
\small
\caption{Estimators of $t_{y}$ on MU281 Data $^{\rm a}$}
{\begin{tabular*}{1.0\textwidth}{@{\extracolsep{\fill}}cl|rrrr}
\hline\hline%
$n_{N}$ & Estimator & MEAN & BIAS & SD & MSE \\ \hline%
\multirow{5}{*}{$50$} & HT & $53.0522$ & $-0.0988$ & $7.1936$ & $51.7051$ \\
& LREG & $52.7476$ & $-0.4034$ & $3.6387$ & $13.3893$ \\
& AM (degree $=1$) & $52.7784$ & $ -0.3726$ & $3.8640$ & $15.0539$ \\
& AM (degree $=2$) & $53.1789$ & $0.0279$ & $3.8243$ & $14.6113$ \\
& AM (degree $=3$) & $53.2588$ & $ 0.1078$ & $4.2464$ & $18.0254$ \\
& SIM & $52.9247$ & $-0.2264$ & $3.4645$ & $12.0416$ \\ \hline%
\multirow{5}{*}{$100$} & HT & $53.2739$ & $ 0.1229$ & $4.5926$ & $21.0859$ \\
& LREG & $52.9646$ & $-0.1864$ & $2.3892$ & $5.7373$ \\
& AM (degree $=1$) & $53.0125$ & $-0.1385$ & $2.4677$ & $6.1026$ \\
& AM (degree $=2$) & $53.1217$ & $-0.0293$ & $2.3538$ & $5.5359$ \\
& AM (degree $=3$) & $53.1197$ & $-0.0313$ & $2.3936$ & $5.7248$ \\
& SIM & $53.0791$ & $-0.0719$ & $2.3377$ & $5.4646$ \\
\hline%
\end{tabular*}}
\footnotesize{$^{\rm a}$ Based on $1000$ replications of simple
random sampling from population of $N=281$ Swedish
Municipalities.}\label{TAB:tyhat(MU281)}
\end{table}

\section*{Appendix A. Preliminaries}
\renewcommand{\thetheorem}{{\sc A.\arabic{theorem}}} \renewcommand{%
\theproposition}{{\sc A.\arabic{proposition}}}
\renewcommand{\thelemma}{{\sc
A.\arabic{lemma}}} \renewcommand{\thecorollary}{{\sc
A.\arabic{corollary}}} \renewcommand{\theequation}{A.\arabic{equation}} %
\renewcommand{\thesubsection}{{A.\arabic{subsection}}} %
\setcounter{equation}{0} \setcounter{lemma}{0} \setcounter{proposition}{0} %
\setcounter{theorem}{0}
\setcounter{subsection}{0}\setcounter{corollary}{0}

Let matrices
\begin{equation}
\mathbf{V}_{\mathbf{\theta }}=\frac{1}{N}\mathbf{B}_{%
\mathbf{\theta }}^{T}\mathbf{B}_{\mathbf{\theta }}, \rm{ } \mathbf{D}_{\mathbf{%
\theta }}=\frac{1}{N}\mathbf{B}_{\mathbf{\theta }}^{T}\mathbf{y}.
\label{DEF:Vtheta}
\end{equation}
The following lemma provides the uniform upper bound of $\left\| \mathbf{V}%
_{\mathbf{\theta }}^{-1}\right\| _{\infty }$.

\begin{lemma}
\label{LEM:ninvBB} Under Assumptions (A1)-(A4), there exist
constants $0<c_{V}<C_{V}$ such that with $\xi$-probability 1
\[
c_{V}J_{N}^{-1}\left\| \mathbf{w}\right\| _{2}^{2}\mathbf{\leq w}^{T}\mathbf{V}%
_{\mathbf{\theta }}\mathbf{w}\leq C_{V}J_{N}^{-1}\left\|
\mathbf{w}\right\| _{2}^{2}.
\]
Consequently, there exists a constant $C>0$ such that with
$\xi$-probability 1
\begin{equation}
\sup\limits_{\mathbf{\theta }\in S_{c}^{d-1}}\left\| \mathbf{V}_{\mathbf{%
\theta }}^{-1}\right\| _{\infty }\leq CJ_{N}. \label{EQ:ninvBB}
\end{equation}
\end{lemma}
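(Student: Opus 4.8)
The plan is to compare the empirical Gram matrix $\mathbf{V}_{\mathbf{\theta}}$ with its $\xi$-population counterpart and to exploit the banded structure inherent to B-spline Gram matrices. First I would rewrite the quadratic form as an empirical $L^2$ norm: for any coefficient vector $\mathbf{w}=(w_{-3},\dots,w_{J})^{T}$, writing $g_{\mathbf{w}}(\cdot)=\sum_{j}w_{j}B_{j,4}(\cdot)$,
\[
\mathbf{w}^{T}\mathbf{V}_{\mathbf{\theta}}\mathbf{w}=\frac{1}{N}\left\Vert \mathbf{B}_{\mathbf{\theta}}\mathbf{w}\right\Vert _{2}^{2}=\frac{1}{N}\sum_{i\in U_{N}}g_{\mathbf{w}}\left(z_{\mathbf{\theta}i}\right)^{2}.
\]
Its $\xi$-expectation is $\mathbf{w}^{T}\mathbf{G}_{\mathbf{\theta}}\mathbf{w}$, where $\mathbf{G}_{\mathbf{\theta}}=\{E_{\xi}[B_{j,4}(Z_{\mathbf{\theta}})B_{k,4}(Z_{\mathbf{\theta}})]\}_{j,k}$ is the theoretical Gram matrix with respect to the density of $Z_{\mathbf{\theta}}$. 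By Remark 2.2 and Assumption (A1), this density is bounded above and below by positive constants on $[0,1]$, uniformly in $\mathbf{\theta}$; combined with the classical de Boor $L^{2}$-equivalence $\int_{0}^{1}g_{\mathbf{w}}^{2}\asymp J_{N}^{-1}\left\Vert \mathbf{w}\right\Vert _{2}^{2}$ for B-splines on equally spaced knots, this yields $c_{G}J_{N}^{-1}\left\Vert \mathbf{w}\right\Vert _{2}^{2}\leq \mathbf{w}^{T}\mathbf{G}_{\mathbf{\theta}}\mathbf{w}\leq C_{G}J_{N}^{-1}\left\Vert \mathbf{w}\right\Vert _{2}^{2}$ for constants $0<c_{G}<C_{G}$ independent of $\mathbf{\theta}$.

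Next I would show the empirical and theoretical Gram matrices are uniformly close. Because each $B_{j,4}$ is supported on at most five consecutive knot intervals, both $\mathbf{V}_{\mathbf{\theta}}$ and $\mathbf{G}_{\mathbf{\theta}}$ are banded with fixed bandwidth, so $\left\Vert \mathbf{V}_{\mathbf{\theta}}-\mathbf{G}_{\mathbf{\theta}}\right\Vert _{2}$ is controlled, up to a fixed factor, by the largest entrywise deviation. Each entry is an average of i.i.d.\ bounded terms $B_{j,4}(z_{\mathbf{\theta}i})B_{k,4}(z_{\mathbf{\theta}i})\in[0,1]$ whose $\xi$-variance is of order $J_{N}^{-1}$, owing to the short support. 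A Bernstein inequality bounds the deviation of a single entry, and a union bound over the $O(J_{N})$ nonzero bands, together with a Lipschitz-in-$\mathbf{\theta}$ estimate for the entries (the $B_{j,4}$ are Lipschitz and $z_{\mathbf{\theta}i}=F_{d}(\mathbf{x}_{i}^{T}\mathbf{\theta})$ is smooth in $\mathbf{\theta}$) over a polynomial-size covering of the parameter space, upgrades this to a uniform bound. Under Assumption (A4), $N/J_{N}\gg N^{4/5}\to\infty$, so the tail probabilities for deviations of size $o(J_{N}^{-1})$ remain summable in $N$ even after the polynomial union-bound factors, and Borel--Cantelli gives $\sup_{\mathbf{\theta}}\left\Vert \mathbf{V}_{\mathbf{\theta}}-\mathbf{G}_{\mathbf{\theta}}\right\Vert _{2}=o(J_{N}^{-1})$ with $\xi$-probability $1$. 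Adding this to the previous paragraph establishes the two-sided quadratic-form bound, hence $c_{V}J_{N}^{-1}\leq\lambda_{\min}(\mathbf{V}_{\mathbf{\theta}})\leq\lambda_{\max}(\mathbf{V}_{\mathbf{\theta}})\leq C_{V}J_{N}^{-1}$.

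Finally, to pass from the spectral bound to the $\ell_{\infty}$ bound (\ref{EQ:ninvBB}), I would invoke the geometric decay of inverses of banded matrices (of Demko type): since $\mathbf{V}_{\mathbf{\theta}}$ is banded with fixed bandwidth and uniformly bounded spectral condition number $C_{V}/c_{V}$, its inverse satisfies $|(\mathbf{V}_{\mathbf{\theta}}^{-1})_{jk}|\leq CJ_{N}q^{|j-k|}$ for some $q\in(0,1)$ and $C$ depending only on $c_{V},C_{V}$ and the bandwidth, the factor $J_{N}$ arising from $\left\Vert \mathbf{V}_{\mathbf{\theta}}^{-1}\right\Vert _{2}=\lambda_{\min}^{-1}\leq c_{V}^{-1}J_{N}$. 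Summing the geometric series along each row gives $\left\Vert \mathbf{V}_{\mathbf{\theta}}^{-1}\right\Vert _{\infty}=\max_{j}\sum_{k}|(\mathbf{V}_{\mathbf{\theta}}^{-1})_{jk}|\leq CJ_{N}\sum_{m}q^{|m|}\leq C^{\prime}J_{N}$, uniformly in $\mathbf{\theta}$. I expect the main obstacle to be the uniform-in-$\mathbf{\theta}$ concentration step: one must control the growing number of Gram entries simultaneously while the spline dimension $J_{N}$ diverges, and verify that the Bernstein tails survive both the band-count and the $\mathbf{\theta}$-covering union bounds and remain summable under the rate in Assumption (A4). The banded structure is precisely what keeps this manageable, reducing a matrix of growing size to $O(J_{N})$ scalar concentration statements.
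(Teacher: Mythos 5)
Your proposal is correct, and its core coincides with what the paper actually does: the paper's entire ``proof'' is a one-line citation of Theorem 5.4.2 of DeVore and Lorentz (the B-spline $L^{p}$-stability/norm-equivalence theorem) together with Assumptions (A1)--(A4), and your first paragraph --- the quasi-uniform density of $Z_{\mathbf{\theta}}$ from (A1)/Remark 2.2 combined with the de Boor--DeVore--Lorentz equivalence $\int_{0}^{1}g_{\mathbf{w}}^{2}\asymp J_{N}^{-1}\left\Vert \mathbf{w}\right\Vert _{2}^{2}$ --- is exactly that ingredient. Where you go beyond the paper is in making rigorous the two steps that ``follows directly'' conceals. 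First, the lemma concerns the \emph{empirical} Gram matrix $\mathbf{V}_{\mathbf{\theta}}=N^{-1}\mathbf{B}_{\mathbf{\theta}}^{T}\mathbf{B}_{\mathbf{\theta}}$, not the population one, so one must pass from the continuous $L^{2}$ norm to the average over the realized $z_{\mathbf{\theta}i}$, uniformly over $\mathbf{\theta}\in S_{c}^{d-1}$ and almost surely under $\xi$; your Bernstein-plus-banding-plus-covering argument with Borel--Cantelli, using $N/J_{N}\gg N^{4/5}$ from (A4) so the exponential tails dominate the polynomial union-bound factors, is the standard and correct way to do this. Second, the conclusion (\ref{EQ:ninvBB}) is an $\ell_{\infty}$ (row-sum) bound on $\mathbf{V}_{\mathbf{\theta}}^{-1}$, which does \emph{not} follow from the eigenvalue bound alone --- the naive estimate $\left\Vert \mathbf{V}_{\mathbf{\theta}}^{-1}\right\Vert _{\infty}\leq\sqrt{J_{N}+4}\left\Vert \mathbf{V}_{\mathbf{\theta}}^{-1}\right\Vert _{2}$ loses a factor $\sqrt{J_{N}}$ --- so your invocation of Demko-type geometric decay for inverses of banded matrices with uniformly bounded condition number is genuinely needed, and it is the right tool (it is also how the cited spline-asymptotics literature, e.g.\ Huang (2003) and Wang and Yang (2007), handles this point). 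In short: same skeleton as the paper, but you have correctly identified and filled the two nontrivial gaps that the paper's citation-only proof leaves to the reader.
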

The result follows directly from Theorem 5.4.2 of \citep{REF:DL}
and Assumptions (A1)-(A4), thus omitted.

In the following, let $S_{c}^{d-1}$ be a cap shape subset of
$S_{+}^{d-1}$,
\[
S_{c}^{d-1}=\left\{ \left( \theta _{1},...,\theta _{d}\right)
|\sum_{q=1}^{d}\theta _{q}^{2}=1,\theta _{d}\geq c\right\} ,c\in
\left( 0,1\right) .
\]
Clearly, for an appropriate choice of $c$, $\mathbf{\theta
}_{0}\in S_{c}^{d-1}$, which I assume in the rest of the article.

\begin{lemma}
\label{LEM:phihat-phitilde} Under Assumptions (A1)-(A5), for $k=0,1,2$%
\begin{equation}
\sup\limits_{\mathbf{\theta }\in S_{c}^{d-1}}\sup_{z\in \left[
0,1\right]
}\left| \frac{\partial ^{k}}{\partial \mathbf{\theta }^{k}}\hat{\varphi}_{%
\mathbf{\theta }}\left( z\right) -\frac{\partial ^{k}}{\partial \mathbf{%
\theta }^{k}}\tilde{\varphi}_{\mathbf{\theta }}\left( z\right)
\right| =O_{p}\left( J_{N}^{k}/n_{N}^{1/2}\right) ,
\label{EQ:phihat-tilde-derivative}
\end{equation}
where $\tilde{\varphi}_{\mathbf{\theta }}$ and $\hat{\varphi}_{\mathbf{%
\theta }}$ are given in (\ref{DEF:mthetatilde}) and
(\ref{DEF:mthetahat}).
\end{lemma}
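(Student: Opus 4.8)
The plan is to represent each cubic-spline smoother as a linear functional of a Gram matrix and a cross-product vector, and then to recognize the sample-based version as a Horvitz--Thompson estimator of the population-based version. Writing $\mathbf{b}(z)=\{B_{j,4}(z)\}_{j=-3}^{J}$ and recalling $\mathbf{V}_{\mathbf{\theta}}=N^{-1}\mathbf{B}_{\mathbf{\theta}}^{T}\mathbf{B}_{\mathbf{\theta}}$, $\mathbf{D}_{\mathbf{\theta}}=N^{-1}\mathbf{B}_{\mathbf{\theta}}^{T}\mathbf{y}$ from (\ref{DEF:Vtheta}), the population smoother (\ref{DEF:mthetatilde}) is $\tilde{\varphi}_{\mathbf{\theta}}(z)=\mathbf{b}(z)^{T}\mathbf{V}_{\mathbf{\theta}}^{-1}\mathbf{D}_{\mathbf{\theta}}$. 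Setting $\hat{\mathbf{V}}_{\mathbf{\theta}}=N^{-1}\mathbf{B}_{\mathbf{\theta},s}^{T}\mathbf{W}_{s}\mathbf{B}_{\mathbf{\theta},s}$ and $\hat{\mathbf{D}}_{\mathbf{\theta}}=N^{-1}\mathbf{B}_{\mathbf{\theta},s}^{T}\mathbf{W}_{s}\mathbf{y}_{s}$, the sample smoother (\ref{DEF:mthetahat}) is $\hat{\varphi}_{\mathbf{\theta}}(z)=\mathbf{b}(z)^{T}\hat{\mathbf{V}}_{\mathbf{\theta}}^{-1}\hat{\mathbf{D}}_{\mathbf{\theta}}$. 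Because $\mathbf{W}_{s}$ carries the factors $\pi_i^{-1}$ and $E_{p}(I_i)=\pi_i$, each entry of $\hat{\mathbf{V}}_{\mathbf{\theta}}$ and $\hat{\mathbf{D}}_{\mathbf{\theta}}$ is design-unbiased for the matching entry of $\mathbf{V}_{\mathbf{\theta}}$ and $\mathbf{D}_{\mathbf{\theta}}$, so the entire difference $\hat{\varphi}_{\mathbf{\theta}}-\tilde{\varphi}_{\mathbf{\theta}}$ is driven only by the sampling design $p$, the finite population being held fixed.

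For $k=0$ I would start from the perturbation identity
\[
\hat{\mathbf{V}}_{\mathbf{\theta}}^{-1}\hat{\mathbf{D}}_{\mathbf{\theta}}-\mathbf{V}_{\mathbf{\theta}}^{-1}\mathbf{D}_{\mathbf{\theta}}=\hat{\mathbf{V}}_{\mathbf{\theta}}^{-1}\left\{\left(\hat{\mathbf{D}}_{\mathbf{\theta}}-\mathbf{D}_{\mathbf{\theta}}\right)-\left(\hat{\mathbf{V}}_{\mathbf{\theta}}-\mathbf{V}_{\mathbf{\theta}}\right)\mathbf{V}_{\mathbf{\theta}}^{-1}\mathbf{D}_{\mathbf{\theta}}\right\},
\]
and observe that, since $\mathbf{V}_{\mathbf{\theta}}^{-1}\mathbf{D}_{\mathbf{\theta}}$ is exactly the vector of population spline coefficients, the $j$-th entry of the brace collapses to the Horvitz--Thompson error
\[
\frac{1}{N}\sum_{i\in U_{N}}\left(\pi_i^{-1}I_i-1\right)B_{j,4}\left(z_{\mathbf{\theta}i}\right)\left\{y_i-\tilde{\varphi}_{\mathbf{\theta}}\left(z_{\mathbf{\theta}i}\right)\right\}.
\]
This makes the cancellation of the superpopulation signal explicit: only the population-spline residuals enter. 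I would then bound its design variance through the Horvitz--Thompson variance formula, using Assumption (A5) to keep $\min_i\pi_i$ away from zero and $n_N|\pi_{ij}-\pi_i\pi_j|$ bounded, and Assumptions (A1)--(A3) together with the compact support and partition-of-unity property $\sum_j B_{j,4}\equiv1$ of the order-$4$ B-splines to control the summands. Premultiplying by $\hat{\mathbf{V}}_{\mathbf{\theta}}^{-1}$, whose $\ell_\infty$ norm is $O(J_N)$ by Lemma \ref{LEM:ninvBB} (the same bound holding for $\hat{\mathbf{V}}_{\mathbf{\theta}}^{-1}$ once $\hat{\mathbf{V}}_{\mathbf{\theta}}$ is shown uniformly close to $\mathbf{V}_{\mathbf{\theta}}$), and reading off the function value via $\|\mathbf{b}(z)\|_1=1$, delivers the claimed order for $k=0$.

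For $k=1,2$ I would differentiate $\hat{\varphi}_{\mathbf{\theta}}(z)=\mathbf{b}(z)^{T}\hat{\mathbf{V}}_{\mathbf{\theta}}^{-1}\hat{\mathbf{D}}_{\mathbf{\theta}}$ in $\mathbf{\theta}$ by the product rule together with $\partial_{\theta_q}\hat{\mathbf{V}}_{\mathbf{\theta}}^{-1}=-\hat{\mathbf{V}}_{\mathbf{\theta}}^{-1}(\partial_{\theta_q}\hat{\mathbf{V}}_{\mathbf{\theta}})\hat{\mathbf{V}}_{\mathbf{\theta}}^{-1}$, and repeat the same Horvitz--Thompson variance estimates on the differentiated summands. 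The essential bookkeeping is that every $\mathbf{\theta}$-derivative of $\mathbf{B}_{\mathbf{\theta}}$ carries an explicit factor $J_N$, as exhibited in Lemma \ref{LEM:Shatstarmatrices} through $\dot{\mathbf{B}}_q=\{J_N(B_{j,3}-B_{j+1,3})\dot{F}_d\,x_{i,q}\}$; hence each order of differentiation multiplies the previous rate by $J_N$, yielding $O_p(J_N^{k}/n_N^{1/2})$.

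The hard part will be upgrading these pointwise-in-$(\mathbf{\theta},z)$ bounds to the uniform bound over $\mathbf{\theta}\in S_c^{d-1}$ and $z\in[0,1]$ that the lemma asserts, because the sampling indicators $I_i$ are common to all $\mathbf{\theta}$ while the basis $\mathbf{b}(z_{\mathbf{\theta}i})$ and its derivatives vary with $\mathbf{\theta}$. I would control the $\mathbf{\theta}$-increments of the relevant design-empirical processes using the Lipschitz continuity of $z_{\mathbf{\theta}i}=F_d(\mathbf{x}_i^{T}\mathbf{\theta})$ in $\mathbf{\theta}$ and the derivative bounds on the B-spline basis, and then run a chaining argument over a cover of the compact set $S_c^{d-1}\times[0,1]$, with the knot rate in Assumption (A4) ensuring the covering-number inflation is absorbed. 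Assumption (A6) plays no role here; it is needed only for the fourth-order design moments used later in the central limit theorem.
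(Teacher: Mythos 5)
Your architecture is essentially the paper's own: the paper likewise puts both smoothers in Gram form, with $\mathbf{V}_{\mathbf{\theta},\pi}=N^{-1}\mathbf{B}_{\mathbf{\theta},s}^{T}\mathbf{W}_{s}\mathbf{B}_{\mathbf{\theta},s}$ and $\mathbf{D}_{\mathbf{\theta},\pi}=N^{-1}\mathbf{B}_{\mathbf{\theta},s}^{T}\mathbf{W}_{s}\mathbf{y}_{s}$ playing the role of your $\hat{\mathbf{V}}_{\mathbf{\theta}}$, $\hat{\mathbf{D}}_{\mathbf{\theta}}$, linearizes the difference $\mathbf{B}^{T}(z)\mathbf{V}_{\mathbf{\theta},\pi}^{-1}\mathbf{D}_{\mathbf{\theta},\pi}-\mathbf{B}^{T}(z)\mathbf{V}_{\mathbf{\theta}}^{-1}\mathbf{D}_{\mathbf{\theta}}$ (by Taylor expansion in the entries $v_{\mathbf{\theta},\pi,jj^{\prime}}$ and $d_{\mathbf{\theta},\pi,j}$ rather than your exact identity), reduces the linear terms to Horvitz--Thompson error sums weighted by $(I_{i}/\pi_{i}-1)$, bounds them using (A5) and Lemma \ref{LEM:ninvBB}, and obtains the factor $J_{N}^{k}$ for $k=1,2$ from the derivative formula (\ref{EQ:dotBp}), exactly as you propose. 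Your exact perturbation identity is the cleaner variant of the same step: the paper's linear term is precisely your bracket premultiplied by $\mathbf{V}_{\mathbf{\theta}}^{-1}$ instead of $\mathbf{V}_{\mathbf{\theta},\pi}^{-1}$, so the paper must separately argue that its remainder $R_{iN}$ is $o_{p}(n_{N}^{-1/2})$, while you must instead establish $\Vert\mathbf{V}_{\mathbf{\theta},\pi}^{-1}\Vert_{\infty}=O_{p}(J_{N})$ uniformly in $\mathbf{\theta}$ (which you flag); these burdens are comparable. You are also correct that (A6) is not needed, and you are more explicit than the paper about uniformity in $(\mathbf{\theta},z)$, which the paper's proof passes over entirely.

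The one concrete weakness is the final rate accounting for $k=0$. Your chain $|\mathbf{b}(z)^{T}\hat{\mathbf{V}}_{\mathbf{\theta}}^{-1}\mathbf{c}|\le\Vert\mathbf{b}(z)\Vert_{1}\,\Vert\hat{\mathbf{V}}_{\mathbf{\theta}}^{-1}\Vert_{\infty}\,\Vert\mathbf{c}\Vert_{\infty}$ pays a factor $J_{N}$, while the Horvitz--Thompson variance bound on each bracket entry $c_{j}$ gives at best $c_{j}=O_{p}\{(n_{N}J_{N})^{-1/2}\}$: the diagonal part of its design variance is $N^{-2}\sum_{i}\pi_{i}^{-1}(1-\pi_{i})B_{j,4}^{2}(z_{\mathbf{\theta}i})\{y_{i}-\tilde{\varphi}_{\mathbf{\theta}}(z_{\mathbf{\theta}i})\}^{2}=O\{(NJ_{N})^{-1}\}$, since only the $O(N/J_{N})$ units with $z_{\mathbf{\theta}i}$ in the support of $B_{j,4}$ contribute, and the off-diagonal part is $O(n_{N}^{-1}J_{N}^{-2})$ by (A5). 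The product is therefore $O_{p}\{(J_{N}/n_{N})^{1/2}\}$, which misses the asserted $O_{p}(n_{N}^{-1/2})$ by a factor $J_{N}^{1/2}$, and the same loss propagates to $k=1,2$. This is not an artifact of your norm-splitting: a direct variance computation of the contracted sum, using the fact that the equivalent kernel $\mathbf{b}(z)^{T}\mathbf{V}_{\mathbf{\theta}}^{-1}\mathbf{b}(z^{\prime})$ has magnitude $O(J_{N})$ on a band of width $O(J_{N}^{-1})$ and that $\sigma(\cdot)\ge c_{\sigma}>0$ by (A3), gives the same order $(J_{N}/n_{N})^{1/2}$. To be fair, the paper's proof has the identical lacuna --- it simply asserts $O_{p}(n_{N}^{-1/2})$ for its linearized sums with no computation --- so your plan reproduces the paper's argument faithfully; but what your steps (and the paper's) honestly deliver is $O_{p}(J_{N}^{k+1/2}/n_{N}^{1/2})$, which happens to suffice for every downstream use of this lemma, since those uses only require the bound to vanish, and it does under (A4).
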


\begin{proof}
First we show the case when $k=0$. Let
\[
\mathbf{V}_{\mathbf{\theta },\pi }=\frac{1}{N}\mathbf{B}_{\mathbf{\theta }%
,s}^{T}\mathbf{W}_{s}\mathbf{B}_{\mathbf{\theta },s},\mathbf{D}_{\mathbf{%
\theta },\pi }=\frac{1}{N}\mathbf{B}_{\mathbf{\theta },s}^{T}\mathbf{W}_{s}%
\mathbf{y}_{s},
\]
be the sample version of matrices $\mathbf{V}_{\mathbf{\theta }}$ and $%
\mathbf{D}_{\mathbf{\theta }}$ in (\ref{DEF:Vtheta}), then
\begin{equation}
\hat{\varphi}_{\mathbf{\theta }}\left( z\right) -\tilde{\varphi}_{\mathbf{%
\theta }}\left( z\right) =\mathbf{B}^{T}\left( z\right) \mathbf{V}_{\mathbf{%
\theta },\pi }^{-1}\mathbf{D}_{\mathbf{\theta },\pi
}-\mathbf{B}^{T}\left(
z\right) \mathbf{V}_{\mathbf{\theta }}^{-1}\mathbf{D}_{\mathbf{\theta }} \\
\equiv \zeta \left( \mathbf{V}_{\mathbf{\theta },\pi },\mathbf{D}_{\mathbf{%
\theta ,}\pi }\right) ,  \label{DEF:zeta}
\end{equation}
which is a nonlinear function of the following $\pi $ estimators:
\[
v_{\mathbf{\theta ,}\pi ,jj^{\prime }}=\frac{1}{N}\sum_{i\in
s}\left.
B_{j,4}\left( z_{\mathbf{\theta }i}\right) B_{j^{\prime },4}\left( z_{%
\mathbf{\theta }i}\right) \right/ \pi _{i}, d_{\mathbf{\theta ,}\pi ,j}=%
\frac{1}{N}\sum_{i\in s}\left. B_{j,4}\left( z_{\mathbf{\theta
}i}\right) y_{i}\right/ \pi _{i},
\]
the components of $\mathbf{V}_{\mathbf{\theta },\pi }$ and $\mathbf{D}_{%
\mathbf{\theta },\pi }$, respectively. Thus the fist order derivative of $%
\zeta $ in (\ref{DEF:zeta}) with respect to $v_{\mathbf{\theta
,}\pi ,jj^{\prime }}$ and $d_{\mathbf{\theta ,}\pi ,j}$ can be
written as
\begin{eqnarray*}
\frac{\partial \zeta }{\partial v_{\mathbf{\theta ,}\pi ,jj^{\prime }}} &=&%
\mathbf{B}^{T}\left( z\right) \left( -\mathbf{V}_{\mathbf{\theta },\pi }^{-1}%
\mathbf{\Lambda }_{jj^{\prime }}\mathbf{V}_{\mathbf{\theta },\pi
}^{-1}\right) \mathbf{D}_{\mathbf{\theta ,}\pi },-3\leq j\leq
j^{\prime
}\leq J, \\
\frac{\partial \zeta }{\partial d_{\mathbf{\theta ,}\pi ,j}} &=&\mathbf{B}%
^{T}\left( z\right) \mathbf{V}_{\mathbf{\theta },\pi }^{-1}\mathbf{\lambda }%
_{j},-3\leq j\leq J,
\end{eqnarray*}
where $\mathbf{\lambda}_{j}$ is a $\left( J+1\right) $-vector with
the $j$th component equal to one, zeros elsewhere; and
$\mathbf{\Lambda }_{jj^{\prime }}$ is a $\left( J+1\right) \times
\left( J+1\right) $ matrix with the value $1$ in positions $\left(
j,j^{\prime }\right) $ and $\left( j^{\prime },j\right) $ and the
value $0$ everywhere else.

Denote the components of $\mathbf{V}_{\mathbf{\theta }}$ and $\mathbf{D}_{%
\mathbf{\theta }}$ by
\[
v_{\mathbf{\theta },jj^{\prime }}=\frac{1}{N}\sum_{i\in
U_{N}}B_{j,4}\left(
z_{\mathbf{\theta }i}\right) B_{j^{\prime },4}\left( z_{\mathbf{\theta }%
i}\right) ,d_{\theta ,j}=\frac{1}{N}\sum_{i\in U_{N}}B_{j,4}\left( z_{%
\mathbf{\theta }i}\right) y_{i},
\]
respectively. Using the Taylor linearization, one can approximate
the function $\zeta $ in (\ref{DEF:zeta}) by a linear one, i.e.
\begin{eqnarray*}
&&\left. \zeta \left( \mathbf{V}_{\mathbf{\theta },\pi },\mathbf{D}_{\mathbf{%
\theta ,}\pi }\right) =\sum_{j=-3}^{J}\mathbf{B}^{T}\left( z\right) \mathbf{V%
}_{\mathbf{\theta }}^{-1}\mathbf{\lambda }_{j}\left(
d_{\mathbf{\theta ,}\pi
,j}-d_{\mathbf{\theta ,}j}\right) \right.  \\
&&-\sum_{-3\leq j\leq j^{\prime }\leq J}\mathbf{B}^{T}\left(
z\right) \left(
\mathbf{V}_{\mathbf{\theta }}^{-1}\mathbf{\Lambda }_{jj^{\prime }}\mathbf{V}%
_{\mathbf{\theta }}^{-1}\right) \mathbf{D}_{\mathbf{\theta }}\left( v_{%
\mathbf{\theta ,}\pi ,jj^{\prime }}-v_{\mathbf{\theta },jj^{\prime
}}\right) +R_{iN},
\end{eqnarray*}
where the remainder term
\begin{eqnarray*}
&&\left. R_{iN}=\hat{\varphi}_{\mathbf{\theta }}\left( z\right) -\tilde{%
\varphi}_{\mathbf{\theta }}\left( z\right) -\sum_{j=-3}^{J}\mathbf{B}%
^{T}\left( z\right) \mathbf{V}_{\mathbf{\theta }}^{-1}\mathbf{\lambda }%
_{j}\left( d_{\mathbf{\theta ,}\pi ,j}-d_{\mathbf{\theta
,}j}\right) \right.
\\
&&+\sum_{-3\leq j\leq j^{\prime }\leq J}\mathbf{B}^{T}\left(
z\right) \left(
\mathbf{V}_{\mathbf{\theta }}^{-1}\mathbf{\Lambda }_{jj^{\prime }}\mathbf{V}%
_{\mathbf{\theta }}^{-1}\right) \mathbf{D}_{\mathbf{\theta }}\left( v_{%
\mathbf{\theta ,}\pi ,jj^{\prime }}-v_{\mathbf{\theta },jj^{\prime
}}\right) .
\end{eqnarray*}
Note that
\begin{eqnarray*}
&&\mathbf{B}^{T}\left( z\right) \mathbf{V}_{\mathbf{\theta }}^{-1}\mathbf{%
\lambda }_{j}\left( d_{\mathbf{\theta ,}\pi ,j}-d_{\mathbf{\theta ,}%
j}\right) =\mathbf{B}^{T}\left( z\right) \mathbf{V}_{\mathbf{\theta }%
}^{-1}\sum_{j=-3}^{J}\mathbf{\lambda }_{j}\left( d_{\mathbf{\theta
,}\pi
,j}-d_{\mathbf{\theta ,}j}\right)  \\
&=&\frac{1}{N}\sum_{i\in U_{N}}\mathbf{B}^{T}\left( z\right) \mathbf{V}_{%
\mathbf{\theta }}^{-1}\mathbf{B}_{\mathbf{\theta
}}^{T}e_{i}y_{i}\left( \frac{I_{i}}{\pi _{i}}-1\right) ,
\end{eqnarray*}
and
\begin{eqnarray*}
&&\sum_{-3\leq j\leq j^{\prime }\leq J}\mathbf{B}^{T}\left(
z\right) \left(
\mathbf{V}_{\mathbf{\theta }}^{-1}\mathbf{\Lambda }_{jj^{\prime }}\mathbf{V}%
_{\mathbf{\theta }}^{-1}\right) \mathbf{D}_{\mathbf{\theta }}\left( v_{%
\mathbf{\theta ,}\pi ,jj^{\prime }}-v_{\mathbf{\theta },jj^{\prime
}}\right)
\\
&=&\frac{1}{N}\sum_{i\in U_{N}}\left[ \mathbf{B}^{T}\left( z\right) \mathbf{V%
}_{\mathbf{\theta }}^{-1}\left\{ \sum_{-3\leq j\leq j^{\prime }\leq J}%
\mathbf{\Lambda }_{jj^{\prime }}B_{j,4}\left( z_{\mathbf{\theta
}i}\right)
B_{j^{\prime },4}\left( z_{\mathbf{\theta }i}\right) \right\} \mathbf{V}_{%
\mathbf{\theta }}^{-1}\mathbf{D}_{\mathbf{\theta }}\right] \left( \frac{I_{i}%
}{\pi _{i}}-1\right)  \\
&=&\frac{1}{N}\sum_{i\in U_{N}}\left[ \mathbf{B}^{T}\left( z\right) \mathbf{V%
}_{\mathbf{\theta }}^{-1}\mathbf{B}_{\mathbf{\theta }}^{T}e_{i}e_{i}^{T}%
\mathbf{B}_{\mathbf{\theta }}\mathbf{V}_{\mathbf{\theta }}^{-1}\mathbf{D}_{%
\mathbf{\theta }}\right] \left( \frac{I_{i}}{\pi _{i}}-1\right) .
\end{eqnarray*}
Thus
\[
\sup\limits_{\mathbf{\theta }\in S_{c}^{d-1}}\sup_{z\in \left[
0,1\right]
}\left| \sum_{j=-3}^{J}\mathbf{B}^{T}\left( z\right) \mathbf{V}_{\mathbf{%
\theta }}^{-1}\mathbf{\lambda }_{j}\left( d_{\mathbf{\theta ,}\pi ,j}-d_{%
\mathbf{\theta ,}j}\right) \right| =O_{p}\left(
n_{N}^{-1/2}\right) ,
\]
\begin{eqnarray*}
&&\sup\limits_{\mathbf{\theta }\in S_{c}^{d-1}}\sup_{u\in \left[
0,1\right] }\left| \sum_{-3\leq j\leq j^{\prime }\leq
J}\mathbf{B}^{T}\left( z\right)
\left( \mathbf{V}_{\mathbf{\theta }}^{-1}\mathbf{\Lambda }_{jj^{\prime }}%
\mathbf{V}_{\mathbf{\theta }}^{-1}\right) \mathbf{D}_{\mathbf{\theta }%
}\left( v_{\mathbf{\theta ,}\pi ,jj^{\prime }}-v_{\mathbf{\theta }%
,jj^{\prime }}\right) \right| =O_{p}\left( n_{N}^{-1/2}\right). \\
\end{eqnarray*}
Similarly, one can show that
\[
\sup_{\mathbf{\theta }\in S_{c}^{d-1}}\sup_{z\in \left[ 0,1\right]
}\left| R_{iN}\right| =o_{p}\left( n_{N}^{-1/2}\right) .
\]
Thus, (\ref{EQ:phihat-tilde-derivative}) holds when $k=0$. Next
according to (\ref{EQ:dotBp}) in Lemma \ref{LEM:PthetadotPpnorm},
the corresponding order on the right hand side of
(\ref{EQ:phihat-tilde-derivative}) will increases by $h^{-k}$ when
one takes the $k$th order derivative of
$\mathbf{B}_{\mathbf{\theta}}$. Similar arguments as given above
yields the desired results for $k=1$ and $2$.
\end{proof}

\begin{lemma}
\label{LEM:Rtilde-Rhat} Under Assumptions (A1)-(A5), with $p$%
-probability 1, one has
\begin{equation}
\lim_{N\rightarrow \infty }\sup\limits_{\mathbf{\theta }\in
S_{c}^{d-1}}\left| \hat{R}\left( \mathbf{\theta }\right)
-\tilde{R}\left( \mathbf{\theta }\right) \right|
=0,\label{EQ:Rhat-Rtilde}
\end{equation}
and
\begin{equation}
\sup\limits_{\mathbf{\theta }\in
S_{c}^{d-1}}\left| \frac{\partial ^{k}}{\partial \mathbf{\theta }^{k}}\hat{R}%
\left( \mathbf{\theta }\right) -\frac{\partial ^{k}}{\partial
\mathbf{\theta }^{k}}\tilde{R}\left( \mathbf{\theta }\right)
\right| =O_{p}\left(J_{N}^{k}/n_{N}^{1/2}\right),
\label{EQ:Rhat-Rtilde-div}
\end{equation}
where $\tilde{R}\left( \mathbf{\theta }\right) $ and $\hat{R}\left( \mathbf{%
\theta }\right) $ are the population-based and sample-based
empirical risk functions of $\mathbf{\theta }$ defined in
(\ref{DEF:Rtilde}) and (\ref {DEF:Rhat}).
\end{lemma}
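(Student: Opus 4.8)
The plan is to split the difference $\hat R(\mathbf{\theta})-\tilde R(\mathbf{\theta})$ into a design (Horvitz--Thompson) approximation error and a spline-estimation error, and to control each uniformly over the compact cap $S_c^{d-1}$. To do so I would introduce the hybrid risk $\tilde R_\pi(\mathbf{\theta})=N^{-1}\sum_{i\in s}\pi_i^{-1}\{y_i-\tilde\varphi_{\mathbf{\theta}}(z_{\mathbf{\theta} i})\}^2$, which pairs the population spline $\tilde\varphi_{\mathbf{\theta}}$ with the sample weights, and write $\hat R-\tilde R=(\hat R-\tilde R_\pi)+(\tilde R_\pi-\tilde R)$. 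Both pieces, and all of their $\mathbf{\theta}$-derivatives of order $k\le 2$, are then treated separately and shown to be $O_p(J_N^k/n_N^{1/2})$.

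For the \emph{design error} $\tilde R_\pi-\tilde R$, after $k$-fold differentiation in $\mathbf{\theta}$ it takes the form $N^{-1}\sum_{i\in U_N}g_{\mathbf{\theta},k}(i)(I_i/\pi_i-1)$, a Horvitz--Thompson estimator of zero with $g_{\mathbf{\theta},k}(i)=\frac{\partial^k}{\partial\mathbf{\theta}^k}\{y_i-\tilde\varphi_{\mathbf{\theta}}(z_{\mathbf{\theta} i})\}^2$. Each spatial derivative of a cubic B-spline on a mesh of width $\asymp J_N^{-1}$ contributes a factor $J_N$, while the chain-rule factors $\dot F_d(\mathbf{x}_i^T\mathbf{\theta})\,x_{iq}$ are bounded under (A1)--(A3), so $g_{\mathbf{\theta},k}$ is of order $J_N^{k}$ in the relevant averaged sense. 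For fixed $\mathbf{\theta}$ this estimator has mean zero under $p$ and, by (A5)--(A6), variance $O(J_N^{2k}/n_N)$, hence is $O_p(J_N^k/n_N^{1/2})$ pointwise. Uniformity over $\mathbf{\theta}\in S_c^{d-1}$ follows from a covering argument: the $\mathbf{\theta}$-oscillation of this process is controlled through $\|\mathbf{V}_{\mathbf{\theta}}^{-1}\|_\infty\le CJ_N$ from Lemma~\ref{LEM:ninvBB}, so I would cover $S_c^{d-1}$ by a net of mesh of order $J_N^{-1}n_N^{-1/2}$, bound the deviations on the net by a union bound using the fourth-moment control in (A6), and absorb the within-cell variation into the stated rate.

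For the \emph{spline-estimation error} $\hat R-\tilde R_\pi$, both terms carry the weights $\pi_i^{-1}$, so the difference isolates $\hat\varphi_{\mathbf{\theta}}-\tilde\varphi_{\mathbf{\theta}}$. Using $\{y_i-\hat\varphi\}^2-\{y_i-\tilde\varphi\}^2=(\tilde\varphi-\hat\varphi)(2y_i-\hat\varphi-\tilde\varphi)$ and differentiating $k$ times by the Leibniz rule, every resulting term is a product of a factor that is $O_p(1)$ on average (a residual together with $N^{-1}\sum_{i\in s}\pi_i^{-1}=O_p(1)$) and a derivative $\frac{\partial^j}{\partial\mathbf{\theta}^j}(\hat\varphi_{\mathbf{\theta}}-\tilde\varphi_{\mathbf{\theta}})$ with $j\le k$, the chain-rule contributions through $z_{\mathbf{\theta} i}$ again supplying the spatial $J_N$-powers. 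Lemma~\ref{LEM:phihat-phitilde} bounds the latter by $O_p(J_N^j/n_N^{1/2})$ uniformly in $\mathbf{\theta}$ and $z$, and the top-order term $j=k$ dominates, giving $\sup_{\mathbf{\theta}}\bigl|\frac{\partial^k}{\partial\mathbf{\theta}^k}(\hat R-\tilde R_\pi)\bigr|=O_p(J_N^k/n_N^{1/2})$. Adding the two pieces yields (\ref{EQ:Rhat-Rtilde-div}); specialising to $k=0$ gives $O_p(n_N^{-1/2})=o_p(1)$, and the $p$-almost-sure statement (\ref{EQ:Rhat-Rtilde}) follows by upgrading the design law of large numbers to its strong form under (A5)--(A6), exactly as in \citep{REF:BO}.

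The main obstacle I anticipate is the uniformity in $\mathbf{\theta}$ for the design-error term, since its modulus of continuity over $S_c^{d-1}$ grows like a power of $J_N$ as the mesh refines: the $k$-th-derivative process has $\mathbf{\theta}$-Lipschitz constant of order $J_N^{k+1}$, so the covering net must have mesh of order $J_N^{-1}n_N^{-1/2}$ to make the within-cell oscillation negligible against the target rate, while the union bound over the resulting (polynomially many) net points must still be summable against the $O(J_N^{2k}/n_N)$ pointwise variances. Calibrating this trade-off is the delicate step, and it is precisely where the uniform bound $\|\mathbf{V}_{\mathbf{\theta}}^{-1}\|_\infty\le CJ_N$ of Lemma~\ref{LEM:ninvBB}, the fourth-moment assumption (A6), and the chain-rule bookkeeping through $z_{\mathbf{\theta} i}$ must be tracked with care.
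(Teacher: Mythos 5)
Your proposal follows essentially the same route as the paper: your hybrid risk $\tilde R_\pi$ reproduces exactly the paper's decomposition of $\hat R-\tilde R$ into the design (Horvitz--Thompson) term $A_{N1}$ and the spline-estimation terms $A_{N2}+A_{N3}$ (via the same algebraic identity $\{y_i-\hat\varphi\}^2-\{y_i-\tilde\varphi\}^2=(\tilde\varphi-\hat\varphi)(2y_i-\hat\varphi-\tilde\varphi)$), with Lemma \ref{LEM:phihat-phitilde} controlling the spline error and design-based law-of-large-numbers arguments in the style of \citep{REF:BO} controlling the design error, both for the risk itself and for its $\mathbf{\theta}$-derivatives. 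The only substantive difference is that you spell out a covering-net argument for uniformity in $\mathbf{\theta}$ over $S_c^{d-1}$, a step the paper passes over by citing ``similar arguments as in the proof of Theorem \ref{THM:ADU}''; this is a fleshing-out of the same argument rather than a different approach.
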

\begin{proof}
Let
\begin{eqnarray*}
A_{N1}&=&\sup\limits_{\mathbf{\theta }\in S_{c}^{d-1}}\left|
N^{-1}\sum_{i\in
U_{N}}\left[ \left\{ y_{i}-\tilde{\varphi}_{\mathbf{\theta }}\left( z_{%
\mathbf{\theta }i}\right) \right\} ^{2}\left( \frac{I_{i}}{\pi _{i}}%
-1\right) \right] \right|, \\
A_{N2}&=&\sup\limits_{\mathbf{\theta }\in S_{c}^{d-1}}\left|
2N^{-1}\sum_{i\in
U_{N}}\left\{ \tilde{\varphi}_{\mathbf{\theta }}\left( z_{\mathbf{\theta }%
i}\right) -\hat{\varphi}_{\mathbf{\theta }}\left( z_{\mathbf{\theta }%
i}\right) \right\} \left\{ y_{i}-\tilde{\varphi}_{\mathbf{\theta }}\left( z_{%
\mathbf{\theta }i}\right) \right\} \frac{I_{i}}{\pi _{i}}\right|,\\
A_{N3}&=&\sup\limits_{\mathbf{\theta }\in S_{c}^{d-1}}\left|
N^{-1}\sum_{i\in
U_{N}}\left\{ \tilde{\varphi}_{\mathbf{\theta }}\left( z_{\mathbf{\theta }%
i}\right) -\hat{\varphi}_{\mathbf{\theta }}\left( z_{\mathbf{\theta }%
i}\right) \right\} ^{2}\frac{I_{i}}{\pi _{i}}\right| .
\end{eqnarray*}
Noting that
\[
\hat{R}\left( \mathbf{\theta }\right) -\tilde{R}\left( \mathbf{\theta }%
\right) =N^{-1}\sum_{i\in U_{N}}\left[ \left\{ y_{i}-\hat{\varphi}_{\mathbf{%
\theta }}\text{ }\left( z_{\mathbf{\theta }i}\right) \right\} ^{2}\frac{I_{i}%
}{\pi _{i}}-\left\{ y_{i}-\tilde{\varphi}_{\mathbf{\theta }}\left( z_{%
\mathbf{\theta }i}\right) \right\} ^{2}\right],
\]
one has \[
\sup\limits_{\mathbf{\theta }\in S_{c}^{d-1}}\left| \hat{R}\left( \mathbf{%
\theta }\right) -R\left( \mathbf{\theta }\right) \right| \leq
A_{N1}+A_{N2}+A_{N3}.
\]
Similar arguments as in the proof of Theorem \ref{THM:ADU}
entitles that $A_{N1}$ converges to zero with $p$-probability 1 as
$N\rightarrow \infty $. For $A_{N3}$, using the similar arguments
as in Lemma \ref{LEM:phihat-phitilde}, one has with
$p$-probability 1
\begin{eqnarray*}
A_{N3} &=&\sup\limits_{\mathbf{\theta }\in S_{c}^{d-1}}\left|
N^{-1}\sum_{i\in U_{N}}\left\{ \tilde{\varphi}_{\mathbf{\theta }}\left( z_{%
\mathbf{\theta }i}\right) -\hat{\varphi}_{\mathbf{\theta }}\left( z_{\mathbf{%
\theta }i}\right) \right\} ^{2}\frac{I_{i}}{\pi _{i}}\right| \\
&\leq &\sup\limits_{\mathbf{\theta }\in S_{c}^{d-1}}\sup_{u\in
\left[
0,1\right] }\left| N^{-1}\sum_{i\in U_{N}}\left\{ \tilde{\varphi}_{\mathbf{%
\theta }}\left( u\right) -\hat{\varphi}_{\mathbf{\theta }}\left(
u\right) \right\} ^{2}\right| \frac{1}{\lambda }\rightarrow 0.
\end{eqnarray*}
In terms of $A_{N2}$, note that
\begin{eqnarray*}
A_{N2} &\leq &\sup\limits_{\mathbf{\theta }\in S_{c}^{d-1}}\left\{
\left|
2N^{-1}\sum_{i\in U_{N}}\left\{ \tilde{\varphi}_{\mathbf{\theta }}\left( z_{%
\mathbf{\theta }i}\right) -\hat{\varphi}_{\mathbf{\theta }}\left( z_{\mathbf{%
\theta }i}\right) \right\} ^{2}\frac{I_{i}}{\pi _{i}}\right|
\right\} ^{1/2}
\\
&&\times \sup\limits_{\mathbf{\theta }\in S_{c}^{d-1}}\left\{
\left|
2N^{-1}\sum_{i\in U_{N}}\left\{ y_{i}-\tilde{\varphi}_{\mathbf{\theta }%
}\left( z_{\mathbf{\theta }i}\right) \right\} ^{2}\frac{I_{i}}{\pi _{i}}%
\right| \right\} ^{1/2} \\
&\leq &2A_{N3}^{1/2}\sup\limits_{\mathbf{\theta }\in
S_{c}^{d-1}}\left\{
E_{p}\left| N^{-1}\sum_{i\in U_{N}}\left\{ y_{i}-\tilde{\varphi}_{\mathbf{%
\theta }}\left( z_{\mathbf{\theta }i}\right) \right\} ^{2}\right| \frac{1}{%
\lambda }\right\} ^{1/2}.
\end{eqnarray*}
The definition of $\tilde{\varphi}_{\mathbf{\theta }}$ in (\ref
{DEF:mthetatilde}) implies that
\[
\lim_{N\rightarrow \infty }N^{-1}\sum_{i\in U_{N}}\left\{ y_{i}-\tilde{%
\varphi}_{\mathbf{\theta }}\left( z_{\mathbf{\theta }i}\right)
\right\} ^{2}<\infty .
\]
Thus $A_{N2}$ converges to zero with $p$-probability 1 as
$N\rightarrow \infty $, and (\ref{EQ:Rhat-Rtilde}) is proved.

Next note that
\begin{eqnarray*}
&&\left. \frac{\partial }{\partial \mathbf{\theta }}\hat{R}\left( \mathbf{%
\theta }\right) -\frac{\partial }{\partial \mathbf{\theta
}}\tilde{R}\left(
\mathbf{\theta }\right) =N^{-1}\sum_{i\in U_{N}}\left\{ y_{i}-\hat{\varphi}_{%
\mathbf{\theta }}\left( z_{\mathbf{\theta }i}\right) \right\} \left\{ \frac{%
\partial }{\partial \mathbf{\theta }}\hat{\varphi}_{\mathbf{\theta }}\left(
z_{\mathbf{\theta }i}\right) \right\} \frac{I_{i}}{\pi _{i}}\right.  \\
&&-N^{-1}\sum_{i\in U_{N}}\left\{ y_{i}-\tilde{\varphi}_{\mathbf{\theta }%
}\left( z_{\mathbf{\theta }i}\right) \right\} \left\{ \frac{\partial }{%
\partial \mathbf{\theta }}\tilde{\varphi}_{\mathbf{\theta }}\left( z_{%
\mathbf{\theta }i}\right) \right\}  \\
&=&N^{-1}\sum_{i\in U_{N}}y_{i}\left\{ \frac{\partial }{\partial \mathbf{%
\theta }}\hat{\varphi}_{\mathbf{\theta }}\left( z_{\mathbf{\theta
}i}\right)
-\frac{\partial }{\partial \mathbf{\theta }}\tilde{\varphi}_{\mathbf{\theta }%
}\left( z_{\mathbf{\theta }i}\right) \right\} \frac{I_{i}}{\pi _{i}}%
+N^{-1}\sum_{i\in U_{N}}y_{i}\frac{\partial }{\partial \mathbf{\theta }}%
\tilde{\varphi}_{\mathbf{\theta }}\left( z_{\mathbf{\theta
}i}\right) \left(
\frac{I_{i}}{\pi _{i}}-1\right)  \\
&&-N^{-1}\sum_{i\in U_{N}}\left\{ \hat{\varphi}_{\mathbf{\theta }}\left( z_{%
\mathbf{\theta }i}\right) \frac{\partial }{\partial \mathbf{\theta }}\hat{%
\varphi}_{\mathbf{\theta }}\left( z_{\mathbf{\theta }i}\right) -\tilde{%
\varphi}_{\mathbf{\theta }}\left( z_{\mathbf{\theta }i}\right) \frac{%
\partial }{\partial \mathbf{\theta }}\tilde{\varphi}_{\mathbf{\theta }%
}\left( z_{\mathbf{\theta }i}\right) \right\} \frac{I_{i}}{\pi _{i}} \\
&&-N^{-1}\sum_{i\in U_{N}}\tilde{\varphi}_{\mathbf{\theta }}\left( z_{%
\mathbf{\theta }i}\right) \frac{\partial }{\partial \mathbf{\theta }}\tilde{%
\varphi}_{\mathbf{\theta }}\left( z_{\mathbf{\theta }i}\right) \left( \frac{%
I_{i}}{\pi _{i}}-1\right).
\end{eqnarray*}
According to Lemma \ref{LEM:phihat-phitilde}, the first and the
third terms are of the order
$O_{p}\left(J_{N}^{1+k}/n_{N}^{1/2}\right)$. Using similar
arguments as in Lemma 2 of \citep{REF:BO}, it can be shown that $\frac{\partial }{\partial \mathbf{\theta }}\tilde{%
\varphi}_{\mathbf{\theta }}\left( z_{\mathbf{\theta }i}\right)$ is
bounded if Assumption (A8) holds. the second and the fourth terms
are of the order $O_{p}\left(n_{N}^{-1/2}\right)$. Thus the
desired result in (\ref{EQ:Rhat-Rtilde-div}) holds.
\end{proof}

\begin{lemma}
\label{LEM:PthetadotPpnorm}Under Assumptions (A2), (A4) and (A5),
there exists a constant $C>0$ such that with $\xi$-probability 1
\begin{equation}
\sup\limits_{1\leq p\leq d}\sup\limits_{\mathbf{\theta }\in
S_{c}^{d-1}}\left\| \frac{\partial }{\partial \theta
_{p}}\mathbf{P}_{\mathbf{\theta}}\right\| _{\infty }\leq CJ_{N}.
\label{EQ:PthetadotPpnorm}
\end{equation}
\end{lemma}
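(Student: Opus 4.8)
The plan is to differentiate $\mathbf{P}_{\mathbf{\theta}}$ explicitly and control each resulting block in the maximal-absolute-row-sum norm $\|\cdot\|_{\infty}$, using the inverse-Gram bound of Lemma~\ref{LEM:ninvBB} together with the local support of the cubic B-splines. Writing the projector through the Gram matrix of (\ref{DEF:Vtheta}) as $\mathbf{P}_{\mathbf{\theta}}=N^{-1}\mathbf{B}_{\mathbf{\theta}}\mathbf{V}_{\mathbf{\theta}}^{-1}\mathbf{B}_{\mathbf{\theta}}^{T}$ and using $\partial_{\theta_{p}}\mathbf{V}_{\mathbf{\theta}}^{-1}=-\mathbf{V}_{\mathbf{\theta}}^{-1}\dot{\mathbf{V}}_{p}\mathbf{V}_{\mathbf{\theta}}^{-1}$ with $\dot{\mathbf{V}}_{p}=N^{-1}(\dot{\mathbf{B}}_{p}^{T}\mathbf{B}_{\mathbf{\theta}}+\mathbf{B}_{\mathbf{\theta}}^{T}\dot{\mathbf{B}}_{p})$, the product rule gives
\[
\frac{\partial}{\partial\theta_{p}}\mathbf{P}_{\mathbf{\theta}}=\frac{1}{N}\dot{\mathbf{B}}_{p}\mathbf{V}_{\mathbf{\theta}}^{-1}\mathbf{B}_{\mathbf{\theta}}^{T}-\frac{1}{N}\mathbf{B}_{\mathbf{\theta}}\mathbf{V}_{\mathbf{\theta}}^{-1}\dot{\mathbf{V}}_{p}\mathbf{V}_{\mathbf{\theta}}^{-1}\mathbf{B}_{\mathbf{\theta}}^{T}+\frac{1}{N}\mathbf{B}_{\mathbf{\theta}}\mathbf{V}_{\mathbf{\theta}}^{-1}\dot{\mathbf{B}}_{p}^{T}.
\]
Here $\dot{\mathbf{B}}_{p}=\partial_{\theta_{p}}\mathbf{B}_{\mathbf{\theta}}$ is the matrix displayed in Lemma~\ref{LEM:Shatstarmatrices}: each nonzero entry equals $J_{N}\{B_{j,3}(z_{\mathbf{\theta}i})-B_{j+1,3}(z_{\mathbf{\theta}i})\}\dot{F}_{d}(\cdot)x_{i,p}$, so it carries exactly one factor $J_{N}$ from the spline-derivative recurrence, while $\dot{F}_{d}$ and $x_{i,p}$ stay bounded because $\mathbf{x}_{i}\in B_{a}^{d}$. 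The sample projector $\mathbf{P}_{\mathbf{\theta},s}$ is handled identically after inserting the constant weight matrix $\mathbf{W}_{s}$ and invoking (A5), which keeps every $\pi_{i}^{-1}$ bounded so that the inverse bound of Lemma~\ref{LEM:ninvBB} carries over to $\mathbf{V}_{\mathbf{\theta},\pi}$; I therefore argue for $\mathbf{P}_{\mathbf{\theta}}$ below.

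Next I would record four elementary bounds, all uniform over $\mathbf{\theta}\in S_{c}^{d-1}$ and $1\le p\le d$. Because at most four cubic B-splines are nonzero at any point and each is bounded by one, every row of $\mathbf{B}_{\mathbf{\theta}}$ has boundedly many entries bounded by one, whence $\|\mathbf{B}_{\mathbf{\theta}}\|_{\infty}=O(1)$; the same support argument with the extra factor $J_{N}$ gives $\|\dot{\mathbf{B}}_{p}\|_{\infty}=O(J_{N})$. A single column of $\mathbf{B}_{\mathbf{\theta}}$ sums $B_{j,4}(z_{\mathbf{\theta}i})$ over those design points lying in the support of one basis function; since $z_{\mathbf{\theta}}$ is quasi-uniform (Remark 2.2) with density bounded under (A1), that support contains $O(N/J_{N})$ of the $N$ points, so $\|\mathbf{B}_{\mathbf{\theta}}^{T}\|_{\infty}=O(N/J_{N})$ and, with the extra $J_{N}$, $\|\dot{\mathbf{B}}_{p}^{T}\|_{\infty}=O(N)$. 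Combining these through submultiplicativity of $\|\cdot\|_{\infty}$ and the $N^{-1}$ normalisation yields the crucial cancellation $\|\dot{\mathbf{V}}_{p}\|_{\infty}\le N^{-1}(\|\dot{\mathbf{B}}_{p}^{T}\|_{\infty}\|\mathbf{B}_{\mathbf{\theta}}\|_{\infty}+\|\mathbf{B}_{\mathbf{\theta}}^{T}\|_{\infty}\|\dot{\mathbf{B}}_{p}\|_{\infty})=O(1)$: the $J_{N}$ produced by differentiating a basis function is exactly offset by the $J_{N}^{-1}$ coming from averaging over its shrinking support. Finally $\|\mathbf{V}_{\mathbf{\theta}}^{-1}\|_{\infty}\le CJ_{N}$ is Lemma~\ref{LEM:ninvBB}.

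Finally I would assemble the three terms by repeated submultiplicativity. The first is bounded by $N^{-1}\|\dot{\mathbf{B}}_{p}\|_{\infty}\|\mathbf{V}_{\mathbf{\theta}}^{-1}\|_{\infty}\|\mathbf{B}_{\mathbf{\theta}}^{T}\|_{\infty}=N^{-1}O(J_{N})O(J_{N})O(N/J_{N})=O(J_{N})$; the third, symmetrically, by $N^{-1}O(1)O(J_{N})O(N)=O(J_{N})$; and the middle term by $N^{-1}\|\mathbf{B}_{\mathbf{\theta}}\|_{\infty}\|\mathbf{V}_{\mathbf{\theta}}^{-1}\|_{\infty}\|\dot{\mathbf{V}}_{p}\|_{\infty}\|\mathbf{V}_{\mathbf{\theta}}^{-1}\|_{\infty}\|\mathbf{B}_{\mathbf{\theta}}^{T}\|_{\infty}=N^{-1}O(1)O(J_{N})O(1)O(J_{N})O(N/J_{N})=O(J_{N})$. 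Adding the three yields (\ref{EQ:PthetadotPpnorm}), and since every constant is independent of $\mathbf{\theta}$ and $p$ the bound is uniform and holds with $\xi$-probability $1$. I expect the only genuine obstacle to be the bookkeeping of the powers of $N$ and $J_{N}$ so that they collapse to exactly $J_{N}$; this hinges entirely on the column-sum estimate $\|\mathbf{B}_{\mathbf{\theta}}^{T}\|_{\infty}=O(N/J_{N})$ and the resulting $O(1)$ bound for $\dot{\mathbf{V}}_{p}$, that is, on each $\theta$-derivative costing one factor $h^{-1}\sim J_{N}$ and no more --- precisely the $h^{-k}$ rule quoted in the proof of Lemma~\ref{LEM:phihat-phitilde}.
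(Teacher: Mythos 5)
Your proof is correct, and in substance it follows the same route as the paper: both rest on Lemma \ref{LEM:ninvBB} for $\sup_{\mathbf{\theta}}\|\mathbf{V}_{\mathbf{\theta}}^{-1}\|_{\infty}\leq CJ_{N}$, on the partition-of-unity row bound $\|\mathbf{B}_{\mathbf{\theta}}\|_{\infty}=O(1)$, and on the column-sum bounds that the paper records as (\ref{EQ:BthetadotBpnorm}) (equivalently $\|\mathbf{B}_{\mathbf{\theta}}^{T}\|_{\infty}=O(N/J_{N})$ and $\|\dot{\mathbf{B}}_{p}^{T}\|_{\infty}=O(N)$), with the final rate coming from exactly the cancellation you emphasize: each $\theta$-derivative costs one factor $h^{-1}\sim J_{N}$, offset by the $O(Nh)$ count of design points in a basis function's support. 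The only genuine difference is how the product rule is organized. The paper collapses the derivative into the two-term identity (\ref{EQ:dotPp}), $\dot{\mathbf{P}}_{p}=(\mathbf{I}-\mathbf{P}_{\mathbf{\theta}})\dot{\mathbf{B}}_{p}(\mathbf{B}_{\mathbf{\theta}}^{T}\mathbf{B}_{\mathbf{\theta}})^{-1}\mathbf{B}_{\mathbf{\theta}}^{T}+\mathbf{B}_{\mathbf{\theta}}(\mathbf{B}_{\mathbf{\theta}}^{T}\mathbf{B}_{\mathbf{\theta}})^{-1}\dot{\mathbf{B}}_{p}^{T}(\mathbf{I}-\mathbf{P}_{\mathbf{\theta}})$, exploiting idempotence at the (implicit) cost of the bound $\|\mathbf{I}-\mathbf{P}_{\mathbf{\theta}}\|_{\infty}=O(1)$; you keep the raw three-term expansion via $\partial_{\theta_{p}}\mathbf{V}_{\mathbf{\theta}}^{-1}=-\mathbf{V}_{\mathbf{\theta}}^{-1}\dot{\mathbf{V}}_{p}\mathbf{V}_{\mathbf{\theta}}^{-1}$, which instead needs your (correctly derived) bound $\|\dot{\mathbf{V}}_{p}\|_{\infty}=O(1)$. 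The bookkeeping is equivalent: yours is slightly more elementary in that it never needs the projection identity, while the paper's is more compact, with one fewer term to track. One caveat applies equally to both arguments: the column-sum bound $\|\mathbf{B}_{\mathbf{\theta}}^{T}\|_{\infty}=O(N/J_{N})$ is an almost-sure statement that must hold uniformly over $\mathbf{\theta}\in S_{c}^{d-1}$ and over the knot index $j$, which is really an empirical-process statement about the transformed design points $z_{\mathbf{\theta}i}$; your appeal to quasi-uniformity of $Z_{\mathbf{\theta}}$ under (A1) is at the same level of rigor as the paper's bare assertion of (\ref{EQ:BthetadotBpnorm}).
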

\begin{proof}
Note that for any $1 \leq p\leq d$
\begin{equation}
\dot{\mathbf{B}}_{p}\equiv\frac{\partial }{\partial \theta _{p}}\mathbf{B}_{\mathbf{\theta}}=\left[ \left\{ B_{j,3}\left( U_{\mathbf{\theta }%
,i}\right) -B_{j+1,3}\left( U_{\mathbf{\theta },i}\right) \right\} \dot{F}%
_{d}\left( \mathbf{X}_{\mathbf{\theta },i}\right)
h^{-1}X_{i,p}\right] _{i=1,j=-3}^{n,\text{ }N},  \label{EQ:dotBp}
\end{equation}
where $h$ is the length of the neighboring knots. For any vector $\mathbf{a}\in R^{n}$, with probability $1$%
\[
\left\| n^{-1}\mathbf{B}_{\mathbf{\theta }}^{T}\mathbf{a}\right\|
_{\infty }\leq \left\| \mathbf{a}\right\| _{\infty
}\max\limits_{-3\leq j\leq N}\left|
n^{-1}\sum_{i=1}^{n}B_{j,4}\left( U_{\mathbf{\theta },i}\right)
\right| \leq Ch\left\| \mathbf{a}\right\| _{\infty },\mbox{\ }
\left\| n^{-1}\mathbf{\dot{B}}_{p}^{T}\mathbf{a}\right\| _{\infty
}
\]
\[
\leq \left\| \mathbf{a}\right\| _{\infty }\max\limits_{-3\leq
j\leq N}\left|
\frac{1}{nh}\sum_{i=1}^{n}\left\{ \left( B_{j,3}-B_{j+1,3}\right) \left( U_{%
\mathbf{\theta },i}\right) \right\} \dot{F}_{d}\left( \mathbf{X}_{\mathbf{%
\theta },i}\right) X_{i,p}\right| \leq C\left\| \mathbf{a}\right\|
_{\infty }.
\]
Thus
\begin{equation}
\sup\limits_{\mathbf{\theta }\in S_{c}^{d-1}}\left\| n^{-1}\mathbf{B}_{%
\mathbf{\theta }}^{T}\right\| _{\infty }\leq Ch,\rm{ }
\sup\limits_{1\leq
p\leq d}\sup\limits_{\mathbf{\theta }\in S_{c}^{d-1}}\left\| n^{-1}\mathbf{%
\dot{B}}_{p}^{T}\right\| _{\infty }\leq C,a.s..
\label{EQ:BthetadotBpnorm}
\end{equation}
Observing that
\begin{equation}
\mathbf{\dot{P}}_{p}=\left( \mathbf{\mathbf{I}}-\mathbf{P}_{\mathbf{\theta }%
}\right) \mathbf{\dot{B}}_{p}\left( \mathbf{B}_{\mathbf{\theta }}^{T}\mathbf{%
B}_{\mathbf{\theta }}\right) ^{-1}\mathbf{B}_{\mathbf{\theta }}^{T}+\mathbf{B%
}_{\mathbf{\theta }}\left( \mathbf{B}_{\mathbf{\theta }}^{T}\mathbf{B}_{%
\mathbf{\theta }}\right) ^{-1}\mathbf{\dot{B}}_{p}^{T}\left( \mathbf{I}-%
\mathbf{P}_{\mathbf{\theta }}\right),  \label{EQ:dotPp}
\end{equation}
one only needs to combine (\ref{EQ:ninvBB}),
(\ref{EQ:BthetadotBpnorm}), and (\ref{EQ:dotPp}) to prove
(\ref{EQ:PthetadotPpnorm}).
\end{proof}

\section*{Appendix B. Proof of Theorems \ref{THM:theta consistent}-\ref{THM:normality}}
\renewcommand{\thetheorem}{{\sc B.\arabic{theorem}}} \renewcommand{%
\theproposition}{{\sc B.\arabic{proposition}}}
\renewcommand{\thelemma}{{\sc
B.\arabic{lemma}}} \renewcommand{\thecorollary}{{\sc
B.\arabic{corollary}}} \renewcommand{\theequation}{B.\arabic{equation}} %
\renewcommand{\thesubsection}{{B.\arabic{subsection}}} %
\setcounter{equation}{0} \setcounter{lemma}{0} \setcounter{proposition}{0} %
\setcounter{theorem}{0}
\setcounter{subsection}{0}\setcounter{corollary}{0}

\subsection{Proof of Theorem \ref{THM:theta consistent}}
Let $\left( \Omega ,\mathcal{A},\mathcal{P}\right) $ be the design
probability space with respect to the sampling design measure. By
Lemma \ref {LEM:Rtilde-Rhat}, for any $\delta >0$ and $\omega \in
\Omega $, there
exists an integer $N_{0}\left( \omega \right) $, such that when $%
N>N_{0}\left( \omega \right) $, $\hat{R}\left( \tilde{\mathbf{\theta }}%
,\omega \right) -\tilde{R}\left( \tilde{\mathbf{\theta }}\right)
<$ $\delta /2$. Note that
$\mathbf{\hat{\theta}}=\mathbf{\hat{\theta}}\left( \omega \right)
$ is the minimizer of $\hat{R}\left( \mathbf{\theta },\omega
\right) $, so $\hat{R}\left( \mathbf{\hat{\theta}}\left( \omega
\right) ,\omega \right) -\tilde{R}\left( \tilde{\mathbf{\theta
}}\right) <\delta /2$. Using
Lemma \ref{LEM:Rtilde-Rhat} again, there exists $N_{1}\left( \omega \right) $%
, such that when $N>N_{1}\left( \omega \right) $, $\tilde{R}\left( \mathbf{%
\hat{\theta}}\left( \omega \right) \right) -\hat{R}\left( \mathbf{\hat{\theta%
}}\left( \omega \right) ,\omega \right) <$ $\delta /2$. Thus, when
$N>\max \left( N_{0}\left( \omega \right) ,N_{1}\left( \omega
\right) \right) $,
\[
\tilde{R}\left( \mathbf{\hat{\theta}}\left( \omega \right) \right) -\tilde{R}%
\left( \tilde{\mathbf{\theta }}\right) <\delta /2+\hat{R}\left( \mathbf{\hat{%
\theta}}\left( \omega \right) ,\omega \right) -\tilde{R}\left( \tilde{%
\mathbf{\theta }}\right) <\delta /2+\delta /2=\delta .
\]
By Assumption (A7), for any $\varepsilon >0$, if $\tilde{R}\left( \mathbf{%
\hat{\theta}}\left( \omega \right) ,\omega \right) -\tilde{R}\left( \tilde{%
\mathbf{\theta }}\right) <\delta $, then one would have $\left\| \mathbf{%
\hat{\theta}}\left( \omega \right)-\tilde{\mathbf{\theta
}}\right\| _{2}<\varepsilon $ for $N$ large enough, which is true
for any $\omega $, and the strong consistency holds.

Next, note that
\[
\left. \frac{\partial \hat{R}\left( \mathbf{\theta }\right)
}{\partial
\mathbf{\theta }}\right| _{\mathbf{\theta =\hat{\theta}}}-\left. \frac{%
\partial \hat{R}\left( \mathbf{\theta }\right) }{\partial \mathbf{\theta }}%
\right| _{\mathbf{\theta =\tilde{\theta}}}=\left. \frac{\partial ^{2}\hat{R}%
\left( \mathbf{\theta }\right) }{\partial \mathbf{\theta }\partial \mathbf{%
\theta }^{T}}\right| _{\mathbf{\theta =\bar{\theta}}}\left( \mathbf{\hat{%
\theta}-\tilde{\theta}}\right) ,
\]
with
$\mathbf{\bar{\theta}}=\mathbf{t\hat{\theta}+(I-t)\tilde{\theta}}$.
So
\[
\mathbf{\hat{\theta}-\tilde{\theta}=-}\left( \left. \frac{\partial ^{2}\hat{R%
}\left( \mathbf{\theta }\right) }{\partial \mathbf{\theta }\partial \mathbf{%
\theta }^{T}}\right| _{\mathbf{\theta =\bar{\theta}}}\right)
^{-1}\left.
\frac{\partial \hat{R}\left( \mathbf{\theta }\right) }{\partial \mathbf{%
\theta }}\right| _{\mathbf{\theta =\tilde{\theta}}},
\]
where according to (\ref{EQ:Rhat-Rtilde-div}) and the above
consistency result of $\mathbf{\hat{\theta}}$, one has
\[
\lim_{N\rightarrow \infty }\left. \frac{\partial ^{2}\hat{R}\left( \mathbf{%
\theta }\right) }{\partial \mathbf{\theta }\partial \mathbf{\theta }^{T}}%
\right| _{\mathbf{\theta =\bar{\theta}}}\rightarrow \left.
\frac{\partial
^{2}\tilde{R}\left( \mathbf{\theta }\right) }{\partial \mathbf{\theta }%
\partial \mathbf{\theta }^{T}}\right| _{\mathbf{\theta =\tilde{\theta}}}
\]
in probability $p$, and by (\ref{EQ:Rhat-Rtilde-div}) in Lemma
\ref {LEM:Rtilde-Rhat}, one has
\[
\left| \left. \frac{\partial \hat{R}\left( \mathbf{\theta }\right) }{%
\partial \mathbf{\theta }}\right| _{\mathbf{\theta =\tilde{\theta}}}\right|
\leq \sup_{\mathbf{\theta }\in S_{c}^{d-1}}\left| \frac{\partial \hat{R}%
\left( \mathbf{\theta }\right) }{\partial \mathbf{\theta
}}-\frac{\partial \tilde{R}\left( \mathbf{\theta }\right)
}{\partial \mathbf{\theta }}\right| =O_{p}\left(
J_{N}/n_{N}^{1/2}\right) .
\]
Thus $\left\|
\mathbf{\hat{\theta}-\tilde{\theta}}\right\|_{\infty} =O_{p}\left(
J_{N}/n_{N}^{1/2}\right) $ by Assumption (A8).

\subsection{Proof of Theorem \ref{THM:ADU}}
\begin{lemma}
\label{LEM:Epmtilde-hat} Under Assumptions (A1)-(A5) and (A7) one
has
\[
\lim_{N\rightarrow \infty }\frac{1}{N}E_{p}\left[ \sum_{i\in
U_{N}}\left( \tilde{m}_{i}-\hat{m}_{i}\right) ^{2}\right] =0,
\]
where $\tilde{m}_{i}$ and $\hat{m}_{i}$ are defined in
(\ref{DEF:mitilde}) and (\ref{DEF:mihat}).
\end{lemma}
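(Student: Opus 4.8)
The plan is to write the two predictions as evaluations of the respective spline link functions, $\tilde m_i=\tilde\varphi_{\tilde{\mathbf{\theta}}}(z_{\tilde{\mathbf{\theta}}i})$ and $\hat m_i=\hat\varphi_{\hat{\mathbf{\theta}}}(z_{\hat{\mathbf{\theta}}i})$, and to isolate the two sources of discrepancy: the estimation error in the index, $\hat{\mathbf{\theta}}-\tilde{\mathbf{\theta}}$, and the estimation error in the link function at a fixed direction, $\hat\varphi_{\mathbf{\theta}}-\tilde\varphi_{\mathbf{\theta}}$. Inserting the intermediate term $\tilde\varphi_{\hat{\mathbf{\theta}}}(z_{\hat{\mathbf{\theta}}i})$, I decompose $\tilde m_i-\hat m_i=T_{1i}+T_{2i}$ with
\[
T_{1i}=\tilde\varphi_{\tilde{\mathbf{\theta}}}\left(z_{\tilde{\mathbf{\theta}}i}\right)-\tilde\varphi_{\hat{\mathbf{\theta}}}\left(z_{\hat{\mathbf{\theta}}i}\right),\qquad T_{2i}=\tilde\varphi_{\hat{\mathbf{\theta}}}\left(z_{\hat{\mathbf{\theta}}i}\right)-\hat\varphi_{\hat{\mathbf{\theta}}}\left(z_{\hat{\mathbf{\theta}}i}\right),
\]
so that, writing $W_N=N^{-1}\sum_{i\in U_N}(\tilde m_i-\hat m_i)^2$, one has $W_N\le 2N^{-1}\sum_{i\in U_N}T_{1i}^2+2N^{-1}\sum_{i\in U_N}T_{2i}^2$. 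It then suffices to control each average separately.

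The term $T_{2i}$ is the pure link-function estimation error evaluated at the fixed direction $\hat{\mathbf{\theta}}$, hence it is dominated by the uniform bound of Lemma \ref{LEM:phihat-phitilde} with $k=0$: for every $i$, $|T_{2i}|\le\sup_{\mathbf{\theta}\in S_c^{d-1}}\sup_{z\in[0,1]}\left|\hat\varphi_{\mathbf{\theta}}(z)-\tilde\varphi_{\mathbf{\theta}}(z)\right|=O_p(n_N^{-1/2})$. Therefore $N^{-1}\sum_{i\in U_N}T_{2i}^2=O_p(n_N^{-1})=o_p(1)$.

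For $T_{1i}$, I view $\tilde\varphi_{\mathbf{\theta}}(z_{\mathbf{\theta}i})$ as a function of $\mathbf{\theta}$ and apply the mean value theorem along the segment joining $\tilde{\mathbf{\theta}}$ and $\hat{\mathbf{\theta}}$, which lies in $S_c^{d-1}$ for $N$ large by the consistency part of Theorem \ref{THM:theta consistent}. This gives $|T_{1i}|\le\sup_{\mathbf{\theta}\in S_c^{d-1}}\left|\frac{\partial}{\partial\mathbf{\theta}}\tilde\varphi_{\mathbf{\theta}}(z_{\mathbf{\theta}i})\right|\,\|\hat{\mathbf{\theta}}-\tilde{\mathbf{\theta}}\|_2$. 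The directional derivative here is exactly the quantity shown to be bounded (uniformly in $i$, with $\xi$-probability $1$) in the proof of Lemma \ref{LEM:Rtilde-Rhat} via the Breidt--Opsomer-type argument under Assumption (A8), together with the uniform boundedness of $\dot F_d$ on $B_a^d$ and of the spline derivative approximating the smooth $\varphi_{\mathbf{\theta}}'$. Hence $\sup_{i\in U_N}|T_{1i}|=O_p(\|\hat{\mathbf{\theta}}-\tilde{\mathbf{\theta}}\|_2)=O_p(J_N/n_N^{1/2})$ by Theorem \ref{THM:theta consistent}, so $N^{-1}\sum_{i\in U_N}T_{1i}^2=O_p(J_N^2/n_N)$, and by (A4) this is $o_p(1)$ since $J_N^2/n_N\ll n_N^{-3/5}(\log n_N)^{-4/5}$. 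Combining the two bounds yields $W_N=o_p(1)$.

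The lemma, however, asserts convergence of the design expectation $E_p[W_N]$, so the stochastic rate must be upgraded to an $L^1(p)$ statement. My plan is to note that $W_N$ is uniformly bounded with $\xi$-probability $1$: $|\tilde m_i|\le C$ by uniform spline approximation of the bounded $m$ under (A1)--(A2), while $|\hat m_i|$ is controlled through the well-conditioning of $\mathbf{B}_{\hat{\mathbf{\theta}},s}^T\mathbf{W}_s\mathbf{B}_{\hat{\mathbf{\theta}},s}$ afforded by (A5) and the boundedness of $\mathbf{y}_s$; then $W_N\le C$ and the bounded convergence theorem with respect to $p$ turns $W_N=o_p(1)$ into $E_p[W_N]\to0$. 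I expect this final passage to be the main obstacle, specifically obtaining a design-uniform bound on $\hat m_i$ that remains valid on the rare ``bad'' samples where $\mathbf{B}_{\hat{\mathbf{\theta}},s}^T\mathbf{W}_s\mathbf{B}_{\hat{\mathbf{\theta}},s}$ is poorly conditioned. An alternative that sidesteps uniform integrability is to revisit the Taylor linearizations underlying Lemma \ref{LEM:phihat-phitilde} and Theorem \ref{THM:theta consistent}, which are themselves assembled from design-variance (i.e., $E_p$) computations, and to bound $E_p[W_N]$ directly term by term rather than through dominated convergence.
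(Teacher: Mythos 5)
Your proposal follows essentially the same route as the paper's proof: your intermediate quantity $\tilde{\varphi}_{\hat{\mathbf{\theta}}}\left(z_{\hat{\mathbf{\theta}}i}\right)$ is exactly the paper's $\hat{\widetilde{m}}_{i}=\mathbf{e}_{i}^{T}\mathbf{B}_{\hat{\mathbf{\theta}}}\left(\mathbf{B}_{\hat{\mathbf{\theta}}}^{T}\mathbf{B}_{\hat{\mathbf{\theta}}}\right)^{-1}\mathbf{B}_{\hat{\mathbf{\theta}}}^{T}\mathbf{y}$, and the two resulting terms are handled as you handle them, the link-error term by Lemma \ref{LEM:phihat-phitilde} with $k=0$ and the index-error term by a mean value expansion in $\mathbf{\theta}$ combined with $\left\|\hat{\mathbf{\theta}}-\tilde{\mathbf{\theta}}\right\|_{\infty}=O_{p}\left(J_{N}/n_{N}^{1/2}\right)$ from Theorem \ref{THM:theta consistent}. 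Two differences are worth recording. First, for the derivative in the mean value step the paper does not use the $O(1)$ bound on $\frac{\partial}{\partial\mathbf{\theta}}\tilde{\varphi}_{\mathbf{\theta}}$, which, as you note, rests on Assumption (A8) --- an assumption not listed in the lemma; instead it writes the increment through projection matrices, $f(t)=\mathbf{e}_{i}^{T}\mathbf{P}_{t\hat{\mathbf{\theta}}+(1-t)\tilde{\mathbf{\theta}}}\mathbf{y}$, and invokes Lemma \ref{LEM:PthetadotPpnorm}, $\sup_{q}\sup_{\mathbf{\theta}\in S_{c}^{d-1}}\left\|\frac{\partial}{\partial\theta_{q}}\mathbf{P}_{\mathbf{\theta}}\right\|_{\infty}\leq CJ_{N}$. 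This yields the weaker pointwise bound $O_{p}\left(J_{N}^{2}/n_{N}^{1/2}\right)$, whose square $O_{p}\left(J_{N}^{4}/n_{N}\right)$ is still $o_{p}(1)$ under (A4) since $J_{N}\ll n_{N}^{1/5}$; so your sharper bound is not needed, and avoiding it keeps the argument closer to the stated hypotheses (though the paper's proof, too, quotes the rate part of Theorem \ref{THM:theta consistent}, which formally requires (A8)). Second, the passage from $o_{p}(1)$ to convergence of the design expectation $E_{p}$, which you correctly single out as the main obstacle, is not actually resolved in the paper: it asserts $\frac{1}{N}E_{p}\left[\sum_{i\in U_{N}}\left(\hat{\widetilde{m}}_{i}-\hat{m}_{i}\right)^{2}\right]\rightarrow 0$ ``according to Lemma \ref{LEM:phihat-phitilde}'' and concludes the mean value term converges ``from the above arguments and Assumption (A4),'' even though those results are stated only as $O_{p}$ bounds. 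Your instinct that uniform integrability (or a direct term-by-term $E_{p}$ computation, your second suggested remedy) is needed to make this fully rigorous is sound; it identifies a step that the published proof glosses over rather than a defect specific to your argument.
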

\begin{proof}
Let $\hat{\widetilde{m}}_{i}=e_{i}^{T}\mathbf{B}_{\mathbf{%
\hat{\theta}}}\left( \mathbf{B}_{\mathbf{\hat{\theta}}}^{T}\mathbf{B}_{%
\mathbf{\hat{\theta}}}\right) ^{-1}\mathbf{B}_{\mathbf{\hat{\theta}}}^{T}%
\mathbf{y}$, then one can write
\begin{eqnarray*}
&&\sum_{i\in U_{N}}\left( \tilde{m}_{i}-\hat{m}_{i}\right)
^{2}=\sum_{i\in U_{N}}\left(
\tilde{m}_{i}-\hat{\widetilde{m}}_{i}\right) ^{2}+\sum_{i\in
U_{N}}\left( \hat{\widetilde{m}}_{i}-\hat{m}_{i}\right) ^{2} \\
&&+2\sum_{i\in U_{N}}\left(
\tilde{m}_{i}-\hat{\widetilde{m}}_{i}\right) \left(
\hat{\widetilde{m}}_{i}-\hat{m}_{i}\right) .
\end{eqnarray*}
According to Lemma \ref{LEM:phihat-phitilde},
$\frac{1}{N}E_{p}\left[ \sum_{i\in U_{N}}\left(
\hat{\widetilde{m}}_{i}-\hat{m}_{i}\right) ^{2}\right] \rightarrow
0$, so it suffices to show
\begin{equation}
\frac{1}{N}E_{p}\left[ \sum_{i\in U_{N}}\left( \tilde{m}_{i}-\hat{\widetilde{%
m}}_{i}\right) ^{2}\right] \rightarrow 0.
\label{EQ:mtildehat-mhat}
\end{equation}
Let $f\left( t\right)
=e_{i}^{T}\mathbf{P}_{t\mathbf{\hat{\theta}+}\left( 1-t\right)
\mathbf{\tilde{\theta}}}\mathbf{y}$, then
\[
\frac{df\left( t\right) }{dt}=e_{i}^{T}\sum_{q=1}^{d}\frac{\partial }{%
\partial \theta _{q}}\mathbf{P}_{t\mathbf{\hat{\theta}+}\left( 1-t\right)
\mathbf{\tilde{\theta}}}\left(
\hat{\theta}_{q}-\tilde{\theta}_{q}\right) \mathbf{y}.
\]
Therefore,
\begin{eqnarray*}
\tilde{m}_{i}-\hat{\widetilde{m}}_{i} &=&e_{i}^{T}\mathbf{B}_{\mathbf{\hat{%
\theta}}}\left( \mathbf{B}_{\mathbf{\hat{\theta}}}^{T}\mathbf{B}_{\mathbf{%
\hat{\theta}}}\right) ^{-1}\mathbf{B}_{\mathbf{\hat{\theta}}}^{T}\mathbf{y-e}%
_{i}^{T}\mathbf{B}_{\tilde{\mathbf{\theta }}}\left( \mathbf{B}_{\tilde{%
\mathbf{\theta }}}^{T}\mathbf{B}_{\tilde{\mathbf{\theta }}}\right) ^{-1}%
\mathbf{B}_{\tilde{\mathbf{\theta }}}^{T}\mathbf{y} \\
&=&f\left( 1\right) -f\left( 0\right)
=e_{i}^{T}\sum_{q=1}^{d}\frac{\partial }{\partial \theta
_{q}}\mathbf{P}_{t^{*}\mathbf{\hat{\theta}+}\left(
1-t^{*}\right) \mathbf{\tilde{\theta}}}\left( \hat{\theta}_{q}-\tilde{\theta}%
_{q}\right) \mathbf{y},
\end{eqnarray*}
where $t^{*}\in (0,1)$. Thus
\[
\frac{1}{N}E_{p}\left[ \sum_{i\in U_{N}}\left( \tilde{m}_{i}-\hat{\widetilde{%
m}}_{i}\right) ^{2}\right] =\frac{1}{N}\sum_{i\in
U_{N}}E_{p}\left\{
e_{i}^{T}\sum_{q=1}^{d}\frac{\partial }{\partial \theta _{q}}\mathbf{P}%
_{t^{*}\mathbf{\hat{\theta}}+\left( 1-t^{*}\right) \mathbf{\tilde{\theta}}%
}\left( \hat{\theta}_{q}-\tilde{\theta}_{q}\right)
\mathbf{y}\right\} ^{2}.
\]
Note that according to Theorem \ref{THM:theta consistent}, with
$p$-probability 1,
\[
\frac{\partial }{\partial \theta _{q}}\mathbf{P}_{t^{*}\mathbf{\hat{\theta}+}%
\left( 1-t^{*}\right) \mathbf{\tilde{\theta}}}\rightarrow \frac{\partial }{%
\partial \theta _{q}}\mathbf{P}_{\mathbf{\tilde{\theta}}},
\]
and $\left\| \mathbf{\hat{\theta}-\tilde{\theta}}\right\|_{\infty}
=O_{p}\left( J_{N}/n_{N}^{1/2}\right)$. By Lemma
\ref{LEM:PthetadotPpnorm}, there exists a positive constant
$C_{0}$ such that $\sup_{1\leq k\leq d}\sup_{\mathbf{\theta }\in
S_{c}^{d-1}}\left\| \frac{\partial }{\partial \theta
_{k}}\mathbf{P}_{\mathbf{\theta}}\right\| _{\infty }\leq
C_{0}J_{N}$ with $\xi$-probability 1. Thus
(\ref{EQ:mtildehat-mhat}) follows directly from the above
arguments and Assumption (A4). Hence the result.
\end{proof}

Note that
\[
\hat{t}_{y,\text{diff}}-t_{y}=\sum_{i\in U_{N}}\left( y_{i}-\tilde{m}%
_{i}\right) \left( \frac{I_{i}}{\pi _{i}}-1\right) +\sum_{i\in
U_{N}}\left( \hat{m}_{i}-\tilde{m}_{i}\right) \left(
1-\frac{I_{i}}{\pi _{i}}\right).
\]
Then
\begin{eqnarray}
&&\left. E_{p}\left|
\frac{\hat{t}_{y,\text{diff}}-t_{y}}{N}\right| \leq
E_{p}\left| \sum_{i\in U_{N}}\left( y_{i}-\tilde{m}_{i}\right) \left( \frac{%
I_{i}}{\pi _{i}}-1\right) \right| \right.  \nonumber \\
&&+\left\{ E_{p}\left[ \sum_{i\in U_{N}}\frac{\left( \hat{m}_{i}-\tilde{m}%
_{i}\right) ^{2}}{N}\right] E_{p}\left[ \sum_{i\in
U_{N}}\frac{\left( 1-I_{i}/\pi _{i}\right) ^{2}}{N}\right]
\right\} ^{1/2} \label{EQ:ty_tilde-ty}
\end{eqnarray}
According to the definition of (\ref{DEF:Rtilde}), under
Assumptions (A1)-(A4), one has
\[
\limsup_{N\rightarrow \infty }\frac{1}{N}\sum_{i\in U_{N}}\left( y_{i}-%
\tilde{m}_{i}\right) ^{2}<\infty .
\]
Following the same argument of Theorem 1 in \citep{REF:BO}, the
first term on the right hand side of (\ref{EQ:ty_tilde-ty})
converges to zero as $N\rightarrow \infty $. For the second term,
Assumption (A5) implies that
\[
E_{p}\left[ \sum_{i\in U_{N}}\frac{\left( 1-I_{i}/\pi _{i}\right) ^{2}}{N}%
\right] =\sum_{i\in U_{N}}\frac{\pi _{i}\left( 1-\pi _{i}\right)
}{N\pi _{i}^{2}}\leq \frac{1}{\lambda }.
\]
According to Lemma \ref{LEM:Epmtilde-hat},
\[
\lim_{N\rightarrow \infty }\frac{1}{N}\sum_{i\in U_{N}}E_{p}\left[
\left( \hat{m}_{i}-\tilde{m}_{i}\right) ^{2}\right] \rightarrow
0\text{ with }\xi \text{-probability }1,
\]
and the result follows from the Markov's inequality.

\subsection{Proof of Theorem \ref{THM:normality}}

The next lemma is to derive the asymptotic mean squared error of
the proposed spline estimator in (\ref{DEF:tydiffhat}).
\begin{lemma}
\label{THM:variance}Under Assumptions (A1)-(A5) and (A7)
\begin{equation}
n_{N}E_{p}\left( \frac{\hat{t}_{y,\text{diff}}-t_{y}}{N}\right) ^{2}=%
\frac{n_{N}}{N^{2}}\sum_{i,j\in U_{N}}\left(
y_{i}-\tilde{m}_{i}\right)
\left( y_{j}-\tilde{m}_{i}\right) \left( \frac{\pi _{ij}-\pi _{i}\pi _{j}}{%
\pi _{i}\pi _{j}}\right) +o\left( 1\right) .  \label{DEF:AMSE}
\end{equation}
\end{lemma}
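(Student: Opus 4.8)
The plan is to split the estimation error into the oracle difference-estimator error and a plug-in error, show that the former reproduces the leading term exactly, and that the latter is negligible after multiplication by $n_N/N^2$. First I would recall the exact decomposition used in the proof of Theorem \ref{THM:ADU},
\[
\hat{t}_{y,\text{diff}}-t_y=\underbrace{\sum_{i\in U_N}\left(y_i-\tilde m_i\right)\left(\frac{I_i}{\pi_i}-1\right)}_{=:T_1}+\underbrace{\sum_{i\in U_N}\left(\hat m_i-\tilde m_i\right)\left(1-\frac{I_i}{\pi_i}\right)}_{=:T_2},
\]
and to observe that $T_1=\tilde t_{y,\text{diff}}-t_y$ for the oracle estimator in (\ref{DEF:tydifftilde}). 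Since $E_p(I_i/\pi_i-1)=0$, the term $T_1$ is design-unbiased with $E_p(T_1^2)=\text{Var}_p(\tilde t_{y,\text{diff}})=\sum_{i,j\in U_N}(\pi_{ij}-\pi_i\pi_j)(y_i-\tilde m_i)(y_j-\tilde m_j)/(\pi_i\pi_j)$, so that $\frac{n_N}{N^2}E_p(T_1^2)$ is precisely the asserted leading term; a routine bound using (A5) (diagonal contribution $O(n_N/N)$, off-diagonal $O(1)$ because $|\pi_{ij}-\pi_i\pi_j|=O(n_N^{-1})$ over $O(N^2)$ terms) shows it is $O(1)$. Expanding $(\hat t_{y,\text{diff}}-t_y)^2=T_1^2+2T_1T_2+T_2^2$ and applying Cauchy--Schwarz, $\frac{n_N}{N^2}|E_p(T_1T_2)|\le\{\frac{n_N}{N^2}E_p(T_1^2)\}^{1/2}\{\frac{n_N}{N^2}E_p(T_2^2)\}^{1/2}$, the whole statement reduces to the single claim
\[
\frac{n_N}{N^2}E_p\left(T_2^2\right)=o(1),\quad\text{equivalently}\quad E_p\left(T_2^2\right)=o\!\left(N^2/n_N\right)=o(N),
\]
the last identity because $n_N/N\to\pi\in(0,1)$ by (A4).

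The next observation is that, were $\hat m_i-\tilde m_i$ a design-nonrandom sequence $a_i$, the claim would be immediate: $E_p(T_2^2)=\sum_i\frac{1-\pi_i}{\pi_i}a_i^2+\sum_{i\neq j}\frac{\pi_{ij}-\pi_i\pi_j}{\pi_i\pi_j}a_ia_j=O(\sum_i a_i^2)$ by (A5) together with $(\sum_i|a_i|)^2\le N\sum_i a_i^2$, while Lemma \ref{LEM:Epmtilde-hat} already supplies $E_p(\sum_i a_i^2)=o(N)$. The obstacle, and the main difficulty of the proof, is precisely that $\hat m_i-\tilde m_i$ is design-random and correlated with the weights $1-I_i/\pi_i$, so that the cross moments $E_p[(\hat m_i-\tilde m_i)(\hat m_j-\tilde m_j)(1-I_i/\pi_i)(1-I_j/\pi_j)]$ do not factor. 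A blunt pointwise inequality $T_2^2\le(\sum_i(\hat m_i-\tilde m_i)^2)(\sum_i(1-I_i/\pi_i)^2)$ throws away the design-mean-zero cancellation of the weights and overshoots by a factor of order $n_N$, giving only $E_p(T_2^2)=O(N)\cdot o(N)=o(N^2)$, which is too weak.

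To recover the missing factor I would expand the plug-in error into a design-linear leading part plus a higher-order remainder. Writing $\hat m_i-\tilde m_i=(\hat m_i-\hat{\widetilde m}_i)+(\hat{\widetilde m}_i-\tilde m_i)$ as in the proof of Lemma \ref{LEM:Epmtilde-hat}, with $\hat{\widetilde m}_i$ the population-level fit evaluated at the estimated direction $\hat{\mathbf{\theta}}$, the second piece equals by a mean-value expansion $e_i^T\sum_q\frac{\partial}{\partial\theta_q}\mathbf{P}_{\bar{\mathbf{\theta}}}\mathbf{y}\,(\hat\theta_q-\tilde\theta_q)$, where $\|\hat{\mathbf{\theta}}-\tilde{\mathbf{\theta}}\|_\infty=O_p(J_N/n_N^{1/2})$ by Theorem \ref{THM:theta consistent} and $\|\frac{\partial}{\partial\theta_q}\mathbf{P}_{\mathbf{\theta}}\|_\infty\le CJ_N$ by Lemma \ref{LEM:PthetadotPpnorm}; the first piece is controlled uniformly at order $O_p(n_N^{-1/2})$ by Lemma \ref{LEM:phihat-phitilde}. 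Crucially, both pieces are, to leading order, linear combinations of the centered indicators $\eta_k:=I_k/\pi_k-1$ (the scalar $\hat{\mathbf{\theta}}-\tilde{\mathbf{\theta}}$ itself being, via the score expansion in the proof of Theorem \ref{THM:theta consistent} and Lemma \ref{LEM:Rtilde-Rhat}, such a combination). Hence $T_2$ becomes a degenerate quadratic form $\sum_{i,k}c_{ik}\eta_i\eta_k$ in the $\eta$'s.

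The hard part will be evaluating the design second moment of this quadratic form. I would compute it directly from $E_p[(1-I_i/\pi_i)(1-I_j/\pi_j)]=(\pi_{ij}-\pi_i\pi_j)/(\pi_i\pi_j)$ for the second-order interactions and from the fourth-order product controls of (A5)--(A6) for $E_p[\eta_i\eta_k\eta_{i'}\eta_{k'}]$, feeding in the smallness $\frac1N E_p[\sum_i(\hat m_i-\tilde m_i)^2]\to0$ from Lemma \ref{LEM:Epmtilde-hat} to bound the coefficient mass $\sum_{i,k}c_{ik}^2$. This makes rigorous the heuristic ``treat $\hat m_i-\tilde m_i$ as deterministic'' by isolating the sampling-indicator dependence, and yields $E_p(T_2^2)=o(N)$ with $\xi$-probability $1$, whence $\frac{n_N}{N^2}E_p(T_2^2)=o(1)$ and the cross term vanishes as well. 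Everything outside this fourth-moment accounting is bookkeeping, so I expect the quadratic-form variance computation, and the role of Assumption (A6) in it, to be the sole genuine difficulty.
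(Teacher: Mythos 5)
Your skeleton coincides with the paper's own proof: the same split $\hat{t}_{y,\text{diff}}-t_y=T_1+T_2$ with $T_1=\tilde{t}_{y,\text{diff}}-t_y$, the same identification of $\frac{n_N}{N^2}E_p[T_1^2]$ with the leading term together with its $O(1)$ bound from (A5), and the same Cauchy--Schwarz reduction of the whole lemma to the single claim $\frac{n_N}{N^2}E_p[T_2^2]=o(1)$. The divergence is in how that claim is settled. The paper settles it in one displayed inequality: writing $b_N=n_N^{1/2}T_2/N$, it bounds $E_p[b_N^2]$ by $\left(\frac{1}{\lambda}+\frac{n_N\max_{i\neq j}\left|\pi_{ij}-\pi_i\pi_j\right|}{\lambda^2}\right)\frac{1}{N}E_p\left[\sum_{i\in U_N}\left(\hat{m}_i-\tilde{m}_i\right)^2\right]$ and invokes Lemma \ref{LEM:Epmtilde-hat}. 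That inequality is exactly the ``treat $\hat{m}_i-\tilde{m}_i$ as design-nonrandom'' computation you describe: it uses $E_p\left[\left(1-I_i/\pi_i\right)\left(1-I_j/\pi_j\right)\right]=\left(\pi_{ij}-\pi_i\pi_j\right)/\left(\pi_i\pi_j\right)$ inside a sum whose coefficients are themselves functions of the sample, so the expectation does not factor, and the paper offers no justification. In other words, the obstacle you single out as the main difficulty is real, and the paper steps over it rather than resolving it; note also that the valid pointwise-absolute-value argument the paper uses for $S_3$ in Lemma \ref{THM:Vhat-AMSE} is unavailable here, precisely because $T_2$ lives on the mean-zero cancellation of the weights, as you observe.

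As a standalone proof, however, your proposal is incomplete at exactly this point. The linearization-plus-quadratic-form program (write $\hat{m}_i-\tilde{m}_i$ as a design-linear combination of $\eta_k=I_k/\pi_k-1$ plus remainder, then bound $E_p\left[\left(\sum_{i,k}c_{ik}\eta_i\eta_k\right)^2\right]$ via the fourth-moment conditions) is the right kind of repair --- it is how such plug-in terms are handled rigorously in \citep{REF:BO} --- but you do not carry it out. Specifically: (i) you give no actual bound on the coefficient mass $\sum_{i,k}c_{ik}^2$ or $\sum_{i,k}\left|c_{ik}\right|$, which must come from the structure of the spline projections (Lemma \ref{LEM:ninvBB}, B-spline locality, Lemma \ref{LEM:PthetadotPpnorm}) rather than from Lemma \ref{LEM:Epmtilde-hat}, whose conclusion concerns a different quantity; (ii) the nonlinear remainder of the linearization is only controlled in the paper through uniform $O_p$ rates (Lemma \ref{LEM:phihat-phitilde}, Theorem \ref{THM:theta consistent}), and converting those stochastic-order statements into the $E_p$-moment bounds the lemma requires is itself a nontrivial step you do not address; and (iii) your route invokes (A6), which is not among the stated hypotheses (A1)--(A5), (A7) of this lemma, so at best you would prove the statement under strictly stronger assumptions (harmless for Theorem \ref{THM:normality}, where (A1)--(A8) hold, but a mismatch with the lemma as stated). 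So the verdict is mixed: you have correctly diagnosed a genuine soft spot --- one that the paper's own proof shares --- but the hard computation that would close it is deferred, not done, and hence your proposal does not yet constitute a proof.
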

\begin{proof}
Note that
\[
\frac{\hat{t}_{y,\text{diff}}-t_{y}}{N}=\sum_{i\in U_{N}}\frac{y_{i}-%
\tilde{m}_{i}}{N}\left( \frac{I_{i}}{\pi _{i}}-1\right) +\sum_{i\in U_{N}}%
\frac{\hat{m}_{i}-\tilde{m}_{i}}{N}\left( \frac{I_{i}}{\pi
_{i}}-1\right).
\]
Let
\[
a_{N}=n_{N}^{1/2}\sum_{i\in U_{N}}\frac{y_{i}-\tilde{m}_{i}}{N}\left( \frac{%
I_{i}}{\pi _{i}}-1\right) ,b_{N}=n_{N}^{1/2}\sum_{i\in U_{N}}\frac{\hat{m}%
_{i}-\tilde{m}_{i}}{N}\left( 1-\frac{I_{i}}{\pi _{i}}\right) ,
\]
then
\begin{eqnarray*}
&& E_{p}\left[ a_{N}^{2}\right] =\frac{n_{N}}{N^{2}}\sum_{i,j\in
U_{N}}\left( y_{i}-\tilde{m}_{i}\right) \left(
y_{j}-\tilde{m}_{i}\right)
\left( \frac{\pi _{ij}-\pi _{i}\pi _{j}}{\pi _{i}\pi _{j}}\right) \\
&\leq &\left( \frac{1}{\lambda }+\frac{n_{N}\max_{i,j\in
U_{N},i\neq
j}\left| \pi _{ij}-\pi _{i}\pi _{j}\right| }{\lambda ^{2}}\right) \frac{1}{N}%
\sum_{i\in U_{N}}\left( y_{i}-\tilde{m}_{i}\right) ^{2} < \infty ,
\end{eqnarray*}
\begin{eqnarray*}
E_{p}\left[ b_{N}^{2}\right]
&=&\frac{n_{N}}{N^{2}}E_{p}\sum_{i,j\in
U_{N}}\left( \hat{m}_{i}-\tilde{m}_{i}\right) \left( \hat{m}_{j}-\tilde{m}%
_{j}\right) \left( 1-\frac{I_{i}}{\pi _{i}}\right) \left(
1-\frac{I_{j}}{\pi
_{j}}\right) \\
&\leq &\left( \frac{1}{\lambda }+\frac{n_{N}\max_{i,j\in
U_{N},i\neq
j}\left| \pi _{ij}-\pi _{i}\pi _{j}\right| }{\lambda ^{2}}\right) \frac{1}{N}%
E_{p}\left[ \left\{ \hat{m}_{i}-\tilde{m}_{i}\right\} ^{2}\right]
.
\end{eqnarray*}
By Lemma \ref{LEM:Epmtilde-hat}, one has $E_{p}\left[
b_{N}^{2}\right] =o\left( 1\right) $ and Cauchy-Schwartz
inequality implies $E_{p}\left[ a_{n}b_{n}\right] =0$. Therefore
\[
n_{N}E_{p}\left( \frac{\hat{t}_{y,\text{diff}}-t_{y}}{N}\right)
^{2}=E_{p}\left[ a_{n}^{2}\right] +2E_{p}\left[ a_{n}b_{n}\right]
+E_{p}\left[ b_{n}^{2}\right] =E_{p}\left[ a_{n}^{2}\right]
+o\left( 1\right) .
\]
Thus the desired result holds.
\end{proof}

Denote by
\[
\text{AMSE}\left( N^{-1}\hat{t}_{y,\text{diff}}\right) =\frac{1}{N^{2}}%
\sum_{i,j\in U_{N}}\left( y_{i}-m_{i}\right) \left(
y_{j}-m_{j}\right) \left( \frac{\pi _{ij}-\pi _{i}\pi _{j}}{\pi
_{i}\pi _{j}}\right)
\]
as the asymptotic mean squared error in (\ref{DEF:AMSE}). The next
result
shows that it can be estimated consistently by $\hat{V}\left( N^{-1}\hat{t}%
_{y,\text{diff}}\right)$ in (\ref{DEF:Vhat}).

\begin{lemma}
\label{THM:Vhat-AMSE} Under (A1)-(A7), one has
\[
\lim_{N\rightarrow \infty }n_{N}E_{p}\left| \hat{V}\left( N^{-1}\hat{t}_{y,%
\text{diff}}\right) -\text{AMSE}\left(
N^{-1}\hat{t}_{y,\text{diff}}\right) \right| =0.
\]
\end{lemma}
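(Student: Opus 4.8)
The plan is to prove the result in two stages, mirroring the treatment of the generalized difference estimator in \citep{REF:BO}. Throughout I work with $\xi$-probability $1$ and treat the finite-population residuals $\tilde r_i = y_i-\tilde m_i$ as fixed constants whose averages $N^{-1}\sum_{i\in U_N}\tilde r_i^{2}$ (and the analogous higher averages) are finite, exactly as in the proofs of Theorem \ref{THM:ADU} and Lemma \ref{THM:variance}. Writing $\Delta_{ij}=(\pi_{ij}-\pi_i\pi_j)/(\pi_i\pi_j)$, I introduce the ``oracle-residual'' variance estimator
\[
\hat V^{\ast}=\frac{1}{N^{2}}\sum_{i,j\in U_N}\tilde r_i\,\tilde r_j\,\Delta_{ij}\,\frac{I_iI_j}{\pi_{ij}},
\]
obtained from $\hat V(N^{-1}\hat t_{y,\text{diff}})$ in (\ref{DEF:Vhat}) by replacing each $\hat m_i$ by the population-based prediction $\tilde m_i$. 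I then split $\hat V-\text{AMSE}=(\hat V-\hat V^{\ast})+(\hat V^{\ast}-\text{AMSE})$, where the target $\text{AMSE}(N^{-1}\hat t_{y,\text{diff}})$ is understood as in (\ref{DEF:AMSE}), whose summands carry the oracle predictions $\tilde m_i$, and I bound the $n_N$-scaled $L^{1}(E_p)$ norm of each bracket separately.

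For the first bracket I write $y_i-\hat m_i=\tilde r_i+e_i$ with $e_i=\tilde m_i-\hat m_i$, so the difference of products expands as $\tilde r_i e_j+e_i\tilde r_j+e_ie_j$. Using (A5), the off-diagonal weights satisfy $|\Delta_{ij}|\le Cn_N^{-1}$ for $i\ne j$, while the diagonal weights $\Delta_{ii}=(1-\pi_i)/\pi_i$ and the factors $I_iI_j/\pi_{ij}$ are $O(1)$; splitting the double sum into its diagonal and off-diagonal parts, the extra $n_N^{-1}$ decay off the diagonal cancels the prefactor $n_N$. Two applications of the Cauchy--Schwarz inequality then reduce everything to the averages $N^{-1}\sum_{i\in U_N}\tilde r_i^{2}=O(1)$ and $N^{-1}E_p\sum_{i\in U_N}e_i^{2}\to 0$, the latter being precisely Lemma \ref{LEM:Epmtilde-hat}. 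Hence $n_N E_p|\hat V-\hat V^{\ast}|\to 0$.

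For the second bracket I exploit that $E_p[I_iI_j/\pi_{ij}]=1$ for every pair (including $i=j$, where $I_i^{2}=I_i$ and $\pi_{ii}=\pi_i$), so $E_p[\hat V^{\ast}]=\text{AMSE}$ and $\hat V^{\ast}$ is design-unbiased for the target. Consequently $n_N E_p|\hat V^{\ast}-\text{AMSE}|\le n_N\,\text{Var}_p^{1/2}(\hat V^{\ast})$, and it suffices to show $n_N^{2}\,\text{Var}_p(\hat V^{\ast})\to 0$. Splitting $\hat V^{\ast}=D+O$ into its diagonal and off-diagonal parts and using $\text{Var}_p^{1/2}(\hat V^{\ast})\le \text{Var}_p^{1/2}(D)+\text{Var}_p^{1/2}(O)$ sidesteps any cross-covariance bookkeeping. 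The diagonal term $\text{Var}_p(D)$ is a double sum over $U_N$ that is directly $o(n_N^{-2})$ from the pair bound $|\Delta_{ik}|\le Cn_N^{-1}$ of (A5). The off-diagonal term $\text{Var}_p(O)$ is a quadruple sum carrying the covariances $\text{Cov}_p(I_iI_j/\pi_{ij},\,I_kI_l/\pi_{kl})$, which I classify by the number of distinct entries of $(i,j,k,l)$: for four distinct indices the covariance equals $(\pi_{ijkl}-\pi_{ij}\pi_{kl})/(\pi_{ij}\pi_{kl})=E_p[(I_iI_j-\pi_{ij})(I_kI_l-\pi_{kl})]/(\pi_{ij}\pi_{kl})$, which is $O(n_N^{-2})$ by (A6); for three or two distinct indices the covariance is merely $O(1)$, but then the number of surviving index tuples drops by a factor $N$ (respectively $N^{2}$), so that together with the $O(n_N^{-2})$ decay of $\Delta_{ij}\Delta_{kl}$ from (A5) every coincidence class contributes $o(n_N^{-2})$ to $\text{Var}_p(\hat V^{\ast})$.

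I expect the fourth-order design-moment computation in this last step to be the main obstacle: one must organize the quadruple sum, bound the covariance in each coincidence pattern by the appropriate moment condition of (A5)--(A6), and verify that the interplay of the number of terms of each type, the $O(n_N^{-1})$ decay of the off-diagonal $\Delta_{ij}$, and the $O(n_N^{-2})$ fourth-order bounds in (A6) drives the full variance to order $o(n_N^{-2})$. This is exactly the accounting carried out for the variance estimator of the local-polynomial difference estimator in \citep{REF:BO}, and it applies verbatim here once $\hat m_i$ has been replaced by $\tilde m_i$ in the first stage.
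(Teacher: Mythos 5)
Your proposal is correct and follows essentially the same route as the paper: your decomposition $\hat{V}-\text{AMSE}=(\hat{V}-\hat{V}^{\ast})+(\hat{V}^{\ast}-\text{AMSE})$ is exactly the paper's split into $S_{2}+S_{3}$ (handled via Cauchy--Schwarz and Lemma \ref{LEM:Epmtilde-hat}) plus $S_{1}-\text{AMSE}$ (handled via design-unbiasedness, the $L^{1}\leq L^{2}$ bound, and the coincidence-pattern accounting of the quadruple sum under (A5)--(A6)). The only difference is cosmetic bookkeeping: you split $\hat{V}^{\ast}$ into diagonal and off-diagonal parts before taking the variance, whereas the paper expands the full quadruple sum and sorts it into the terms $s_{1N},s_{2N},s_{3N}$.
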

\begin{proof}
Denote
\begin{eqnarray*}
S_{1} &=&\frac{1}{N^{2}}\sum_{i,j\in U_{N}}\left(
y_{i}-\tilde{m}_{i}\right)
\left( y_{j}-\tilde{m}_{j}\right) \left( \frac{\pi _{ij}-\pi _{i}\pi _{j}}{%
\pi _{i}\pi _{j}}\right) \frac{I_{i}I_{j}}{\pi _{ij}}, \\
S_{2} &=&\frac{2}{N^{2}}\sum_{i,j\in U_{N}}\left(
y_{i}-\tilde{m}_{i}\right) \left( \tilde{m}_{j}-\hat{m}_{j}\right)
\left( \frac{\pi _{ij}-\pi _{i}\pi
_{j}}{\pi _{i}\pi _{j}}\right) \frac{I_{i}I_{j}}{\pi _{ij}}, \\
S_{3} &=&\frac{1}{N^{2}}\sum_{i,j\in U_{N}}\left( \tilde{m}_{i}-\hat{m}%
_{i}\right) \left( \tilde{m}_{j}-\hat{m}_{j}\right) \left(
\frac{\pi
_{ij}-\pi _{i}\pi _{j}}{\pi _{i}\pi _{j}}\right) \frac{I_{i}I_{j}}{\pi _{ij}}%
,
\end{eqnarray*}
then
\begin{equation}
\hat{V}\left( N^{-1}\hat{t}_{y,\text{diff}}\right) -\text{AMSE}\left( N^{-1}%
\hat{t}_{y,\text{diff}}\right) =S_{1}-\text{AMSE}\left( N^{-1}\hat{t}_{y,%
\text{diff}}\right) +S_{2}+S_{3}.  \label{EQ:Vhat-AMSE}
\end{equation}
For the first term $S_{1}$, one has
\begin{eqnarray*}
&&n_{N}E_{p}\left| S_{1}-\text{AMSE}\left( N^{-1}\hat{t}_{y,\text{diff}%
}\right) \right| \\
&=&\frac{n_{N}}{N^{2}}E_{p}\left| \sum_{i,j\in U_{N}}\left( y_{i}-\tilde{m}%
_{i}\right) \left( y_{j}-\tilde{m}_{j}\right) \left( \frac{\pi
_{ij}-\pi
_{i}\pi _{j}}{\pi _{i}\pi _{j}}\right) \frac{I_{i}I_{j}-\pi _{ij}}{\pi _{ij}}%
\right| \\
&\leq& \frac{n_{N}}{N^{2}}\left\{ E_{p}\left[ \sum_{i,j\in
U_{N}}\left(
y_{i}-\tilde{m}_{i}\right) \left( y_{j}-\tilde{m}_{j}\right) \left( \frac{%
\pi _{ij}-\pi _{i}\pi _{j}}{\pi _{i}\pi _{j}}\right)
\frac{I_{i}I_{j}-\pi _{ij}}{\pi _{ij}}\right] ^{2}\right\} ^{1/2},
\end{eqnarray*}
and
\begin{eqnarray*}
&&\frac{n_{N}^{2}}{N^{4}}E_{p}\left[ \sum_{i,j\in U_{N}}\left( y_{i}-\tilde{m%
}_{i}\right) \left( y_{j}-\tilde{m}_{j}\right) \left( \frac{\pi
_{ij}-\pi
_{i}\pi _{j}}{\pi _{i}\pi _{j}}\right) \frac{I_{i}I_{j}-\pi _{ij}}{\pi _{ij}}%
\right] ^{2} \\
&=&\frac{n_{N}^{2}}{N^{4}}\sum_{i,j,k,l\in U_{N}}\left( y_{i}-\tilde{m}%
_{i}\right) \left( y_{j}-\tilde{m}_{j}\right) \left( y_{k}-\tilde{m}%
_{k}\right) \left( y_{l}-\tilde{m}_{l}\right) \\
&&\times \left( \frac{\pi _{ij}-\pi _{i}\pi _{j}}{\pi _{i}\pi
_{j}}\right) \left( \frac{\pi _{kl}-\pi _{k}\pi _{l}}{\pi _{k}\pi
_{l}}\right)
E_{p}\left( \frac{I_{i}I_{j}-\pi _{ij}}{\pi _{ij}}\frac{I_{k}I_{l}-\pi _{kl}%
}{\pi _{kl}}\right) ,
\end{eqnarray*}
which can be represented as
\begin{eqnarray*}
&&\frac{n_{N}^{2}}{N^{4}}\sum_{i,k\in U_{N}}\left( y_{i}-\tilde{m}%
_{i}\right) ^{2}\left( y_{k}-\tilde{m}_{k}\right) ^{2}\left( \frac{1-\pi _{i}%
}{\pi _{i}}\right) \left( \frac{1-\pi _{k}}{\pi _{k}}\right) \left( \frac{%
\pi _{ik}-\pi _{i}\pi _{k}}{\pi _{ik}}\right) \\
&&+\frac{2n_{N}^{2}}{N^{4}}\sum_{i\in U_{N}}\sum_{k\neq l}\left( y_{i}-%
\tilde{m}_{i}\right) ^{2}\left( y_{k}-\tilde{m}_{k}\right) \left( y_{l}-%
\tilde{m}_{l}\right) \left( \frac{1-\pi _{i}}{\pi _{i}}\right) \left( \frac{%
\pi _{kl}-\pi _{k}\pi _{l}}{\pi _{kl}}\right) \\
&&\times E_{p}\left( \frac{I_{i}-\pi _{i}}{\pi
_{i}}\frac{I_{k}I_{l}-\pi
_{kl}}{\pi _{kl}}\right) \\
&&+\frac{n_{N}^{2}}{N^{4}}\sum_{i\neq j,k\neq l}\left( y_{i}-\tilde{m}%
_{i}\right) \left( y_{j}-\tilde{m}_{j}\right) \left( y_{k}-\tilde{m}%
_{k}\right) \left( y_{l}-\tilde{m}_{l}\right) \\
&&\times \left( \frac{\pi _{ij}-\pi _{i}\pi _{j}}{\pi _{i}\pi
_{j}}\right) \left( \frac{\pi _{kl}-\pi _{k}\pi _{l}}{\pi _{k}\pi
_{l}}\right)
E_{p}\left( \frac{I_{i}I_{j}-\pi _{ij}}{\pi _{ij}}\frac{I_{k}I_{l}-\pi _{kl}%
}{\pi _{kl}}\right) \\
&\equiv &s_{1N}+s_{2N}+s_{3N}.
\end{eqnarray*}
Now
\begin{eqnarray*}
s_{1N} &\leq &n_{N}^{2}\sum_{i\in U_{N}}\frac{\left( y_{i}-\tilde{m}%
_{i}\right) ^{4}}{\lambda ^{3}N^{4}}+n_{N}^{2}\sum_{i,k\in U_{N}}\frac{%
\left( y_{i}-\tilde{m}_{i}\right) ^{2}\left(
y_{k}-\tilde{m}_{k}\right)
^{2}\left| \pi _{ik}-\pi _{i}\pi _{k}\right| }{\lambda ^{4}N^{4}} \\
&\leq &\left( \frac{n_{N}^{2}}{\lambda ^{3}N^{3}}+n_{N}^{2}\frac{%
\max_{i,k\in U_{N},i\neq k}\left| \pi _{ik}-\pi _{i}\pi _{k}\right| }{%
\lambda ^{4}N^{2}}\right) \sum_{i\in U_{N}}\frac{\left( y_{i}-\tilde{m}%
_{i}\right) ^{4}}{N},
\end{eqnarray*}
and $\lim \sup_{N\rightarrow \infty }%
\frac{1}{N}\sum_{i\in U_{N}}\left( y_{i}-\tilde{m}_{i}\right)
^{4}<\infty $. Thus $s_{1N}$ goes to zero as $N\rightarrow \infty
$. Next
\[
s_{3N}\leq \frac{\left( n_{N}\max_{i,k\in U_{N},i\neq k}\left| \pi
_{ik}-\pi _{i}\pi _{k}\right| \right) ^{2}}{\lambda ^{4}\lambda
^{*2}}
\]
\[
\times \frac{1}{N^{4}}\sum_{i\neq j,k\neq l}\left| \left( y_{i}-\tilde{m}%
_{i}\right) \left( y_{j}-\tilde{m}_{j}\right) \left( y_{k}-\tilde{m}%
_{k}\right) \left( y_{l}-\tilde{m}_{l}\right) \right| E_{p}\left| \frac{%
I_{i}I_{j}-\pi _{ij}}{\pi _{ij}}\frac{I_{k}I_{l}-\pi _{kl}}{\pi _{kl}}%
\right|
\]
\begin{eqnarray*}
&\leq &O\left( N^{-1}\right) +\frac{\left( n_{N}\max_{i,k\in
U_{N},i\neq k}\left| \pi _{ik}-\pi _{i}\pi _{k}\right| \right)
^{2}}{\lambda ^{4}\lambda
^{*2}} \\
&&\times \max_{\left( i,j,k,l\right) \in D_{4,N}}\left|
E_{p}\left[ \left( \frac{I_{i}I_{j}-\pi _{ij}}{\pi _{ij}}\right)
\left( \frac{I_{k}I_{l}-\pi
_{kl}}{\pi _{kl}}\right) \right] \right| \sum_{i\in U_{N}}\frac{\left( y_{i}-%
\tilde{m}_{i}\right) ^{4}}{N},
\end{eqnarray*}
which converges to zero as $N\rightarrow \infty $ by Assumption
(A6). As a
result of Cauchy-Schwartz inequality, one can show $s_{2N}$ goes to zero as $%
N\rightarrow \infty $. Therefore
\begin{equation}
n_{N}E_{p}\left| S_{1}-\text{AMSE}\left( N^{-1}\hat{t}_{y,\text{diff}%
}\right) \right| \rightarrow 0,\text{as }N\rightarrow \infty .
\label{EQ:S1}
\end{equation}
Next for $S_{2}$, by Lemma \ref{LEM:Epmtilde-hat}
\begin{eqnarray}
&&n_{N}E_{p}\left| \frac{2}{N^{2}}\sum_{i,j\in U_{N}}\left( y_{i}-\tilde{m}%
_{i}\right) \left( \tilde{m}_{j}-\hat{m}_{j}\right) \left(
\frac{\pi
_{ij}-\pi _{i}\pi _{j}}{\pi _{i}\pi _{j}}\right) \frac{I_{i}I_{j}}{\pi _{ij}}%
\right|  \nonumber \\
&\leq &\left( \frac{2n_{N}\max_{i,k\in U_{N},i\neq k}\left| \pi
_{ik}-\pi _{i}\pi _{k}\right| ^{2}}{\lambda ^{4}\lambda
^{*2}}+\frac{2n_{N}}{\lambda
^{2}N}\right) \sum_{i\in U_{N}}\frac{\left( y_{i}-\tilde{m}_{i}\right) ^{2}}{%
N}  \nonumber \\
&\times &\sum_{i\in U_{N}}\frac{E_{p}\left(
\tilde{m}_{i}-\hat{m}_{i}\right) ^{2}}{N}\rightarrow 0.
\label{EQ:S2}
\end{eqnarray}
For $S_{3}$, applying Lemma \ref{LEM:Epmtilde-hat} again, one has
\begin{eqnarray}
&& E_{p}\left| n_{N}S_{3}\right|=\frac{n_{N}}{N^{2}}E_{p}\left|
\sum_{i,j\in
U_{N}}\left( \tilde{m}_{i}-\hat{m}_{i}\right) \left( \tilde{m}_{j}-\hat{m}%
_{j}\right) \left( \frac{\pi _{ij}-\pi _{i}\pi _{j}}{\pi _{i}\pi _{j}}%
\right) \frac{I_{i}I_{j}}{\pi _{ij}}\right|  \nonumber \\
&\leq &\left( \frac{n_{N}\max_{i,j\in U_{N},i\neq j}\left| \pi
_{ij}-\pi
_{i}\pi _{j}\right| }{\lambda ^{2}\lambda ^{*}}+\frac{n_{N}}{\lambda ^{2}N}%
\right) \frac{1}{N}E_{p}\left[ \sum_{i\in U_{N}}\left( \tilde{m}_{i}-\hat{m}%
_{i}\right) ^{2}\right] \rightarrow 0  \label{EQ:S3}
\end{eqnarray}
The desired result follows from
(\ref{EQ:Vhat-AMSE})-(\ref{EQ:S3}).
\end{proof}

\textbf{Proof of Theorem \ref{THM:normality}:} By the proof of
Lemma \ref {THM:variance},
\begin{eqnarray*}
N^{-1}\left( \hat{t}_{y,\text{diff}}-t_{y}\right) &=&N^{-1}\left(
\sum_{i\in s}\frac{y_{i}-\hat{m}_{i}}{\pi _{i}}+\sum_{i\in
U_{N}}\hat{m}_{i}-\sum_{i\in
U_{N}}\frac{y_{i}I_{i}}{\pi _{i}}\right) \\
&=&N^{-1}\left( \tilde{t}_{y,\text{diff}}-t_{y}\right)
+o_{p}\left( n_{N}^{-1/2}\right) ,
\end{eqnarray*}
so the desired result follows from Lemma \ref{THM:Vhat-AMSE}.
\hfill $\square$


\begin{thebibliography}{99}
\bibitem{REF:BO}  F.J. Breidt and J.D. Opsomer, \emph{Local polynomial
regression estimators in survey sampling}, Ann. Statist. 28
(2000), 1026--1053.

\bibitem{REF:BCO}  F.J. Breidt, G. Claeskens and J.D. Opsomer, \emph{Model-assisted estimation for complex surveys using
penalised splines}, Biometrika. 92 (2005), 831--846.

\bibitem{REF:BOJR}  F.J. Breidt, J.D. Opsomer, A.A. Johnson and
 M.G., Ranalli, \emph{Semiparametric model-assisted estimation
for natural resource surveys}, Survey Methodology 33 (2007),
35--44.

\bibitem{REF:CFGW}  R. Carroll, J. Fan, I. Gijbels and M.P. Wand, \emph{Generalized partially linear single-index models}.
J. Amer. Statist. Assoc. 92 (1997), 477--489.

\bibitem{REF:C}  R.L. Chambers, \emph{Robust case-weighting for
multipurpose establishment surveys}. J. Official Statist. 12
(1996), 3--32.

\bibitem{REF:CDH} R.L. Chambers, A.H. Dorfman, and P. HALL, \emph{Properties of
estimators of the finite distribution function}, Biometrika 79
(1992), 577--82.

\bibitem{REF:CDW}  R.L. Chambers, A.H. Dorfman, and T.E.
Wehrly, \emph{Bias robust estimation in finite populations using
nonparametric calibration}, J. Amer. Statist. Assoc. 88 (1993),
268--277.

\bibitem{REF:CS}  R.L. Chambers and C.J. Skinner, \emph{Analysis of Survey
Data}, Wiley, Chichester, 2003.

\bibitem{REF:CKO}  G. Claeskens, T. Krivobokova and J. Opsomer,
\emph{Asymptotic properties of penalized spline regression
estimators}. To appear in Biometrika.

\bibitem{REF:de}  C. de Boor, \emph{A Practical Guide to Splines}.  Springer-Verlag, New
York, 2001.

\bibitem{REF:DL}  R.A. DeVore and G.G. Lorentz, \emph{Constructive
Approximation: Polynomials and Splines Approximation}.
Springer-Verlag, Berlin, 1993.

\bibitem{REF:D}  A.H. Dorfman, \emph{Nonparametric regression for
estimating totals in finite populations}. Proceedings of the
Section on Survey Research Methods. Amer. Statist. Assoc.,
Alexandria, VA., 1992, pp. 622--625

\bibitem{REF:DH}  A.H. Dorfman and P. Hall, \emph{Estimators of the
finite population distribution function using nonparametric
regression}, Ann. Statist. 21 (1993), 1452--1475.

\bibitem{REF:ES} S. Everson-Stewart, \emph{Nonparametric survey regression
estimation in two-stage spatial sampling}, unpublished masters
project, Colorado State University, available at
http://www.stat.colostate.edu/starmap/everson-stewart.report.pdf.

\bibitem{REF:FG}  J. Fan and I. Gijbels, \emph{Local Polynomial
Modelling and Its Applications}. Chapman and Hall, London, 1996.

\bibitem{REF:Gay}  D.M. Gay, \emph{Computing Science Technical Report No.
153: Usage Summary for Selected Optimization Routines},
http://netlib.bell-labs.com/cm/cs/cstr/153.pdf.

\bibitem{REF:HP}  P. Hall, \emph{On projection pursuit regression}, Ann. Statist. 17 (1989), 573--588.

\bibitem{REF:HW}  W. H\"{a}rdle, \emph{Applied Nonparametric
Regression}. Cambridge University Press, Cambridge, 1990.

\bibitem{REF:HHI}  W. H\"{a}rdle and P. Hall and H. Ichimura,
\emph{Optimal smoothing in single-index models}, Ann. Statist. 21
(1993), 157--178.

\bibitem{REF:HT}  T.J. Hastie and R.J. Tibshirani, \emph{
Generalized Additive Models}. Chapman and Hall, London, 1990.

\bibitem{REF:HH}  J.L. Horowitz and W. H\"{a}rdle,
\emph{Direct semiparametric estimation of single-index models with
discrete covariates}, J. Amer. Statist. Assoc. 91 (1996),
1632--1640.

\bibitem{REF:H}  J.Z. Huang, \emph{Local asymptotics for polynomial spline
regression}, Ann. Statist. 31 (2003), 1600--1635.

\bibitem{REF:IF} C. Isaki and W.A. Fuller, \emph{Survey design under the
regression superpopulation model}. J. Am. Statist. Assoc. 77
(1982), 89--96.

\bibitem{REF:LR} Y. Li and D. Ruppert, \emph{On the asymptotics of
penalized splines}, Biometrika, 95 (2008), 415--436.

\bibitem{REF:LN} O.B. Linton, and J.P. Nielsen, \emph{A Kernel Method of
Estimating Structured Nonparametric Regression Based on Marginal
Integration}, Biometrika, 82 (1995), 93--101.

\bibitem{REF:MR}  G.E. Montanari and M.G. Ranalli, \emph{Nonparametric
model calibration estimation in survey sampling}, J. Amer.
Statist. Assoc. 100 (2005), 1429--1442.

\bibitem{REF:OBMK}  J.D. Opsomer, F.J. Breidt, G.G. Moisen
and G. Kauermann, \emph{Model-assisted estimation of forest
resources with generalized additive models (with discussion)}, J.
Amer. Statist. Assoc. 102 (2007), 400--416.

\bibitem{REF:SSW}  C.E. S\"{a}rndal, B. Swensson and J. Wretman,
\emph{Model Assisted Survey Sampling}. Springer-Verlag, New York,
1992.

\bibitem{REF:SL}  C.E. S\"{a}rndal and S. Lundstr\"{o}m,
\emph{Estimation in Surveys with Nonresponse}, Wiley, New York,
2005.

\bibitem{REF:STY} S. Sperlich, D. Tj\o stheim and L. Yang,
\emph{Nonparametric estimation and testing of interaction in
additive models}, Econometric Theory 18 (2002), 197--251.

\bibitem{REF:Stone} C.J. Stone, \emph{The dimensionality reduction principle for
generalized additive models}, Ann. Statist. 14 (1986), 590--606.

\bibitem{REF:T}  M.E. Thompson, \emph{Theory of Sample Surveys}, Chapman and Hall,
London, 1997.

\bibitem{REF:WD}  S. Wang and A. H. Dorfman, \emph{A new estimator for the finite population distribution
function}, Biometrika 83 (1997), 639--652.

\bibitem{REF:WY}  L. Wang, and L. Yang, \emph{Spline-backfitted kernel
smoothing of nonlinear additive autoregression model}, Ann.
Statist. 35 (2007), 2474--2503.

\bibitem{REF:XLTZ} Y. Xia, W.K., Li, H. Tong and D. Zhang (2004). \emph{A
goodness-of-fit test for single-index models}, Statist. Sinica. 14
(2004), 1--39.

\bibitem{REF:XTLZ}  Y. Xia, H. Tong, W.K. Li and L. Zhu, \emph{ An
adaptive estimation of dimension reduction space}, J. R. Stat.
Soc. Ser. B Stat. Methodol. 64 (2002), 363--410.

\bibitem{REF:YR}  Y. Yu and  D. Ruppert, \emph{Penalized spline estimation for
partially linear single index models}, J. Amer. Statist. Assoc. 97
(2002), 1042--1054.

\bibitem{REF:ZL03}  H. Zheng and R.J.A. Little, \emph{Penalized spline
model-based estimation of finite population total from
probability-proportional-to-size samples}, J. Official Statist. 19
(2003), 99--117.

\bibitem{REF:ZL05}  H. Zheng and R.J.A. Little, \emph{Penalized spline
nonparametric mixed models for inference about a finite population
mean from two-stage samples}, Survey Methodology 30 (2004),
209--218.

\end{thebibliography}
\end{document}